\tikzset{
    varnode/.style={rectangle,outer sep=0mm},
    varnodenoperi/.style={rectangle,outer sep=-1mm},
    ourarrow/.style={>=stealth}, 
ourarc/.style={>=stealth,thick,arc}}
\newcommand{\tododavid}[1]
{\todo[backgroundcolor=pink]{#1}}
\newcommand{\commentdavid}[1]
{\todo[backgroundcolor=cyan]{David: #1}}
\newcommand{\todomohammad}[1]
{\todo[backgroundcolor=orange]{Mohammad: #1}}
\newcommand{\donedavid}[1]{\todo[backgroundcolor=green]{#1}}
\newcommand{\donemohammad}[1]{\todo[backgroundcolor=white]{#1}}
\tikzstyle{inlinenotestyle} = [notestyle,text width=\linewidth,inner sep=1em,outer sep=2pt,align=justify]
\theoremstyle{plain}
\newtheorem{lemma}{Lemma}
\newtheorem{theorem}{Theorem}
\theoremstyle{definition}
\newtheorem{definition}{Definition}
\newtheorem{example}{Example}
\setlist[enumerate]{label=(\roman*)}
\lstdefinestyle{inline}{%
basicstyle=\ttfamily\small%
}
\lstdefinelanguage{bnf}{%
    keywords={:=,|},
    keywordstyle=\bfseries,
    basicstyle=\ttfamily\footnotesize,
    moredelim=[is][\bfseries]{\$}{\$}
}
\lstdefinelanguage{pddl}{
sensitive=false,    
morecomment=[l]{;}, 
alsoletter={:,-},   
morekeywords={
    define,domain,problem,not,and,or,when,forall,exists,either,
    :domain,:requirements,:types,:objects,:constants,
    :predicates,:action,:parameters,:precondition,:effect,
    :fluents,:primary-effect,:side-effect,:init,:goal,
    :strips,:adl,:equality,:typing,:conditional-effects,
    :negative-preconditions,:disjunctive-preconditions,
    :existential-preconditions,:universal-preconditions,:quantified-preconditions,
    :functions,assign,increase,decrease,scale-up,scale-down,
    :metric,minimize,maximize,
    :durative-actions,:duration-inequalities,:continuous-effects,
    :durative-action,:duration,:condition
},
numberstyle=\ttfamily\footnotesize,
basicstyle=\ttfamily\footnotesize
}
\lstdefinestyle{inline}{%
    basicstyle=\ttfamily%
}
\lstdefinelanguage{isabelle}{
    morekeywords={record,type_synonym,definition,partial,fun,primrec,function,where,lemma,theorem,unfolding,by,shows,assumes,and,datatype,using,abbreviation
,moreover,have,hence,thus,qed,proof,ultimately,show,next,locale,fixes,tailrec,domintros,defines,
for,context,inductive,begin,end,locale,fixes,record,type_synonym,definition,fun,function,primrec,where,lemma,theorem,unfolding,by,
shows,assumes,and,datatype,using,abbreviation,moreover,have,hence,thus,qed,proof,ultimately,show,next, thm,corollary,\ in\ ,if,case,\ of,let,else,then,
global_interpretation,interpretation}
, sensitive=true
    , showstringspaces=false
    , framerule=0pt
    , xleftmargin=2em
    , numbers=left
    , numberstyle=\ttfamily\small
    , firstnumber=1
    , stepnumber=2
    , basicstyle=\ttfamily\small
    , backgroundcolor = \color{white}
    , keywordstyle = \textbf
    , breaklines=true
    , showspaces=false
    , morecomment=[s]{(*}{*)}
    , commentstyle=\color{gray}
    , morestring=[b]"    
    , literate={\\<times>}{{ $\times$ }}{1} {\\<equiv>}{{ $\equiv$ }}{1} {≡}{{ $\equiv$ }}{1} {\\<forall>}{{$\forall$}}{1} 
    {)))}{{\upshape )))}}{1} {))}{{\upshape ))}}{1} {\ )}{{\upshape )}}{1} {)}{{\upshape )}}{1} {\}}{{\upshape $\rbrace$}}{1}
    {\{}{{\upshape $\lbrace$}}{1}
    {\\<^sup>+}{{$^+$}}{1}  {\\<^sup>-}{{$^-$}}{1} {\\<Delta>}{{$\Delta$}}{1}  {\\<^sub>F}{{$_F$}}{1} {\\<^sub>A}{{$_A$}}{1} {\\<^sub>u}{{$_u$}}{1} {\\<^sub>f}{{$_f$}}{1}
    {\\<^bsub>f\\<^esub>}{{$_\mathtt{f}$}}{1}  {\\<Sum>}{{$\sum$}}{1} {\\<uu>}{{$\mathfrak{u}$}}{1}
      {∀}{{$\forall$}}{1} {\\<exists>}{{$\exists$}}{1} {\\<and>}{{ $\land$}}{1} {∧}{{ $\land$}}{1}
        {\\<in>}{{$\in$}}{1} {\\<Rightarrow>}{{$\Rightarrow$}}{1} {\\<lambda>}{{$\lambda$}}{1} {::}{{$::$}}{1}
        {\\<subset>}{{$\subset$}}{1}  {\\<supset>}{{$\supset$}}{1} {\\<subseteq>}{{ $\subseteq$ }}{1} {\\<^sub>m}{{$_m$}}{1} {\\<^sub>C}{{$_C$}}{1} {\\<^sub>B}{{$_B$}}{1} {\\<longleftrightarrow>}{{ $\longleftrightarrow$ }}{3} {⟷}{{ $\longleftrightarrow$ }}{3}
        {\\<pi>}{{ $\pi$ }}{1} {\\<delta>}{{$\delta$}}{1} {\\<lbrakk>}{{$\llbracket$}}{1} {\\<rbrakk>}{{$\rrbracket$}}{1}
        {\\<Longrightarrow>}{{ $\Longrightarrow$ }}{3} {⟶}{{ $\longrightarrow$ }}{3} 
        {\\<not>}{{$\lnot$}}{1} {\\<le>}{ \upshape {$\le$}}{1} {\\<rightharpoonup>}{{$\rightharpoonup$}}{2}
        {\\<^sub>\\<V>}{{$_{\mathcal V}$}}{1} {\\<lparr>}{{$\llparenthesis$}}{1} {\\<rparr>}{{$\rrparenthesis$}}{1}
        {\\<leftarrow>}{{$\leftarrow$}}{1} {\\<^sub>\\<O>}{{$_{\mathcal O}$}}{1} {\\<^sub>I}{{$_{\texttt{I}}$}}{1}
        {\\<^sub>G}{{$_{\texttt{G}}$}}{2} {\\<phi>}{{$\varphi$}}{1} {\\<Phi>}{{$\Phi$}}{1} {\\<psi>}{{$\psi$}}{1} {\\<Psi>}{{$\Psi$}}{1}
        {\\<^sub>S}{{$_{\texttt S}$}}{1} {\\<inverse>}{{$^{-1}$}}{1} {\\<^sub>O}{{$_O$}}{1} {\\<^bold>\\<And>}{{$\bm\bigwedge$}}{1}
        {\\<^bold>\\<or>}{{$\bm\lor$}}{1} {\\<^sub>G}{{$_{\texttt G}$}}{1} {\\<Pi>}{{$\Pi$}}{1} {\\<^sub>I}{{$_{\texttt I}$}}{1} {\\<noteq>}{{ $\neq$ }}{1} {≠}{{ $\neq$ }}{1}
        {\\<bottom>}{{$\bot$}}{1} {\\<^sub>+}{{$_\texttt +$}}{1} {\\<^bold>\\<and>}{{$\bm\land$}}{1} {\\<^bold>\\<not>}{{$\bm\lnot$}}{1}
        {\\<^sub>1}{{$_1$}}{1} {\\<^sub>2}{{$_2$}}{1} {\\<bar>}{{ $|$ }}{1} {\\<A>}{{$\mathcal A$}}{1} {\\<Turnstile>}{{$\models$}}{2} {\\<^sub>\\<forall>}{{$_\forall$}}{1}
        {\\<^sub>0}{{$_0$}}{1} {\\<tau>}{{$\tau$}}{1}  {\\<^sub>\\<Omega>}{{$_\Omega$}}{1} {\\<^sub>V}{{$_V$}}{1} {\\<^bold>\\<Or>}{{$\bm\bigvee$}}{1}
        {\\<^sub>P}{{$_\texttt P$}}{1} {\\<^sub>X}{{$_\texttt X$}}{1} {\\<longrightarrow>}{{$\longrightarrow$}}{2} {\\<or>}{{$\lor$}}{1} {∨}{{$\lor$}}{1} {\\<^sub>\\<pi>}{{$_\pi$}}{1}
        {\\<^sub>s}{{$_s$}}{1} {\\<^sub>t}{{$_t$}}{1} {\\<^sub>a}{{$_a$}}{1} {\\<^sub>r}{{$_r$}}{1} {\\<^sub>t}{{$_t$}}{1} {\\<^sub>e}{{$_e$}}{1} {\\<^sub>n}{{$_n$}}{1} {\\<^sub>d}{{$_d$}}{1} {\\<^sub>i}{{$_i$}}{1} {\\<^sub>v}{{$_v$}}{1} {\\<^sub>j}{{$_j$}}{1} {\\<^sub>b}{{$_b$}}{1} {\\<inter>}{{$\cap$}}{1} {\\<union>}{{ $\cup$ }}{1} {\\<Union>}{{ $\bigcup$ }}{1} {\\<^sup>c\\<TTurnstile>\\<^sub>=}{{${}^c\models_=$}}{1}
        {\\<open>}{{<}}{1} {\\<close>}{{>}}{1} {\\<langle>}{{$\langle$}}{1} {\\<rangle>}{{$\rangle$}}{1} {\\<ge>}{{\upshape $\ge$ }}{1} {\\<^sup>-\\<^sup>1\\<^sub>C}{{$^{\texttt{-1}}_\texttt{C}$}}{2} {\\<^sup>+\\<^sub>C}{{$^\texttt{+}_\texttt{C}$}}{1} {\\<circ>\\<^sub>C}{{$\circ^\texttt C$}}{1} {\\<top>\\<^sub>C}{{$\top_\texttt{C}$}}{2} {\\<bottom>\\<^sub>C}{{$\bot_\texttt{C}$}}{2} {\\<not>\\<^sub>C}{{$\neg_\texttt{C}$}}{2} {\\<circ>}{{$\circ$}}{1}
        {\\<squnion>\\<^sub>C}{{$\sqcup_\texttt{C}$}}{2} {\\<sqinter>\\<^sub>C}{{$\sqcap_\texttt{C}$}}{2} {\\<exists>\\<^sub>C}{{$\exists_\texttt{C}$}}{2}  {\\<forall>\\<^sub>C}{{$\forall_\texttt{C}$}}{2} {=\\<^sub>C}{{$=_\texttt{C}$}}{2} {\\<sigma>}{{$\sigma$}}{2} {\\<notin>}{{$\notin$}}{1} {\\<oplus>}{{$\oplus$}}{1} {\\<nexists>}{{$\nexists$}}{1} {\\<setminus>}{{$\setminus$}}{1} {_}{{$\text{-}$}}{1} {"}{{}}{1}
    {\ (}{{\upshape \ (}}{1}    {(}{{\upshape (}}{1} {=}{{\upshape =}}{1} {:=}{{\upshape := }}{1}  {\\if}{{if}}{1} {\\else}{{else}}{1} {\\then}{{then}}{1} {\\case}{{case}}{1} {\\of}{{of}}{1}   
        {\\<V>}{{$\mathcal{V}$}}{1} {\\<F>}{{$\mathcal{F}$}}{1} {\\<X>}{{$\mathcal{X}$}}{1} {\\<FF>}{{$\mathfrak{F}$}}{1} {\\<E>}{{$\mathcal{E}$}}{1} {\\<C>}{{$\mathcal{C}$}}{1} {\\<epsilon>}{{\upshape $\epsilon$ }}{1} 
        {\\<gamma>}{{$\gamma$\ }}{1} {'}{{'}}{1} {\\let}{{let}}{2} {\\in}{{in}}{2}
        {\\<b>}{{\upshape b}}{1} {\\<u>}{{\upshape u}}{1} {\\<c>}{{\upshape c}}{1} {\\<And>}{{ $\bigwedge$ }}{1}
        {\\<k>}{{k}}{1} {\\<lceil>}{{$\lceil$}}{1} {\\<rceil>}{{$\rceil$}}{1} {\\<l>}{{$\ell$}}{1}
        {+}{{\upshape +}}{1} {|}{{\upshape |}}{1} {\\<infinity>}{{$\infty$}}{1} {\\<cc>}{{$\mathfrak{c}$}}{1} {\\<EE>}{{$\mathfrak{E}$}}{1} {\\<CC>}{{$\mathfrak{C}$}}{1}
        {\\<space>}{{\ }}{1}
}
\lstdefinestyle{isainline}{
  language=isabelle,
  basicstyle=%
    \ttfamily\small
}
\title{Formally Verified Certification of Unsolvability of Temporal Planning Problems}
\author {
    David Wang\textsuperscript{\rm 1},
    Mohammad Abdulaziz\textsuperscript{\rm 1}
}
\begin{document}

\maketitle

\begin{abstract}
We present an approach to unsolvability certification of temporal planning.
Our approach is based on encoding the planning problem into a network of timed automata, and then using an efficient model checker on the network followed by a certificate checker to certify the output of the model checker.
Our approach prioritises trustworthiness of the certification: we formally verify our implementation of the encoding to timed automata using the theorem prover Isabelle/HOL and we use an existing certificate checker (also formally verified in Isabelle/HOL) to certify the model checking result.
\end{abstract}


\todo{When todonotes are active, the figures do not show up when aligned with the [t] option}

\donedavid{Do not use input statements: what does that mean? 
The AAAI template prohibits using {\textbackslash}input statments, which is no longer an issue.}
\todomohammad{Some bibliography entries have possibly wrong information: {DBLP:journals/ai/GereviniHLSD09} and Acta Iranica seems unlikely.

Using {DBLP:journals/ai/GereviniHLSD09/corrected} instead}
\donedavid{Remove duplicate references from bib files. Prioritise long paper bib.}
\commentdavid{
Fully removed now. Previously just not referred to and, hence, not printed in text.

Some entries are related but not the same, e.g. gigante 2022 and 2020, fast downward aidos and fast downward aidos (code and planner abstract). I have kept the versions.}
\providecommand{\insts}{}
\renewcommand{\insts}{\ensuremath{\Delta}}
\providecommand{\inst}{\ensuremath{\tvsal}}
\newcommand{\act}{\ensuremath{\pi}}
\newcommand{\asarrow}[1]{\vec{#1}}

\makeatletter
\newcommand{\oset}[2]{%
  {\mathop{#2}\limits^{\vbox to -1\ex@{\kern-\tw@\ex@
   \hbox{\scriptsize #1}\vss}}}}
\makeatother

\renewcommand{\vec}[1]{\ensuremath{\oset{$\rightarrow$}{\ensuremath{#1}}}}
\newcommand{\as}{\ensuremath{\vec{{\act}}}}
\newcommand{\asb}{\ensuremath{\vec{{\act_2}}}}

\newcommand{\etc}{\textit{etc.}}
\newcommand{\versus}{\textit{vs.}}
\newcommand{\ie}{i.e.}
\newcommand{\Ie}{I.e.}
\newcommand{\eg}{e.g.}
\newcommand{\michael}[1]{\textcolor{blue}{M: #1}}
\newcommand{\abziz}[1]{\textcolor{brown}{#1}}
\newcommand{\sublist}[2]{ \ensuremath{#1} \preceq\!\!\!\raisebox{.4mm}{\ensuremath{\cdot}}\; \ensuremath{#2}}
\newcommand{\subscriptsublist}[2]{\ensuremath{#1}\preceq\!\raisebox{.05mm}{\ensuremath{\cdot}}\ensuremath{#2}}
\newcommand{\PLS}{\Pi^\preceq\!\raisebox{1mm}{\ensuremath{\cdot}}}
\newcommand{\PLScharles}{\Pi^d}
\newcommand{\execname}{\mathsf{ex}}
\newcommand{\IndHyp}{\mathsf{IH}}
\newcommand{\exec}[2]{#2(#1)}

\newcommand{\ancestorssymbol}{\textsf{\upshape ancestors}}
\newcommand{\ancestors}{\ancestorssymbol}
\newcommand{\satpreas}[2]{\ensuremath{sat_precond_as(s, \as)}}
\newcommand{\proj}[2]{\ensuremath{#1{\downharpoonright}_{#2}}}
\newcommand{\dep}[3]{\ensuremath{#2 {\rightarrow} #3}}
\newcommand{\deptc}[3]{\ensuremath{#2 {\rightarrow^+} #3}}
\newcommand{\negdep}[3]{\ensuremath{#2 \not\rightarrow #3}}
\newcommand{\leavessymbol}{\textsf{\upshape leaves}}
\newcommand{\leaves}{\leavessymbol}

\newcommand{\childrensymbol}{\textsf{\upshape children}}
\newcommand{\children}[2]{\mathcal{\childrensymbol}_{#2}(#1)}
\newcommand{\succsymbol}{\textsf{\upshape succ}}
\newcommand{\succstates}[2]{\succsymbol(#1, #2)}
\newcommand{\concat}{\#}
\newcommand{\RG}{\cite{Rintanen:Gretton:2013}\ }
\newcommand{\KG}{Kovacs' grammar}
\newcommand{\cupdot}{\charfusion[\mathbin]{\cup}{\cdot}}
\newcommand{\cuparrow}{\charfusion[\mathbin]{\cup}{{\raisebox{.5ex} {\smathcalebox{.4}{\ensuremath{\leftarrow}}}}}}
\newcommand{\bigcuparrow}{\charfusion[\mathop]{\bigcup}{\leftarrow}}
\newcommand{\finiteunion}{\cuparrow}
\newcommand{\finitemap}{\ensuremath{\sqsubseteq}}
\newcommand{\dgraph}{dependency graph}
\newcommand{\domain}[1]{{\sc #1}}
\newcommand{\solver}[1]{{\sc #1}}
\providecommand{\problem}[1]{\domain{#1}}
\renewcommand{\v}{\ensuremath{\mathit{v}}}
\providecommand{\vs}[1]{\domain{#1}}
\renewcommand{\vs}{\ensuremath{\mathit{vs}}}
\newcommand{\VS}{\ensuremath{\mathit{VS}}}
\newcommand{\Aut}{\ensuremath{\mathit{Aut}}}
\newcommand{\Inst}[2]{\ensuremath{\mathit{#2 \rightarrow_{#1} #1}}}
\newcommand{\Image}{\ensuremath{\mathit{Im}}}
\newcommand{\Img}[2]{\protect{#1 \llparenthesis #2 \rrparenthesis}}
\newcommand{\SND}{\ensuremath{\mathit{\pi_2}}}
\newcommand{\FST}{\ensuremath{\mathit{\pi_1}}}
\newcommand{\tvsal}{{\pitchfork}}
\newcommand{\nauty}{CGIP}

\newcommand{\pwinter}{\ensuremath{\mathit{\bigcap_{pw}}}}

\newcommand{\dom}{\ensuremath{\mathit{\mathcal{D}}}}
\newcommand{\codom}{\ensuremath{\mathcal{R}}}

\newcommand{\map}{\ensuremath{\mathit{map}}}
\newcommand{\BIJEC}{\ensuremath{\mathit{bij}}}
\newcommand{\INJ}{\ensuremath{\mathit{inj}}}
\newcommand{\funion}{\ensuremath{\overset{\leftarrow}{\cup}}}

\newcommand{\ifnew}{\mbox{\upshape \textsf{if}}}
\newcommand{\thennew}{\mbox{\upshape \textsf{then}}}
\newcommand{\elsenew}{\mbox{\upshape \textsf{else}}}
\newcommand{\choice}{\ensuremath{\epsilon}}
\newcommand{\arbchoice}{\mbox{\upshape \textsf{arb}}}
\newcommand{\acycchoice}{\mbox{\upshape \textsf{ac}}}
\newcommand{\cycchoice}{\mbox{\upshape \textsf{cyc}}}
\newcommand{\filter}{\ensuremath{\mathit{FIL}}}
\newcommand{\probset}{\ensuremath{\boldsymbol \Pi}}
\newcommand{\probleq}{\ensuremath{\leq_\Pi}}
\newcommand{\CommVar}{\ensuremath{\bigcap_\v} }
\newcommand{\quotfun}{\ensuremath{ \mathcal{Q}}}

\newcommand{\apre}{\mbox{\upshape \textsf{pre}}}
\newcommand{\aeff}{\mbox{\upshape \textsf{eff}}}
\newcommand{\problist}{\ensuremath \probset}
\newcommand{\cat}{{\frown}}
\newcommand{\probproj}[2]{{#1}{\downharpoonright}^{#2}}
\newcommand{\preced}{\mathbin{\rotatebox[origin=c]{180}{\ensuremath{\rhd}}}}
\newcommand{\perm}{\ensuremath{\sigma}}
\newcommand{\invariant}[2]{\ensuremath{\mathit{inv({#1},{#2})}}}
\newcommand{\invstates}[1]{\ensuremath{\mathit{inv({#1})}}}
\newcommand{\probss}[1]{{\mathcal S}(#1)}
\newcommand{\parChildRel}[3]{\ensuremath{\negdep{#1}{#2}{#3}}}
\newcommand{\asessymbol}{\ensuremath{\mathbb{A}}}
\newcommand{\ases}[1]{{#1}^*}
\newcommand{\uniStates}{\ensuremath{\mathbb{U}}}
\newcommand{\recurrenceDiam}{\ensuremath{\mathit{rd}}}
\newcommand{\recurrenceAcycDiamfun}{\ensuremath{\mathit{{\mathfrak A}}}}
\newcommand{\recurrenceDiamfun}{\ensuremath{\mathit{\mathfrak R}}}
\newcommand{\traversalDiam}{\ensuremath{\mathit{td}}}
\newcommand{\traversalDiamfun}{\ensuremath{\mathit{\mathfrak T}}}
\newcommand{\isPrefix}[2]{\ensuremath{#1 \preceq #2}}
\providecommand{\path}{\ensuremath{\gamma}}
\newcommand{\aspath}{\ensuremath{\vec{\path}}}
\renewcommand{\path}{\ensuremath{\gamma}}
\newcommand{\n}{\textsf{\upshape n}}
\providecommand{\graph}{}
\providecommand{\cal}{}
\renewcommand{\cal}{}
\renewcommand{\graph}{{\cal G}}
\newcommand{\undirgraph}{{\cal G}}
\newcommand{\sset}{\ensuremath{\mbox{\upshape \textsf{ss}}}}
\newcommand{\slist}{\ensuremath{\vec{\mbox{\upshape \textsf{ss}}}}}
\newcommand{\sll}{\ensuremath{\vec{\state}}}
\newcommand{\listset}{\mbox{\upshape \textsf{set}}}
\newcommand{\asset}{\ensuremath{\mathit{K}}}
\newcommand{\aslist}{\ensuremath{\mathit{\overset{\rightarrow}{\gamma}}}}
\newcommand{\head}{\mbox{\upshape \textsf{hd}}}
\renewcommand{\max}{\textsf{\upshape max}}
\newcommand{\argmax}{\textsf{\upshape argmax}}
\renewcommand{\min}{\textsf{\upshape min}}
\newcommand{\bool}{\mbox{\upshape \textsf{bool}}}
\newcommand{\last}{\mbox{\upshape \textsf{last}}}
\newcommand{\front}{\mbox{\upshape \textsf{front}}}
\newcommand{\rot}{\mbox{\upshape \textsf{rot}}}
\newcommand{\stuff}{\mbox{\upshape \textsf{intlv}}}
\newcommand{\tail}{\mbox{\upshape \textsf{tail}}}
\newcommand{\ngrtoas}{\ensuremath{\mathit{\as_{\graph_\mathbb{N}}}}}
\newcommand{\vsfun}{\mbox{\upshape \textsf{vs}}}
\newcommand{\inits}{\mbox{\upshape \textsf{init}}}
\newcommand{\satprecondas}{\mbox{\upshape \textsf{sat-pre}}}
\newcommand{\remcondlessact}{\mbox{\upshape \textsf{rem-cless}}}
\providecommand{\state}{}
\renewcommand{\state}{x}
\newcommand{\statea}{\ensuremath{x_1}}
\newcommand{\stateb}{\ensuremath{x_2}}
\newcommand{\statec}{\ensuremath{x_3}}
\newcommand{\fals}{\mbox{\upshape \textsf{F}}}
\newcommand{\indices}{\ensuremath{V}}
\newcommand{\edges}{\ensuremath{E}}
\newcommand{\vertices}{\ensuremath{V}}
\newcommand{\listtype}{\mbox{\upshape \textsf{list}}}
\newcommand{\settype}{\mbox{\upshape \textsf{set}}}
\newcommand{\acttype}{\mbox{\upshape \textsf{action}}}
\newcommand{\graphtype}{\mbox{\upshape \textsf{graph}}}
\newcommand{\projfun}[2]{\ensuremath{\Delta_{#1}^{#2}}}
\newcommand{\snapfun}[2]{\ensuremath{\Sigma_{#1}^{#2}}}
\newcommand{\RDfun}[1]{\ensuremath{{\mathcal R}_{#1}}}
\newcommand{\elldbound}[1]{\ensuremath{{\mathcal LS}_{#1}}}
\newcommand{\distinct}{\textsf{\upshape distinct}}
\newcommand{\ddistinct}{\mbox{\upshape \textsf{ddistinct}}}
\newcommand{\simple}{\mbox{\upshape \textsf{simple}}}

\newcommand{\reachable}[3]{\ensuremath{{#1}\rightsquigarrow{#3}}}

\newcommand{\Omit}[1]{}

\newcommand{\charles}[1]{\textcolor{red}{#1}}
\newcommand{\mohammad}[1]{Mohammad: \textcolor{green}{#1}}
\newcommand{\lukas}[1]{Lukas: \textcolor{red}{#1}}

\newcommand{\negreachable}[3]{\ensuremath{{#2}\not\rightsquigarrow{#3}}}
\newcommand{\wdiam}[2]{{#1}^{#2}}
\newcommand{\dsnapshot}[2]{\Delta_{#1}}
\newcommand{\ellsnapshot}[2]{{\mathcal L}_{#1}}

\newcommand{\snapshotsymbol}{|\kern-.7ex\raise.08ex\hbox{\scalebox{0.7}{$\bullet$}}}
\newcommand{\snapshot}[2]{\ensuremath{\mathrel{#1\snapshotsymbol_{#2}}}}
\newcommand{\vstype}{\texttt{\upshape VS}}
\newcommand{\vtype}{{\scriptsize \ensuremath{\dom(\delta)}}}
\newcommand{\Balgo}{{\mbox{\textsc{Hyb}}}}
\newcommand{\ssgraph}[1]{\graph_\ss}
\newcommand{\agree}{\textsf{\upshape agree}}
\newcommand{\ck}{\ensuremath{\texttt{ck}}}
\newcommand{\lk}{\ensuremath{\texttt{lk}}}
\newcommand{\gr}{\ensuremath{\texttt{gr}}}
\newcommand{\gk}{\ensuremath{\texttt{gk}}}
\newcommand{\CK}{\ensuremath{\texttt{CK}}}
\newcommand{\LK}{\ensuremath{\texttt{LK}}}
\newcommand{\GR}{\ensuremath{\texttt{GR}}}
\newcommand{\GK}{\ensuremath{\texttt{GK}}}
\newcommand{\safe}{\ensuremath{\texttt{s}}}

\newcommand{\derivname}{\ensuremath{\partial}}
\newcommand{\deriv}[3]{\ensuremath{\derivname(#1,#2,#3)}}
\newcommand{\derivabbrev}[3]{\ensuremath{{\partial(#1,#2)}}}
\newcommand{\subsetoracle}{\ensuremath{ \Omega}}
\newcommand{\Aalgo}{{\mbox{\textsc{Pur}}}}
\newcommand{\Sname}{\textsf{\upshape S}}
\newcommand{\Sbrace}[1]{\Sname\langle#1\rangle}
\newcommand{\SalgoName}{\Sname_{\textsf{\upshape max}}}
\newcommand{\Salgo}[1]{\SalgoName\langle#1\rangle}

\newcommand{\WLPname}{{\mbox{\textsc{wlp}}}}
\newcommand{\WLPbrace}[1]{\WLPname\langle#1\rangle}
\newcommand{\WLPalgoName}{\WLPname_{\textsf{\upshape max}}}
\newcommand{\WLP}[1]{\WLPalgoName\langle#1\rangle}

\newcommand{\Nname}{\ensuremath{\textsf{\upshape N}}}
\newcommand{\Nbrace}[1]{\Nname\langle#1\rangle}
\newcommand{\NalgoName}{\Nname{_{\textsf{\upshape sum}}}}
\newcommand{\Nalgobrace}[1]{\NalgoName\langle#1\rangle}

\newcommand{\acycNname}{\widehat{\textsf{\upshape N}}}
\newcommand{\acycNbrace}[1]{\acycNname\langle#1\rangle}
\newcommand{\acycNalgoName}{\acycNname{_{\textsf{\upshape sum}}}}
\newcommand{\acycNalgobrace}[1]{\acycNalgoName\langle#1\rangle}

\newcommand{\Mname}{\ensuremath{\textsf{\upshape M}}}
\newcommand{\Mbrace}[1]{\Mname\langle#1\rangle}
\newcommand{\MalgoName}{\Mname{_{\textsf{\upshape sum}}}}
\newcommand{\Malgobrace}[1]{\MalgoName\langle#1\rangle}
\newcommand{\cardinality}[1]{{\ensuremath{|#1|}}}
\newcommand{\length}[1]{\cardinality{#1}}
\newcommand{\basecasefun}{\ensuremath{b}}
\newcommand{\Basecasefun}{\ensuremath{\mathcal B}}

\newcommand{\vertexgen}{\ensuremath{u}}
\newcommand{\vertexa}{{\ensuremath{\vertexgen_1}}}
\newcommand{\vertexb}{{\ensuremath{\vertexgen_2}}}
\newcommand{\vertexc}{{\ensuremath{\vertexgen_3}}}
\newcommand{\vertexd}{{\ensuremath{\vertexgen_4}}}
\newcommand{\vertexsetgen}{\ensuremath{\mathit{us}}}
\newcommand{\vertexseta}{\vertexsetgen_1}
\newcommand{\vertexsetb}{\vertexsetgen_2}
\newcommand{\labelsymbol}{\ensuremath{l}}
\newcommand{\labelfun}{\ensuremath{\mathcal{L}}}
\newcommand{\DAG}{\ensuremath{A}}
\newcommand{\NalgoNameN}{{\ensuremath{\NalgoName_{\mathbb{N}}}}}
\newcommand{\NnameN}{\ensuremath{\Nname_\mathbb{N}}}
\newcommand{\replaceprojsinglename}{\raisebox{-0.3mm} {\scalebox{0.7}{\textpmhg{H}}}}
\newcommand{\replaceprojsingle}[3] {{ #2} \underset {#1} {\raisebox{-0.3mm} {\scalebox{0.7}{\textpmhg{H}}}}  #3}
\newcommand{\HOLreplaceprojsingle}[1]{\underset {#1} {\raisebox{-0.3mm} {\scalebox{0.7}{\textpmhg{H}}}}}

\newcommand{\lotus}{{\scalebox{0.6}{\includegraphics{lotus.pdf}}}}
\newcommand{\invlotus}{\mathbin{\rotatebox[origin=c]{180}{$\lotus$}}}
\newcommand{\clique}{\ensuremath{K}}
\newcommand{\partition}{\ensuremath{\vs_{1..n}}}
\newcommand{\partitiontype}{\ensuremath{\vstype_{1..n}}}
\newcommand{\vtxpartition}{\ensuremath{P}}

\newcommand{\traversalDiamAlgo}{{\mbox{\textsc{TravDiam}}}}
\newcommand{\prefix}{\textsf{\upshape pfx}}
\newcommand{\powerset}{\mathbb{P}}
\newcommand{\postfix}{\textsf{\upshape sfx}}
\newcommand{\dfunproj}{\ensuremath{{\mathfrak D}}}
\newcommand{\dfunsnap}{\ensuremath{{\textgoth D}}}
\newcommand{\ellfunproj}{\ensuremath{\mathfrak L}}
\newcommand{\ellfunsnap}{\ensuremath{\textgoth L}}
\newcommand{\cycle}{\ensuremath{C}}
\newcommand{\petal}{\ensuremath{\eta}}
\renewcommand{\prod}{\ensuremath{{{{{\mathlarger{\mathlarger {{\mathlarger {\Pi}}}}}}}}}}
\newcommand{\sccset}{{\ensuremath{SCC}}}
\newcommand{\scc}{{\ensuremath{scc}}}
\newcommand{\negate}[1]{\overline{#1}}
\newcommand{\setofsets}{\ensuremath{S}}
\newcommand{\group}{\ensuremath{\cal \Gamma}}
\newcommand{\neededvars}{{\cal N}}
\newcommand{\sspace}{\mbox{\upshape \textsf{sspc}}}
\newcommand{\tip}{\ensuremath{t}}
\newcommand{\vara}{\ensuremath{\v_1}}
\newcommand{\varb}{\ensuremath{\v_2}}
\newcommand{\varc}{\ensuremath{\v_3}}
\newcommand{\vard}{\ensuremath{\v_4}}
\newcommand{\vare}{\ensuremath{\v_5}}
\newcommand{\varf}{\ensuremath{\v_6}}
\newcommand{\varg}{\ensuremath{\v_7}}
\newcommand{\varh}{\ensuremath{\v_8}}
\newcommand{\vari}{\ensuremath{\v_9}}
\newcommand{\acta}{{\ensuremath{\act^1}}}
\newcommand{\actb}{{\ensuremath{\act^2}}}
\newcommand{\actc}{\ensuremath{\act^3}}
\newcommand{\actd}{\ensuremath{\act^4}}
\newcommand{\acte}{\ensuremath{\act^5}}
\newcommand{\actf}{\ensuremath{\act^6}}
\newcommand{\actg}{\ensuremath{\act^7}}
\newcommand{\acth}{\ensuremath{\act^8}}
\newcommand{\acti}{\ensuremath{\act^9}}

\newcommand{\planningproblem}{\Uppi}

\tikzset{dots/.style args={#1per #2}{line cap=round,dash pattern=on 0 off #2/#1}}
\providecommand{\moham}[1]{\fbox{{\bf \@Mohammad: }#1}}
\newcommand{\TDbound}{{\mbox{\textsc{Arb}}}}
\newcommand{\expbound}{{\mbox{\textsc{Exp}}}}
\newcommand{\sasdom}{\expbound}
\newcommand{\cardfun}{\ensuremath{\mathbb{C}}}
\newcommand{\AGNa}{AGN1}
\newcommand{\AGNb}{AGN2}
\newcommand{\reset}{{\ensuremath{reset}}}
\newcommand{\cost}{{\ensuremath{\mathcal{C}}}}
\newcommand{\goal}{{\ensuremath{\mathcal{G}}}}
\newcommand{\init}{{\ensuremath{\mathcal{I}}}}
\newcommand{\completenessthreshold}{{\ensuremath{\mathcal{CT}}}}
\newcommand{\subsetDiam}{\mathscr{S}}

\newcommand{\planningProb}{\planningproblem}
\newcommand{\initState}{\init}
\newcommand{\goalState}{\goal}
\newcommand{\planningState}{M}
\newcommand{\snapAction}{\act}
\newcommand{\snapActionPre}{\snapAction_\textit{pre}}
\newcommand{\snapActionAdd}{\snapAction_\textit{add}}
\newcommand{\snapActionDel}{\snapAction_\textit{del}}
\newcommand{\snapActiona}{\acta}
\newcommand{\snapActionPrea}{{\snapActiona_\textit{pre}}}
\newcommand{\snapActionAdda}{{\snapActiona_\textit{add}}}
\newcommand{\snapActionDela}{{\snapActiona_\textit{del}}}
\newcommand{\snapActionb}{\actb}
\newcommand{\snapActionPreb}{{\snapActionb_\textit{pre}}}
\newcommand{\snapActionAddb}{{\snapActionb_\textit{add}}}
\newcommand{\snapActionDelb}{{\snapActionb_\textit{del}}}
\newcommand{\durAction}{\act}
\newcommand{\durActionStart}{\durAction_\textit{start}}
\newcommand{\durActionEnd}{\durAction_\textit{end}}
\newcommand{\durActionInv}{\durAction_\textit{inv}}
\newcommand{\durActionInvParam}[1]{\operatorname{pre}_{inv}(#1)}
\newcommand{\durationLower}{\act_l}
\newcommand{\durationUpper}{\act_u}
\newcommand{\plan}{\ensuremath{\as}}
\newcommand{\happening}{h}
\newcommand{\happeningActs}{A}
\newcommand{\happeningTime}{r}
\newcommand{\happTimePoint}{t}
\newcommand{\happTimePoints}{\textit{htps}}
\newcommand{\recSetActs}{B}
\newcommand{\recSetInvs}{I}
\newcommand{\predAtm}{\texttt{pred}}
\newcommand{\floorType}{floor}
\newcommand{\elevatorType}{elevator}
\newcommand{\passengerType}{passenger}
\newcommand{\elevatorAtPred}{el-at}
\newcommand{\passengerAtPred}{p-at}
\newcommand{\inElevatorPred}{in-el}
\newcommand{\elevatorDoorOpenPred}{el-op}
\newcommand{\elevatorAtPredInline}[2]{\textit{\small(\elevatorAtPred~#1~#2)}}
\newcommand{\passengerAtPredInline}[2]{\textit{\small(\passengerAtPred~#1~#2)}}
\newcommand{\inElevatorPredInline}[2]{\textit{\small(\inElevatorPred~#1~#2)}}
\newcommand{\elevatorDoorOpenPredInline}[1]{\textit{\small(\elevatorDoorOpenPred~#1)}}
\newcommand{\elevatorDurationFunc}{el-dur}
\newcommand{\moveElevatorAct}{mv}
\newcommand{\openElevatorDoorAct}{op}
\newcommand{\closeElevatorDoorAct}{cl}
\newcommand{\enterElevatorAct}{en}
\newcommand{\exitElevatorAct}{ex}

\newcommand{\lstinlinemacro}[1]{\small\textit{#1}}
\newcommand{\insertActionAlgo}{\operatorname{insert-action}}
\newcommand{\simplifyActionAlgo}{\operatorname{simplify-action}}
\newcommand{\simplifyPlanAlgo}{\operatorname{simplify-plan}}
\newcommand{\validHapSeqAlgo}{\operatorname{valid-hap-seq}}
\newcommand{\checkPlanAlgo}{\operatorname{check-plan}}
\newcommand{\sortAlgo}{\operatorname{sort}}
\newcommand{\happSeq}{H}
\newcommand{\false}{\textit{False}}
\newcommand{\true}{\textit{True}}

\makeatletter
\newcommand*{\my@test@it}{it}
\newcommand*{\mathulifit}[1]{%
  \ifx\f@shape\my@test@it
    \underline{\mathit{#1}}
  \else
    \mathit{#1}
  \fi
}
\makeatother

\newcommand{\varul}[1]{\ensuremath{\mathulifit{#1}}}
\newcommand{\varnoul}[1]{\ensuremath{\mathit{#1}}}

\newcommand{\propositions}{\varnoul{P}}
\newcommand{\actions}{\varnoul{A}}
\newcommand{\initstate}{\varnoul{I}}
\newcommand{\goalstate}{\varnoul{G}}

\newcommand{\propval}{\ensuremath{\mathcal{M}}}
\newcommand{\propstate}{\varnoul{M}}

\newcommand{\tss}[1]{H(#1)}

\newcommand{\plana}{\varnoul{\alpha}}

\newcommand{\real}{\ensuremath{\mathbb{R}}}

\newcommand{\clocks}{\ensuremath{\mathcal{X}}}
\newcommand{\vars}{\ensuremath{\mathcal{V}}}
\newcommand{\locs}{\ensuremath{\mathcal{L}}}
\newcommand{\auto}{\ensuremath{\mathcal{A}}}

\newcommand{\clock}[1]{c#1}

\newcommand{\Loc}{L}
\newcommand{\Var}{v}
\newcommand{\Clk}{c}

\newcommand{\Lvct}[3]{{\Loc}{#1}, {\Var}{#2}, {\Clk}{#3}}
\newcommand{\bLvct}[3]{\langle {\Loc}{#1}, {\Var}{#2}, {\Clk}{#3} \rangle}
\newcommand{\Lvc}[1]{\Lvct{#1}{#1}{#1}}
\newcommand{\bLvc}[1]{\langle \Lvct{#1}{#1}{#1} \rangle}

\newcommand{\cfgs}{\varnoul{cfgs}}

\newcommand{\pluseq}{\mathrel{+}=}
\newcommand{\minuseq}{\mathrel{-}=}
\newcommand{\constraints}[1]{\mathcal{C}(#1)}

\newcommand{\cEntails}[2]{#1 \vDash_c #2}
\newcommand{\bEntails}[2]{#1 \vDash_b #2}

\newcommand{\bexps}[1]{\mathcal{B}(#1)}
\newcommand{\exps}[1]{\mathcal{E}(#1)}

\newcommand{\poset}[1]{2^{#1}}

\newcommand{\dcsatf}{\varnoul{dur\_c\_sat}}
\newcommand{\nsof}{\varnoul{no\_self\_overlap}}
\newcommand{\valf}{\varnoul{valid}}
\newcommand{\encbef}{\varnoul{encodes\_before}}
\newcommand{\encaft}{\varnoul{encodes\_after}}
\newcommand{\enclbef}{\varnoul{encodes\_locs\_before}}
\newcommand{\encpbef}{\varnoul{encodes\_props\_before}}
\newcommand{\encplbef}{\varnoul{encodes\_prop\_locks\_before}}
\newcommand{\encaabef}{\varnoul{encodes\_all\_active\_before}}
\newcommand{\enccbef}{\varnoul{encodes\_clocks\_before}}

\newcommand{\runbef}{\varnoul{running\_before}}
\newcommand{\runaft}{\varnoul{running\_after}}
\newcommand{\runat}{\varnoul{running\_at}}
\newcommand{\timesincebef}{\varnoul{time\_since\_before}}
\newcommand{\timesinceaft}{\varnoul{time\_since\_after}}
\newcommand{\timesinceat}{\varnoul{time\_since\_at}}

\newcommand{\escond}{\varnoul{end\_start\_cond}}
\newcommand{\esinvs}{\varnoul{end\_start\_invs}}
\newcommand{\run}{\varnoul{run}}

\newcommand{\eval}{\varnoul{eval}}
\newcommand{\bval}{\varnoul{bval}}
\newcommand{\ccval}{\varnoul{ccval}}

\newcommand{\initedge}{\varnoul{e_{1M}}}
\newcommand{\goaledge}{\varnoul{e_{2M}}}

\newcommand{\initloc}{\varnoul{init_M}}
\newcommand{\planningloc}{\varnoul{plan_M}}
\newcommand{\goalloc}{\varnoul{goal_M}}

\newcommand{\initlocs}{\varnoul{ilc}}
\newcommand{\planninglocs}{\varnoul{plc}}
\newcommand{\goallocs}{\varnoul{glc}}

\newcommand{\sse}{\varnoul{se}}
\newcommand{\see}{\varnoul{se'}}
\newcommand{\inste}{\varnoul{ie}}
\newcommand{\ese}{\varnoul{ee}}
\newcommand{\eee}{\varnoul{ee'}}

\newcommand{\slc}{\varnoul{starting}}
\newcommand{\elc}{\varnoul{ending}}
\newcommand{\olc}{\varnoul{inactive}}
\newcommand{\rlc}{\varnoul{running}}

\newcommand{\slcs}{\varnoul{slc}}
\newcommand{\elcs}{\varnoul{elc}}
\newcommand{\olcs}{\varnoul{olc}}
\newcommand{\rlcs}{\varnoul{rlc}}

\newcommand{\pl}{\varnoul{ps}}
\newcommand{\vp}[1]{\varnoul{vp}_{#1}}
\newcommand{\lp}[1]{\varnoul{lp}_{#1}}
\newcommand{\actsactive}{\varnoul{aa}}

\newcommand{\numacts}{\ensuremath{N}}

\newcommand{\mvact}{\varnoul{move}}
\newcommand{\opact}{\varnoul{open\_door}}
\newcommand{\clact}{\varnoul{close\_door}}
\newcommand{\pdact}{\varnoul{put\_down}}
\newcommand{\puact}{\varnoul{pick\_up}}

\newcommand{\mvactsrt}{\varnoul{mv}}
\newcommand{\opactsrt}{\varnoul{od}}
\newcommand{\clactsrt}{\varnoul{cd}}
\newcommand{\pdactsrt}{\varnoul{pd}}
\newcommand{\puactsrt}{\varnoul{pu}}

\tododavid{Replace pres(h), adds(h), dels(h) with \( h^? \), \( h^+ \), \( h^- \)? }
\todomohammad{No, as these are the names of heuristics for A* planning}

\newcommand{\adds}{\varnoul{adds}}
\newcommand{\dels}{\varnoul{dels}}
\newcommand{\pres}{\varnoul{pres}}
\newcommand{\ova}{\varnoul{over\_all}}

\newcommand{\robv}{\varnoul{rb}}
\newcommand{\roomv}{\varnoul{rm}}
\newcommand{\doorv}{\varnoul{d}}
\newcommand{\blockv}{\varnoul{b}}

\newcommand{\openp}{\varnoul{open}}
\newcommand{\closedp}{\varnoul{closed}}
\newcommand{\holdingp}{\varnoul{holds}}
\newcommand{\idlep}{\varnoul{idle}}
\newcommand{\inp}{\varnoul{in}}
\newcommand{\connectp}{\varnoul{connects}}

\newcommand{\autonet}{\ensuremath{\mathcal{A}}}

\newcommand{\nmvrai}{\varnoul{a_2}}
\newcommand{\nmvraii}{\varnoul{a_3}}
\newcommand{\nopa}{\varnoul{a_1}}
\newcommand{\ncla}{\varnoul{a_4}}

\newcommand{\nmvraicol}{blue}
\newcommand{\nmvraiicol}{red}
\newcommand{\nopacol}{orange}
\newcommand{\nclacol}{cyan}

\newcommand{\rat}{\ensuremath{\mathbb{Q}}}
\newcommand{\mutex}{\text{mutex}}

\newcommand{\compop}{\ensuremath{\mathop{\bowtie}}}
\newcommand{\arithop}{\ensuremath{\mathop{\circ}}}

\tikzset{
    robot1/.pic={
        \draw node at (0,0) {\includegraphics[height=2cm]{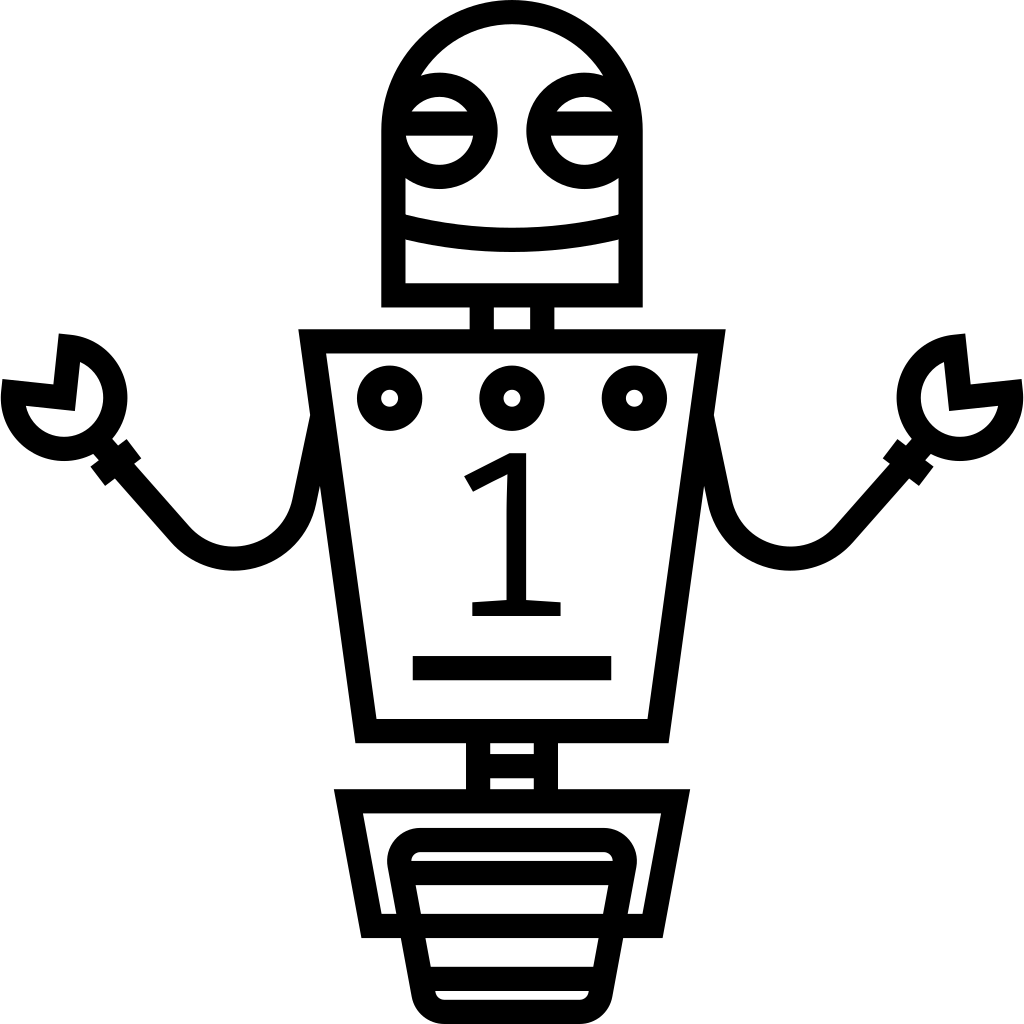}};
    };
}
\tikzset{
    robot2/.pic={
        \draw node at (0,0) {\includegraphics[height=2cm]{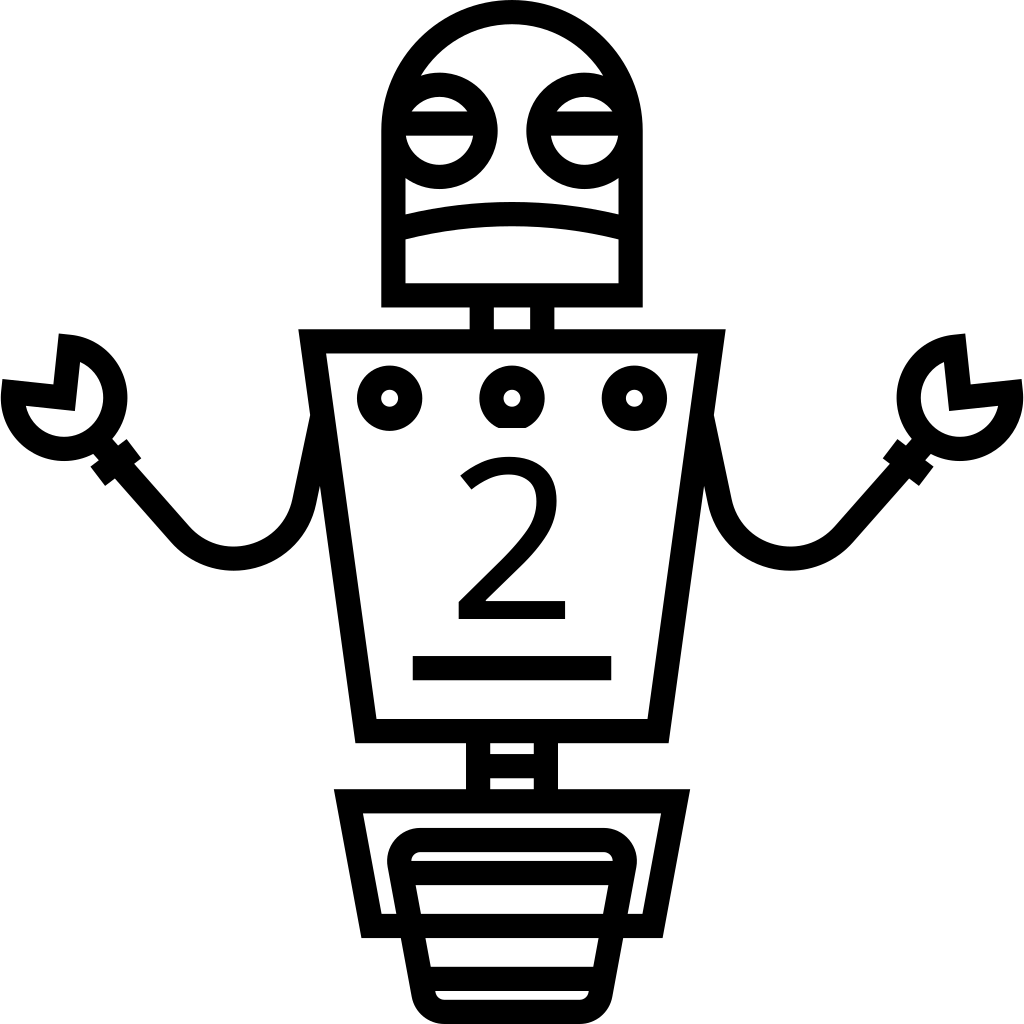}};
    };
}
\tikzset{
    block/.pic={
        \draw node at (0,0) {\includegraphics[height=1cm]{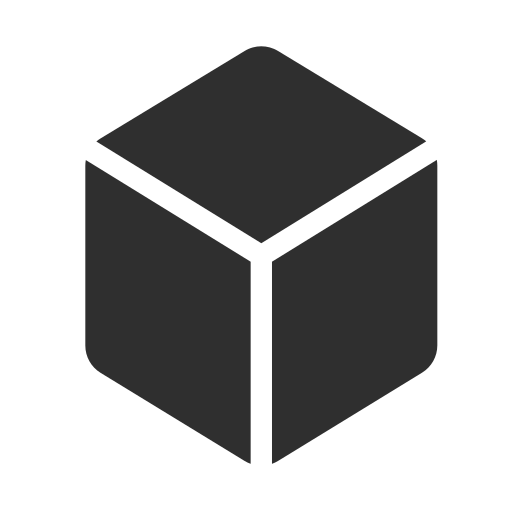}};
    };
}
\tikzset{
    blockinv/.pic={
        \draw node at (0,0) {\includegraphics[height=1cm]{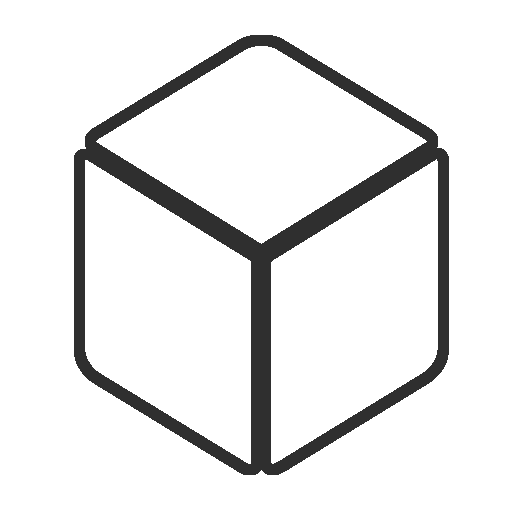}};
    };
}
\tikzset{
    r1b1l/.pic={
        \draw pic at (0,0) {robot1}
            pic at (-1.25,0.5) {block};
    },
    r1b1r/.pic={
        \draw pic at (0,0) {robot1}
            pic at (1.25,0.5) {block};
    },
    r2b1l/.pic={
        \draw pic at (0,0) {robot2}
            pic at (-1.25,0.5) {block};
    },
    r2b1r/.pic={
        \draw pic at (0,0) {robot2}
            pic at (1.25,0.5) {block};
    },
    r1b2l/.pic={
        \draw pic at (0,0) {robot1}
            pic at (-1.25,0.5) {blockinv};
    },
    r1b2r/.pic={
        \draw pic at (0,0) {robot1}
            pic at (1.25,0.5) {blockinv};
    },
    r2b2l/.pic={
        \draw pic at (0,0) {robot2}
            pic at (-1.25,0.5) {blockinv};
    },
    r2b2r/.pic={
        \draw pic at (0,0) {robot2}
            pic at (1.25,0.5) {blockinv};
    };
}
\tikzset{
    dooropen/.pic={
        \draw[color=gray] (0,0) -- ++(90:0.5) -- ++(300:0.5)
        (0,2) -- ++(270:0.5) -- ++(60:0.5);
    },
    doorclosed/.pic={
        \draw[color=gray] (0,0) -- (0,2);
    },
    doorhalfopen/.pic={
        \draw[color=gray] (0,0) ++(90:0.5) -- ++(15:0.5)
        (0,2) ++(270:0.5) -- ++(345:0.5);
        \draw[color=gray] (0,0) -- (0,2);
        \draw[color=gray] (0,0) ++(90:0.5) -- ++(300:0.5)
        (0,2) ++(270:0.5) -- ++(60:0.5);
    };
}
\tikzset{
    global scale/.style={
        scale=#1,
        every node/.style={scale=#1}
    }
}

\tikzset{
    to*/.style={
        shorten >=.25em,#1-to,
        to path={-- node[inner sep=0pt,at end,sloped] {${}^*$} (\tikztotarget) \tikztonodes}
    },
    to*/.default=
}

\makeatletter
\pgfarrowsdeclare{to*}{to*}
{
  \pgfutil@tempdima=-0.84pt%
  \advance\pgfutil@tempdima by-1.3\pgflinewidth%
  \pgfutil@tempdimb=0.21pt%
  \advance\pgfutil@tempdimb by.625\pgflinewidth%
  \advance\pgfutil@tempdimb by2.5pt%
  \pgfarrowsleftextend{+\pgfutil@tempdima}
 textual matterfarrowsrightextend{+\pgfutil@tempdimb}
}
{
  \pgfutil@tempdima=0.28pt%
  \advance\pgfutil@tempdima by.3\pgflinewidth%
  \pgfsetlinewidth{0.8\pgflinewidth}
  \pgfsetdash{}{+0pt}
  \pgfsetroundcap
  \pgfsetroundjoin
  \pgfpathmoveto{\pgfqpoint{-3\pgfutil@tempdima}{4\pgfutil@tempdima}}
  \pgfpathcurveto
  {\pgfqpoint{-2.75\pgfutil@tempdima}{2.5\pgfutil@tempdima}}
  {\pgfqpoint{0pt}{0.25\pgfutil@tempdima}}
  {\pgfqpoint{0.75\pgfutil@tempdima}{0pt}}
  \pgfpathcurveto
  {\pgfqpoint{0pt}{-0.25\pgfutil@tempdima}}
  {\pgfqpoint{-2.75\pgfutil@tempdima}{-2.5\pgfutil@tempdima}}
  {\pgfqpoint{-3\pgfutil@tempdima}{-4\pgfutil@tempdima}}
  \pgfusepathqstroke
  \begingroup
    \pgftransformxshift{2.5pt}
    \pgftransformyshift{2pt}
    \pgftransformscale{.7}
    \pgfuseplotmark{asterisk}
  \endgroup
}

\newcommand{\replaced}[2]{{\color{blue}#1} {\color{red}#2}}
\section{Introduction}
\label{sec:intro}

Automated planning is a mature area of AI research where great strides have been achieved in the performance
of planning systems as well as their ability to process rich descriptions of planning problems.
This can be most evidently seen in the improvements in performance and expressiveness through the different planning competitions, 
which started in 1998~\cite{IPC1998reference,DBLP:journals/ai/GereviniHLSD09,DBLP:journals/aim/ColesCOJLSY12,vallati20152014,IPC2022} 
and in the fact that state-of-the-art planning systems scale to real-world sequential decision problems.

Despite the great performance of current planning systems and the expressiveness of their input formalisms,
more could be achieved to improve the trustworthiness of their outputs.
This is of great importance given the high complexity of state-of-the-art planning algorithms and software,
both at an abstract algorithmic level and at the implementation optimisation level.
Indeed, a lot of work went already into improving the trustworthiness of planning systems and software.
Much of that work has been put into developing \emph{certificate checkers} for planning, which are (relatively) small, 
independent pieces of software that can check the correctness of the output of a planner.
This includes the development of plan validators~\cite{VAL,INVAL}, 
which are programs that, given a planning problem and a candidate solution, 
confirm that the candidate solution indeed solves the planning problem.
Furthermore, to improve the trustworthiness of these validators, 
other authors have formally verified them, 
i.e.\ mathematically proved that those validators and their implementations are correct 
w.r.t.\ a semantics of the respective planning formalism.
This was done for a validator for classical planning~\cite{ictai2018}, 
which is the simplest planning formalism, and another validator for temporal planning~\cite{aaai2022}, 
which is a richer planning formalism that wields a notion of action duration w.r.t.\ a dense timeline and allows for concurrent plan action execution.

Improving trustworthiness of the correctness of a planner's output in case it reports that a planning problem is unsolvable or that a computed plan is optimal is much harder, nonetheless.
In the case of solvability, the computed plan, 
which is in most practical cases succinct, 
is itself a certificate that can be checked by executing the steps.
In the cases of unsolvability or optimality claims, 
obtaining a reasonably compact certificate substantiating the planner's output is much harder; 
in the worst case--%
unless $\mathrm{NP} = \mathrm{PSPACE}$~\cite[Chapter 4]{aroraBarakComplexity})
such a certificate can be exponentially long in terms of the planning task's size for most relevant planning formalisms.
In practice, such certificates are usually exponentially long, 
unless very carefully designed, like those by~\citet{eriksonStateSpaceCert}, 
who devised unsolvability certification schemes for classical planning which can be succinct for large classes of problems.

\todomohammad{The figure does not compile without errors. Fix.}
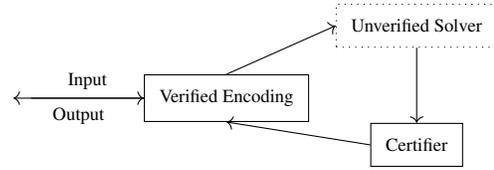
\begin{figure}[t]
  \centering
    \begin{tikzpicture}[%
      scale=5.0,
      every node/.style={font=\small},
      node/.style={draw,inner sep=0.2cm}, 
      decision/.style={draw,align=center,inner sep=0.5em}, 
      label/.style={text width=3cm}]
    \node(problem)[varnode]{};
    \node(verified)[node, right=1.5cm of problem]{\scriptsize Verified Encoding};
    \node(unverified)[dotted,node,above right=0.5cm of verified]{\scriptsize Unverified Solver};
    \node(certifier)[node,below=1cm of unverified]{\scriptsize Certifier};

    \draw[->] (problem.east) to node[right,above] {\scriptsize Input} (verified.west);
    \draw[->] (verified.west) to node[right,below] {\scriptsize Output} (problem.west);
    \draw[->] (verified.north) to node[label,midway,left] {} (unverified.west);
    \draw[->] (unverified.south) to node[label,midway,left] {} (certifier.north);
    \draw[->] (certifier.west) to node[label,midway,left] {} (verified.south);
  \end{tikzpicture}
\caption{\label{fig:plannerarch}
  The verified planner architecture proposed by Abdulaziz and Kurz, combining verified (solid) and unverified (dotted) modules.}
\end{figure}

In this work, we consider the problem of certifying unsolvability of temporal planning, 
i.e.\ planning with durative actions~\cite{fox2003pddl2}.
In general, devising a practical unsolvability certification scheme is not always possible,
as it may require changes to the planning algorithm to make it more likely to produce succinct certificates.
This is particularly difficult in the case of state-of-the-art algorithms for temporal planning due to their technical complexity.
The other approach (see~\Cref{fig:plannerarch}), which we follow here, 
is to encode the respective planning problem into another type of computational problem 
for which there already exists a practical certifying algorithm.
A challenge with this approach's trustworthiness, however, 
is that the encoding procedure could be of high complexity, meaning that, 
although the output of the target problem's solver is certified, 
the certification's implication for the source problem depends on the correctness of the encoding and its implementation.
To resolve that, the encoding, its implementation, 
and the certificate checker can be formally verified using a theorem prover.
This approach was devised by~\citet{verifiedSATPlan}, who used it to develop a certifying SAT-based planning system.

Our main contribution is that we use this approach, 
where we formally verify an encoding of temporal planning into timed automata by~\citet{temporalPlanningRefinementMC}
and use a verified certificate checker for timed automata model checking by \citet{wimmerCertTimedAutomata}.
Doing this successfully requires substantial engineering of the implementation and its correctness proof.
For instance, we adapt the encoding to produce timed automata in the formalism accepted by \citet{wimmerCertTimedAutomata}'s system, 
which involves a substantial engineering effort constituting the majority of our contribution here.
Another engineering decision we make is that we devise a semantics for temporal planning that is simpler than that used by~\citet{aaai2022} 
to facilitate mathematical reasoning about the encoding which we verify.
We plan to prove the semantics equivalent in the future.

\section{Background and Definitions}
\donedavid{I think you should use as much notation as possible from Abdulaziz and Koller or explicitly point out why you don't do that!}

\paragraph{Temporal Planning}

\todo{Why is that relevant here?}
\commentdavid{I don't know whether this should be kept or elaborated on.}We refer to \citet{aaai2022} for a formal semantics of a temporal plan 
and the first formally verified validator of plan correctness to our knowledge.
We do not reuse all notation from their definitions, 
because their semantics are too expressive (complex) in some areas
and defined for PDDL's abstract syntax, 
which makes it harder to reason with for our purpose.

\todomohammad{Can preprocessing handle invariants?}%
\commentdavid{I don't know whether there is a naive grounding/preprocessing method to remove \lstinline!(over all (or (p ?x) (q ?x)))!. Planners handle these, but does that happen when preprocessing? TFLAP has a segmentation fault on a problem that needs \lstinline!(over all (or (p ?x) (q ?x)))! to be handled correctly and does not output a full ground problem. It did not segfault and instead removed \lstinline!(over all (or (p ?x) (q ?x)))! in another problem, where \lstinline!(over all (or (p ?x) (q ?x)))! unnecessary. OPTIC's grounder leaves \lstinline!(over all (or p_x q_x))!.}
We also refer to the work of \citet{giganteDecidabilityComplexity2022} who, unlike \citet{aaai2022},
assume that a planning problem has been preprocessed to use a set-theoretic formulation of propositional planning, with conditions being minimal sets of true propositions.

Durative actions are pairs of snap-actions combined with invariant conditions and scheduling
constraints.

\donedavid{Use equiv}
\todomohammad{equiv is not used in many places, eg def 5}
\donedavid{Punctuate}
\donedavid{Do def's bottom-up}
\begin{definition}[Snap action]\label{def:snap_action}
    A \emph{snap-action} \( h \equiv \langle pre(h), adds(h), dels(h) \rangle \) consists of:
    \begin{enumerate*}[itemjoin={,\ }, itemjoin*={, and\ }]
        \item preconditions: \( \pres(h) \)
        \item additions: \( \adds(h) \)
        \item deletions: \( \dels(h) \)
    \end{enumerate*},
    which are subsets of a set of propositions \( P \).
\end{definition}

\donedavid{Remove this definition}

\begin{definition}[Durative action]\label{def:action}
    A durative \emph{action} \( \underline{a} \equiv \langle a_\vdash ,a_\dashv , \mathit{over\_all}(a), L(a), U(a) \rangle \) consists of:
    \begin{enumerate*}[itemjoin={,\ }, itemjoin*={, and\ }]
        \item a \emph{\underline{start} snap action} \( a_\vdash \)
        \item an \emph{\underline{end} snap action} \( a_\dashv \)
        \item a set of invariant propositions \( over\_all(a) \), which must hold while \( \underline{a} \) is active
        \item a lower bound on \( \underline{a} \)'s duration, \( L(a) \coloneq \langle l_a, \triangleleft \rangle \)
        \item an upper bound on \( \underline{a} \)'s duration, \( U(a) \coloneq \langle \triangleleft, u_a \rangle \)
    \end{enumerate*}, where \( \triangleleft \in \{ <, \le \} \) and 
    \( \dcsatf(a,d) \) is satisfied by a duration \( d \in \rat \) if: \( L(a) \equiv \langle l_a, \triangleleft \rangle \) and \( l_a \triangleleft d \), or \( U(a) \equiv \langle \triangleleft, u_a \rangle \) and \( d \triangleleft u_a \).
    
\end{definition}

\begin{example}[Rooms, Robots, Doors, Blocks]\label{ex:robots_blocks_doors}
We provide an example of a temporal planning problem concerning
rooms (denoted as \( \underline{\roomv} \)),
robots (denoted as \( \underline{\robv} \)),
doors (denoted as \( \underline{\doorv} \)), and 
blocks (denoted as \( \underline{\blockv} \)).

We refer to propositions in text by treating a descriptive word as a function
and objects as arguments in a function call.
For example, \emph{"robot number 1 is holding block number 1"} is \( \holdingp(\robv_1,\blockv_1)\).

Similarly, we treat actions as functions and use objects as arguments. 
For example, the action of \emph{"robot number 1 picking up the block from within room 1"} is:
\(
\puact(\robv_1,\blockv,\roomv_1)
\equiv\\
\langle 
    \langle 
        \pres(\puactsrt_{11\vdash}),\allowbreak
         \adds(\puactsrt_{11\vdash}),\allowbreak
         \dels(\puactsrt_{11\vdash})
    \rangle,\allowbreak
    \ova(\puactsrt_{11}),\allowbreak
    \langle
        \pres(\puactsrt_{11\dashv}),\allowbreak
        \adds(\puactsrt_{11\dashv}),\allowbreak
        \dels(\puactsrt_{11\dashv})
    \rangle,\allowbreak
    L(\puactsrt_{11}),\allowbreak
    U(\puactsrt_{11})
\rangle 
\equiv\\
\langle
    \langle
        \{ \idlep(\robv_1), \inp(\blockv, \roomv_1) \},\allowbreak
        \emptyset,\allowbreak
        \{ \idlep(\robv_1), \inp(\blockv, \roomv_1) \}
    \rangle,\allowbreak
    \{ \inp(\robv_1, \roomv_1) \},\allowbreak
    \langle 
        \emptyset,\allowbreak
        \{ \idlep(\robv_1), \holdingp(\robv_1, \blockv) \},\allowbreak
        \emptyset
    \rangle,\\
    \langle 4, \le \rangle,\allowbreak
    \langle \le, 4 \rangle 
\rangle
\)

This resembles PDDL's syntax with commas and parentheses instead of spaces to separate arguments.
\end{example}

\commentdavid{Consistent font for propositions. Do something about math italics in definitions looking the same as normal text. I am using {\textbackslash}underline on single variables. Automatically detecting the font of the surrounding text and changing the font of constants leads to equations that look bad.
Changing definitions from italics to normal font makes them indistinguishable from the surrounding text.}
\todo{Options to solve the readability issue with variables in definitions:
\begin{itemize}
    \item Current: {\textbackslash}underline one-letter variables when they are not within a larger term in a definition. 
        Pros: Distinguish definitions from surrounding text.
        Cons: Inconsistent look of variables.
    \item Current: Make every definition one paragraph long and make the definition font normal. 
        Pros: One-paragraph definitions in other papers.
        Cons: Hard to see what belongs to a definition and what does not.
    \item Not preferred: Use a different italic font in definitions. I don't think this would solve the issue.
\end{itemize}}
\commentdavid{ The definition of encoded states is hard to read as one paragraph. Encoded states before and the empty case are referred to in the proof.

The definition of semantics for a network of timed automata is hard to read as one paragraph.
}
\begin{definition}[Propositional state]\label{def:state}
    We refer to a set of relevant propositions 
    that are true in a world at a time as a \emph{state} \( S \). 
    E.g. \emph{"door 1 is closed", "robot 1 is holding block 1", etc.}
\end{definition}
\begin{definition}[Temporal planning problem]\label{def:planning_problem}
    A \emph{temporal planning problem} \( \mathcal{P} \equiv \langle \propositions, \actions, \initstate, \goalstate \rangle \) consists of
    \begin{enumerate*}[itemjoin={,\ }, itemjoin*={, and\ }]
        \item a set of \emph{propositions} \( \underline \propositions \), e.g. \( \mathit{in}(\roomv_1, \blockv1) \) 
        \item a set of \emph{durative actions} \( \underline \actions \)
        \item an \emph{initial state} of the world \( \initstate \subseteq \propositions \)
        \item a \emph{goal} condition for a \emph{state} to satisfy \( \goalstate \subseteq \propositions \).
    \end{enumerate*}
\end{definition}
For simplicity, unlike \citet{aaai2022}, we do not consider actions with arbitrary propositional formulas as preconditions and invariants as well defined.
Instead, we only permit simple conjunctions of atoms in preconditions.
Formulae in preconditions can be soundly replaced by conjunctions using a standard preprocessing step.

\donedavid{Not a sentence. (This referred to a removed sentence describing the actions.)}
\donedavid{The following paragraph is still nonsense: what is the context, what is $M_i$, etc}
\commentdavid{I have deleted it, in favour of the example.}

\donedavid{Remove this definition. (Definition of Formula)}

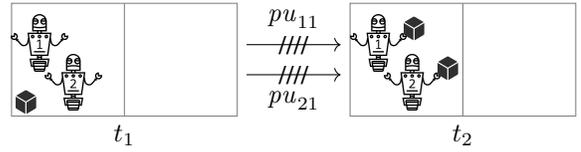
\begin{figure}
    \begin{center}
        \begin{tikzpicture}
            \def\xsep{5};

            \def\xstep{1.5}
            \def\ystep{1.5}

            \def\xmult{3}
            \def\ymult{-1.5}

            
            \foreach \x/\y/\lbtx/\ds in {
                0/0/$t_1$/doorclosed,
                1/0/$t_2$/doorclosed} {
                \begin{scope}[shift={(\x*\xmult*\xstep,\y*\ymult*\ystep)}]
                    \draw[color=gray] (0,0) rectangle (2*\xstep,\ystep);
                    \draw pic[global scale=0.5*\ystep] at (\xstep,0) {\ds};
                    \draw node[anchor=north] at (\xstep,0) {\lbtx};
                \end{scope}
            };

            
            \begin{scope}[shift={(0*\xmult*\xstep,0*\ymult*\ystep)}]
                \draw pic[global scale=0.375] at (0.125*\xstep,0.125*\ystep) {block};
                \draw pic[global scale=0.375] at (0.25*\xstep,0.65*\ystep) {robot1};
                \draw pic[global scale=0.375] at (0.55*\xstep,0.3*\ystep) {robot2};
            \end{scope}
            
            \begin{scope}[shift={(1*\xmult*\xstep,0*\ymult*\ystep)}]
                \draw pic[global scale=0.375] at (0.25*\xstep,0.65*\ystep) {r1b1r};
                \draw pic[global scale=0.375] at (0.55*\xstep,0.3*\ystep) {r2b1r};
            \end{scope}

            \node[draw=none] (o1) at (2*\xstep, 0.5*\ystep) {};
            \node[draw=none] (i2) at (\xmult*\xstep, 0.5*\ystep) {};

            \draw ($(o1.east) + (0,0.2)$) edge[->]
                    node[midway,above=0.1]{\( \puactsrt_{11} \)}
                    node[midway]{////}
                ($(i2.west) + (0,0.2)$);
                
                
            \draw ($(o1.east) + (0,-0.2)$) edge[->] 
                    node[midway,below=0.1]{\( \puactsrt_{21} \)}
                    node[midway]{////}
                ($(i2.west) + (0,-0.2)$);
                
        \end{tikzpicture}
    \end{center}
    \caption{Interfering actions in \cref{ex:robots_blocks_doors}.
    Two robots picking up the same block.}
    \label{fig:invalid_actions}
    \commentdavid{Include or leave out \( \delta \) to indicate time passing? See above. 
    In hindsight, it makes no sense here.}
    \commentdavid{I have also considered using the 
    example of a robot trying to open a door when the robot stops being busy.
    \Cref{fig:plan_timeline} would then elaborate on it.
    Non-interference is more important and it would allow me to save space in definitions,
    but it no longer shows a clearly invalid action.
    }
\end{figure}

\addtocounter{example}{-1}
\begin{example}[Rooms, Robots, Doors, Blocks (continued)]
    
We identify objects using shorthand notation and indices, 
and thereby identify some propositions.
The set of relevant propositions \( \propositions \) in our problem is:\\
\( \underline\idlep(\robv_i) \),
\( \underline\closedp(\doorv) \),
\( \underline\openp(\doorv) \),
\( \underline\inp(\robv_i, \roomv_j) \),
\( \underline\inp(\blockv, \roomv_j) \),
\( \underline\holdingp(\robv_i, \blockv) \), and
\( \underline\connectp(\doorv,\roomv_i, \roomv_j) \),\\
for \( i \in \{1,2\} \) and \( j \in \{1,2\} \).

The set of actions \( \actions \) is:\\
\( \underline\puactsrt_{ij} \equiv \puact(\robv_i,\blockv,\roomv_j) \),\\
\( \underline\pdactsrt_{ij} \equiv \pdact(\robv_i,\blockv,\roomv_j) \),\\
\( \underline\opactsrt_{ij} \equiv \opact(\robv_i,\doorv,\roomv_j) \),\\
\( \underline\clactsrt_{ij} \equiv \clact(\robv_i,\doorv,\roomv_j) \), and\\
\( \underline\mvactsrt_{ijk} \equiv \mvact(\robv_i,\doorv,\roomv_j,\roomv_k) \),\\
for \( i \in \{1,2\} \), \( j \in \{1,2\} \) and \( k \in \{1,2\} \).
\tododavid{Space and line wrapping}

The precise conditions and effects are:\\
\(
\opactsrt_{ij}
\mathord{\equiv}
\langle
    \langle
        \{ \idlep(\robv_i), \closedp(\doorv) \},\allowbreak
        \emptyset,\allowbreak
        \{ \idlep(\robv_i), \closedp(\doorv) \}
    \rangle,\allowbreak
    \{ \inp(\robv_i,\roomv_j) \},\allowbreak
    \langle 
        \emptyset,\allowbreak
        \{ \idlep(\robv_i), \openp(\doorv) \},\allowbreak
        \emptyset
    \rangle,\\
    \langle 3, \le \rangle,
    \langle \le, 3 \rangle 
\rangle
\)\\
\(
\pdactsrt_{ij}
\mathord{\equiv}
\langle
    \langle
        \{ \idlep(\robv_i), \holdingp(\robv_i, \blockv) \},\allowbreak
        \emptyset,\allowbreak
        \{ \idlep(\robv_i), \holdingp(\robv_i, \blockv)\}
    \rangle,\allowbreak
    \{ \inp(\robv_i, \roomv_j) \},\\
    \langle 
        \emptyset,\allowbreak
        \{ \idlep(\robv_i),  \inp(\blockv, \roomv_j) \},\allowbreak
        \emptyset
    \rangle,\allowbreak
    \langle 4, \le \rangle,\allowbreak
    \langle \le, 4 \rangle 
\rangle
\)\\
\(
\clactsrt_{ij}
\mathord{\equiv}
\langle
    \langle
        \{ \idlep(\robv_i), \openp(\doorv) \},\allowbreak
        \emptyset,\allowbreak
        \{ \idlep(\robv_i), \openp(\doorv) \}
    \rangle,\allowbreak
    \{ \inp(\robv_i,\roomv_j) \},\allowbreak
    \langle 
        \emptyset,\allowbreak
        \{ \idlep(\robv_i), \closedp(\doorv) \},\allowbreak
        \emptyset
    \rangle,\\
    \langle 3, \le \rangle,\allowbreak
    \langle \le, 3 \rangle 
\rangle
\)\\
\(
\mvactsrt_{ijk}
\mathord{\equiv}
\langle
    \langle
        \{ \idlep(\robv_i), \inp(\robv_i,\roomv_j), \connectp(\doorv,\roomv_j,\roomv_k) \},\allowbreak
        \emptyset,\allowbreak
        \{ \idlep(\robv_i), \inp(\robv_i,\roomv_j) \}
    \rangle,\allowbreak
    \{ \openp(\doorv) \},\\
    \langle 
        \emptyset,\allowbreak
        \{ \idlep(\robv_i), \inp(\robv_i,\roomv_k) \},\allowbreak
        \emptyset
    \rangle,\allowbreak
    \langle 2, \le \rangle,\allowbreak
    \langle \le, 5 \rangle 
\rangle
\)
\todo{I think some of these can be put into the appendix. Space}
\commentdavid{\citet{aaai2022} use indices as parameters to define sets of actions. Good idea.}
\end{example}

\donedavid{Why is there still $\vee$,  $\forall$, etc? -- Is the usage of $\forall$ and \( \exists \) justified in some places?

-- M: Yes, I think so. But $\vee$, $\wedge$, and arrows are not. I recommend you look into other AAAI papers before forming an opinion}
\begin{definition}[Mutex Snap Actions]\label{def:mutex_snap}
    Two snap actions \( a \) and \( b \) are mutually exclusive, denoted by \( \mutex(a,b) \), if there is an overlap in their preconditions and effects;
    in other words,
    if \( \pres(a) \) intersects with \( \adds(b) \) or \( \dels(b)) \) or \( \adds(a) \) intersects with \( \dels(b) \).
\end{definition}

If two snap actions \( a \) and \( b \) are mutually exclusive, 
then \( (S - \dels(b))\cup \adds(a) \) is not necessarily equivalent to \( (S \cup \adds(a)) - \dels(b) \), 
and \( \pres(b) \subseteq S \) does not imply \( \pres(b) \subseteq S - \dels(a) \).

\donedavid{The actions in \cref{fig:invalid_actions} should be defined in a running example. Right now, what you do is that you instantiate PDDL schemata, which were never defined before.} \commentdavid{I think I have somewhat improved the current version}

\donedavid{What is a plan? Add def}
\donedavid{Add definition of state early on too}
\begin{definition}[Plan]\label{def:plan}
    A \emph{plan} \( \pi \equiv [ \langle \plana_1, t_1, d_1 \rangle, \dots, \langle \plana_n, t_n, d_n \rangle ] \)
    is a list of \emph{plan actions} \( \plana_i \) scheduled to start at times \( t_i \) for a duration \( d_i \).
    Unlike \emph{durative actions}, which are unique, \emph{plan actions} are not and serve as placeholders
    for \emph{durative actions}
\end{definition}

\begin{definition}[Induced parallel plan]\label{def:timed_snap_actions}
    For a \emph{plan} \( \pi \equiv [ \langle \plana_1, t_1, d_1 \rangle, \dots, \langle \plana_n, t_n, d_n \rangle ] \),
    the \emph{induced parallel plan}\footnote{\citet{giganteDecidabilityComplexity2022} use the term \emph{"set of timed snap actions"}} \( \tss{\pi} \) contains exactly 
    all \( \langle t_i, (\plana_i)_\vdash \rangle \) and \( \langle t_i + d_i, (\plana_i)_\dashv \rangle \),
    for \( 1 \le i \le n\).
\end{definition}

\begin{definition}[Happening time point sequence]\label{def:htps}
    The \emph{happening time point sequence} \( htps(\pi) \equiv \{t_0, \dots, t_k \} \) is the ordered set of time points present in an \emph{induced parallel plan} \( \tss{\pi} \):
    \begin{enumerate*}[itemjoin={,\ }, itemjoin*={, and\ }]
        \item a time point \( \underline{t_i} \) is in \( htps(\pi) \) iff  \(\langle t_i, h \rangle\) is in \( \tss{\pi} \) for some snap action \( \underline h \).
        \item for any \( \underline{t_i} \) and \( \underline{t_j} \) in \( htps(\pi) \), 
        iff \( t_i < t_j \) then \( i < j \).
    \end{enumerate*}
\end{definition}

From \Cref{def:timed_snap_actions,def:htps},
we can obtain the sets of all additions 
and deletions of scheduled snap actions
as well as active invariants at a time \( t \).
Assume \( h \) exists, where \( \langle t, h\rangle \in \tss{\pi} \) holds, 
then \( p \in Adds(t) \) iff \( p \in \adds(h) \), and \( p \in Dels(t) \) iff \( p \in \dels(h)\).
Moreover \( p \in Invs(t) \) iff there exist \( a \), \( t' \), and \( d \), 
where \( \langle a, t', d \rangle \in \pi \) where \( p \in \ova(a) \) and \( t' < t \le t' + d \)

\begin{definition}[Valid state sequence]\label{def:valid_state_sequence}
    A \emph{state sequence} \( [\propstate_0, \ldots, \propstate_{m+1}] \)  is an ordered sequence of \emph{states} (see \cref{def:state}).
    The intuition is that each entry \( \propstate_i \) represents the world at a time \( \underline{t_i} \).
    A state sequence is valid w.r.t. a plan \( \underline{\pi} \) if for all \( \underline{i} \) when \( t_i \in htps(\pi) \):
    \begin{enumerate*}[itemjoin={,\ }, itemjoin*={, and\ }]
        \item \( \underline{\propstate_i} \) satisfies the conditions of every snap action \( h \) at \( t_i \): if \( \langle t_i, h \rangle \in \tss{\pi} \) then \( pre(h) \subseteq \propstate_i \)
        \item the subsequent state \( \underline{\propstate_{i+1}} \) is \( \underline{\propstate_i} \) updated by the effects of all snap actions at \( t_i \): \( \propstate_{i+1} = (\propstate_i - Dels(t_i)) \cup Adds(t_i) \)
        \item active invariants are satisfied by the state: \( Invs(t_i) \subseteq \propstate_i \).
    \end{enumerate*}
\end{definition}

\donedavid{This has to be conditioned on non-interference}
\donedavid{Remove partial order reduction -- irrelevant here}
\begin{definition}[\(0\)- and \(\epsilon\)-separation]\label{def:zero_eps_sep}
    This specifies the minimum time separation between 
    two mutually exclusive (see \Cref{def:mutex_snap}) snap actions 
    \( \underline a \) 
    and \( \underline b \), 
    when scheduled in the induced parallel plan \( \tss{\pi} \).
    If \( \langle s, a \rangle \in \tss{\pi} \),
    \( \langle t, b \rangle \in \tss{\pi} \),
    \( a \neq b\), 
    and \( \mutex(a, b) \) are satisfied, 
    then the separation \( |t - s| \) must satisfy a constraint.
    In the case of \emph{\( 0 \)-separation} (\( 0 < |t - s| \)) and 
    in the case of \emph{\( \epsilon \)-separation} (\(0 < \epsilon \le |t - s| \)) 
    for a fixed \( \epsilon \).
    The latter is the same as the former when \( \epsilon = 0\).
    \todo{I don't know whether it is necessary to }
\end{definition}

\begin{definition}[Valid plan]\label{def:valid_plan}
    A plan \( \underline\pi \) is valid \( valid(\pi) \) w.r.t. a planning problem \( \mathcal{P} \equiv \langle P,A,I,G \rangle \) if:
    \begin{enumerate*}[itemjoin={,\ }, itemjoin*={, and\ }]
        \item there exists a valid state sequence \( \propstate_0, \ldots, \propstate_{m} \) for \( \underline\pi \) such that
        \item \( \propstate_0 \equiv I \)
        \item \( G \subseteq \propstate_{m} \)
        \item \( \dcsatf(a, d) \) for all \( \langle a,t,d \rangle \in \pi \) (see \cref{def:action}) 
        \item \( 0 \le d_i \)
        \item mutex constraints are satisfied according to \Cref{def:zero_eps_sep}.
    \end{enumerate*}
    \todo{Move definition of \(\dcsatf\) here?}
\end{definition}

\todomohammad{Make definition turse}
\begin{definition}[No self overlap]\label{def:no_self_overlap}
    \( \mathit{no\_self\_overlap(\pi)} \) holds when for all distinct \( \langle \plana_i, t_i, d_i \rangle \in \pi \) and \( \langle \plana_j, t_j, d_j \rangle \in \pi\),
    if there is an \emph{overlap} (\( t_i \le t_j \le t_i + d_i \))
    in the closed intervals \( (t_i, t_i + d_i) \) and \( \ (t_j, t_j + d_j) \),
    then \( \plana_i \) is not equal to \( \plana_j \).
    I.e., they do not refer to the same action \( a_k \).
\end{definition}

The mutex condition \Cref{def:mutex_snap} is necessary, 
because multiple actions simultaneously modifying and accessing a proposition \( p \) 
can lead to unintuitive effects,
such as two robots holding a single block,
as illustrated in \Cref{fig:invalid_actions}.

\todomohammad{What are the implications of undecidability here?}
\citet{giganteDecidabilityComplexity2022} proves that permission of arbitrary self-overlap leads to undecidability of planning problem.
It would remove the guarantee for the termination of any algorithm that decides whether a plan exists and,
therefore, not allow us to provide as strong guarantees of correctness as we do with our approach.

\commentdavid{I am making the assumption that the start effect includes the deletion of the \( \closedp(\doorv) \) predicate 
and the end effect includes the additions of \( \openp(\doorv) \). }
\donedavid{Notation for propositional variables should allow usage with arbitrary names.}

Now that we have defined the conditions for plan validity,
we can return to \Cref{ex:robots_blocks_doors,fig:plan_timeline}
to illustrate parts of a plan that would invalidate it and those that would not
according to \Cref{def:mutex_snap}

\input{plan_timeline}
\addtocounter{example}{-1}
\begin{example}[Rooms, Robots, Doors, Blocks (continued)]
Continuing our example, we assign an index \( i \) to some actions to uniquely identify them:\\
\( \nopa \equiv \opactsrt_{21} \equiv \opact(\robv_2,\doorv,\roomv_1) \),\\
\( \nmvrai \equiv \mvactsrt_{112} \equiv \mvact(\robv_1,\roomv_1,\roomv_2,\doorv) \),\\
\( \nmvraii \equiv \mvactsrt_{212} \equiv \mvact(\robv_2,\roomv_1,\roomv_2,\doorv) \), and\\
\( \ncla \equiv \clactsrt_{22} \equiv \clact(\robv_2,\doorv,\roomv_2) \).

The top right state in \Cref{fig:timeline_pictogram} characterises the instant in which \( \nopa \) ends.
Because \( {\nopa}_\dashv \) has not been applied, neither \( \openp(\doorv) \) nor \( \closedp(\doorv) \) is satisfied. \( \nmvraii \) cannot start executing when \( \nopa \) ends, 
because \( {\nopa}_\dashv \) adds \( \idlep(\robv_2) \) while \( {\nmvraii}_\vdash \) deletes \( \idlep(\robv_2) \), 
which is invalid according to \Cref{def:mutex_snap}.
This invalidates any immediate application of \( {\nmvraii}_\vdash \) to 
the top right state labelled \( t_3 \) to eventually obtain the state at \( t_6 \).
A valid alternative is shown in the left of \Cref{fig:timeline_pictogram}.
A delay of \( \delta_4 \equiv t_4 - t_3\) must separate \( {\nopa}_\dashv \) from \( {\nmvraii}_\vdash \),
which can be seen in the transition from \( t_3 \) to \( t_4 \) in the left.
Similarly, a delay of \( \delta_6 \equiv t_6 - t_5 \) must separate the end of \( \nmvraii \) from the start of \( \ncla \). 
On the other hand, 
since \( \nmvrai \) checks \( \openp(\doorv) \) in \( \ova(\nmvrai) \)
but neither in \( {\nmvrai}_\vdash \) nor \( {\nmvrai}_\dashv \)'s precondition,
\( \nmvrai \)'s execution can be started at \( t_3 \) and ended at \( t_6 \).
\end{example}

\paragraph{Timed Automata}
\donemohammad{Improve prelude}
\todomohammad{I am really wondering about whether it makes any sense to copy long lists of definitions. I think in fact you are using your own notation.}
\commentdavid{What does this mean? Should I leave these definitions?}
\donedavid{Replace {\textbackslash}citeauthor with {\textbackslash}citet}
We now expose the formalism of timed automata, 
which is the target of the reduction we present and verify here.
We present a simplified version derived from and compatible with a formalisation by \citet{wimmerTimedAutomataAFP2016} in Isabelle/HOL as this is the target of the reduction we verify.
This version of timed automata semantics,
which \citet{wimmerTimedAutomataAFP2016} derived from the work of~\citet{larsenUppaalNutshell1997}, has explicit semantics for variable assignments and concerns networks of timed automata, 
as opposed to the standard formulation, which concerns individual automata~\cite{timedautomata}.

\begin{definition}[Clock constraint]\label{def:clock_const}
    A \emph{clock constraint} \( \delta_x \) is defined w.r.t. a clock variable \( \underline x \) and a constant \( d \in \rat \) to be
    \( \delta_x \coloneq x \compop d \), 
    where \( \compop \in \{ <, \le, =, \ge, > \} \).
\end{definition}

\begin{definition}[Clock valuation]\label{def:clock_val}
    A \emph{clock valuation} \( \underline c \) is a function, which assigns a time \( t \in \rat \) to clocks \( \underline x \in \clocks \): e.g., \( c(x) = t \).
    A valuation \( \underline c \) satisfies a set of clock constraints \( \underline g \)
    if each constraint \( \delta_x \in g \) is satisfied by \( \underline c \).
    This can be deduced by evaluating all clock constraints in \( \underline g \) according to the rule:
    \( \ccval(c, x \compop d) \) if \( c(x) \compop d \).
    We denote the satisfaction of a set of constraints as: 
    \( \cEntails{c}{g} \).
\end{definition}
\donedavid{"Remove or make formal"
-- I remember a todo like this around here with the sentence about "well-formed" expressions being highlighted --David
}
\todomohammad{This was about the last and next two definitions. It still needs to be done. You need redo those three definitions. Start with being formal and then let me see whether we should be informal}

\begin{definition}[Boolean expression]\label{def:bexp}
    An \emph{expression} is value denoted by recursive application of operators to other expressions, constants, and variables.
    A well-formed arithmetic expression \( e \) is constructed from variables \( x \in \vars \)
    and constants \( d \in \rat \) using the following rules \( e \coloneq x\ |\ d \ |\ e \arithop e \),
    where \( \arithop \in \{ +, -, \times, / \} \).
    A well-formed Boolean expression \( b \) obeys the following rules 
    \( b \coloneq e \compop e\ |\ b \land b\ |\ \bot \), 
    where \( \compop \in \{ <, \le, =, \ge, > \} \).
    The sets of well-formed arithmetic and Boolean expressions given a set of numeric variables \( \vars \) are denoted as \( \exps{\vars} \) and \( \bexps{\vars} \) respectively.
\end{definition}

\donedavid{Theh following definition can be easily made shorter by using a notation that generalises over comparisons and arithmetic operations}
\begin{definition}[Variable assignment]\label{def:var_asmt}
    A \emph{variable assignment} \( \underline \Var \) is a function, 
    which assigns integer values \( \underline n \) to variables \( \underline z \): e.g. \( \Var(z) := n \). 
    Arithmetic expressions \( \underline e \) and \( \underline{\mathit{f}} \), and 
    Boolean expressions \( \underline x \) and \( \underline y \) are evaluated 
    using a variable assignment \( \Var \):
    \(\eval(\Var,x) \mathop{\coloneq} \Var(x) \), \(\eval(\Var,d) \mathop{\coloneq} d\), \\
    \(\eval(\Var,e \arithop \mathit{f}) \mathop{\coloneq} \eval(\Var,e) \arithop \eval(\Var, \mathit{f}) \),\\
    \( \bval(\Var,e \compop e) \coloneq \eval(\Var, e) \compop \eval(v, f)\),\\
    \( \bval(v,x \land y) \coloneq \bval(v, x) \land \bval(v, y)\), and\\
    \( \bval(v, \bot) \coloneq \bot \).

\end{definition}

\begin{definition}[Variable and clock updates]
    An \emph{update} for a variable \( z \in \vars \) is a pair 
    \( \langle z, e \rangle \) (also denoted as \( z \coloneq e \)), 
    where \( e \in \exps{\vars} \).
    A clock \( x \in \clocks \) can only be updated by resetting its value to \( 0 \).
    We denote single resets as \( x \coloneq 0 \).
    A set of clocks to be reset is denoted as \( r \subseteq \clocks \) and a set of variable updates as \( u \subseteq \vars \times \exps{\vars} \).
\end{definition}
\todo{Is "denote" used too often?}

\begin{figure}
\begin{tikzpicture}
    \begin{scope}[every state/.style={circle,thick,draw}, node distance=2]
        \node[state] (0) at (0,0) {$l_0$};
        \node[state] (1) [right=6.5cm of 0] {$l_1!$};
    \end{scope}
    \path [->] (0) edge node[above] {$ \langle (y = 2)? \land \top, \{x \ge 1\}, \{ y := y - 1 \}, \{ x \} \rangle$} (1);
    \draw (0) edge[->,out=300,in=330,loop] node[right] {$\delta = 1.5$} ();
    \draw (1) edge[->,out=240,in=210,loop] node[left=2mm] {Not allowed: $\delta$} node[sloped, midway] {$||$} ();
\end{tikzpicture}
\caption{A timed automaton with named locations.
The action transition checks the condition \((y = 2)? \land \top \) and the guard \( \{x \ge 1?\}\), applies the update \( \{ y := y - 1 \} \), and resets \( \{ x \} \).}
\label{fig:TA_example}
\end{figure}
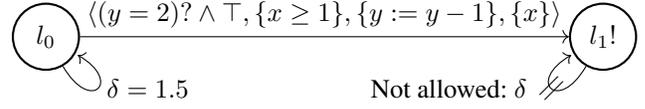

\begin{definition}[Timed Transition Relation]\label{def:timed_transition_relation}
    A \emph{timed transition relation}\footnote{\( \poset{S} \) denotes the powerset (set of all subsets) of a set \( S \).
    \donedavid{Maybe use $2^S$}} \( \Delta \subseteq \locs \times \bexps{\vars} \times \constraints{\clocks} \times \poset{\vars \times \exps{\vars}} \times \poset{\clocks} \times \locs \) contains \emph{transitions} \( t \equiv \langle l, b, g, u, r, l' \rangle \) consisting of 
    \begin{enumerate*}[itemjoin={,\ }, itemjoin*={, and\ }]
        \item an outgoing location \( l \)
        \item a condition \( b \)
        \item a guard \( g \)
        \item a set of updates \( u \)
        \item a set of resets \( r \)
        \item an incoming location \( l' \).
    \end{enumerate*}
    When the locations are clear from the context, 
    we omit the locations when describing transitions \( t \in \Delta \). 
    For example, \( t \equiv \langle l, b, g, u, r, l' \rangle \simeq \langle b, g, u, r \rangle \).
    We also combine conditions with guards and updates with resets in diagrams.
    E.g., \( \langle b, g, u, r \rangle \simeq \langle b \cup g, u \cup r \rangle \).
\end{definition}

\begin{definition}[Timed Automaton]\label{def:timed_automaton}
    A \emph{timed automaton} \( \mathcal{T} = \langle \locs, l_I, \clocks, \vars, \Delta \rangle \) is a tuple consisting of:
    \begin{enumerate*}[itemjoin={,\ }, itemjoin*={, and\ }]
        \item a set of locations: \( \locs \)
        \item an initial location: \( l_I \in \locs \)
        \item a set of clock variables: \( \clocks \)
        \item a set of integer variables: \( \vars \)
        \item a transition relation: \( \Delta \) (\Cref{def:timed_transition_relation}).
    \end{enumerate*}
\end{definition}

\Cref{fig:TA_example} shows example transitions in a timed automaton.

\begin{definition}[Network of Timed Automata]\label{def:network_of_timed_automata}
    A \emph{network of timed automata}\footnote{\citet{temporalPlanningRefinementMC} use the term \emph{System}} \( \auto \) is a list of timed automata, 
    e.g., \( \langle \auto_0, ..., \auto_n \rangle \),
    which share variables and clocks: 
    \( \auto_i = \langle \locs_i, l_{Ii}, \clocks_i, \vars_i, \Delta_i \rangle \) for all \( 0 \le i \le n \)
\end{definition}

\begin{definition}[Configuration]\label{def:configuration}
    A \emph{configuration} \( \langle \Loc, \Var, \Clk \rangle \) of a network of timed automata \( \auto \) is characterised as follows:
    \begin{enumerate*}[itemjoin={,\ }, itemjoin*={, and\ }]
        \item \( L = \langle l_0, \dots, l_n \rangle \) is an ordered list with the same length as \( \auto \), where each \( l_i \) records the current location of \( \auto_i \)
        \item \( v \) is an assignment to variables \( v \in \vars \) (see \Cref{def:var_asmt})
        \item \( c \) is a clock valuation for clocks \( x \in \clocks \) (see \Cref{def:clock_val}).
    \end{enumerate*}
    Moreover, we denote an initial configuration as: \( \langle \Loc_I, \Var_I, \Clk_I \rangle \).
\end{definition}

\donedavid{Define o/w very confusing due to bisumlations, etc}
\begin{definition}[Urgency]\label{def:urgency}
    Every automaton \( \auto_i \) is assigned a set of \emph{urgent} locations: \( urg_i \subseteq \locs_i \).
    Intuitively, when a \emph{configuration} contains an urgent location, 
    the network cannot make a transition that represents the passage of time.
    We mark urgent locations in diagrams with exclamation marks: e.g. "loc\textsubscript{1}!".
    \commentdavid{Wording? The semantics depend on this definition. 
    The meaning is hard to describe more formally without semantics.}
\end{definition}

\begin{definition}[Network of Timed Automata Transitions] \label{def:nta_semantics}
    The transition relation between \emph{configurations} is characterised by two kinds of transitions:
    \begin{itemize}
        \item \emph{Delay:} \( \langle \Loc, \Var, \Clk \rangle \longrightarrow_\delta \langle \Loc, \Var, \Clk' \rangle \).
        This transition represents the passage of a time period \( \delta \) by updating \( \Clk \) to \( \Clk' \), where \( \delta \ge 0 \).

        \item \emph{Internal:} \( \langle \Loc, \Var, \Clk \rangle \longrightarrow_{t} \langle \Loc', \Var', \Clk' \rangle \) is an update of the location of a single automaton \( \auto_i \) in the network according to a transition \( t \equiv \langle l, b, g, u, r, l' \rangle \) in \( \auto_i \)'s transition relation \( \Delta_i \).
    \end{itemize}
    The conditions and updates for a \emph{delay} transition are:
    \begin{itemize}
        \item Condition: No locations \( l_j \) in \( \Loc \) are \emph{urgent}.
        \emph{Urgent} locations prevent time delays. See \cref{def:urgency}.
        \item Update: \( \Clk'(x) = \Clk(x) + \delta \) for all clocks \( x \in \clocks \).
        We abbreviate this as \( \Clk' = \Clk \oplus \delta \).
    \end{itemize}
    The conditions and updates for an \emph{internal} transition from \( \bLvc{} \) to \( \bLvc{'} \) obtained from  \( t \equiv \langle l, b, g, u, r, l' \rangle \):
    \begin{itemize}
        \item Conditions: 
        \begin{enumerate*}[itemjoin={,\ }, itemjoin*={, and\ }]
            \item The location \( l_i \) of automaton \( \auto_i \) matches \( l \)
            \item the condition on variables is satisfied (\( \bEntails{\Var}{b}\))
            \item the guard is satisfied (\( \cEntails{\Clk}{g} \))
        \end{enumerate*}.
        \item Updates:
        \begin{enumerate*}[itemjoin={,\ }, itemjoin*={, and\ }]
            \item Only \( l_i \) is replaced with \( l' \) in \( \Loc \equiv \langle l_0, \dots, l_i, \dots, l_n \rangle \) to obtain \( \Loc' \equiv \langle l_0', \dots, l', \dots, l_n' \rangle \)
            \item the variable assignment is updated simultaneously for each update \( \langle z, e \rangle \in u \) by evaluating \( e \) using \( \Var \) and replacing the value of \( \Var(z) \) with the result\footnote{Only one update per variable and transition is allowed.}
            \item every clock \( x \in r \) is reset
            \commentdavid{There was an "and" highlighted in red after the comma. I don't know whether that indicated that I should move "and"'s to after Roman numerals or delete the specific "and". I think the latter, because I have seen "and"'s before Roman numerals in the temporal validator paper.}
        \end{enumerate*}.\\
        Specifically,
        \begin{enumerate*}[itemjoin={,\ }, itemjoin*={, and\ }]
            \item if \( j = i \) then \( l'_j = l' \) else \( l'_j = l_j \)
            \item if \( \langle a, e \rangle \in u \) then \( \Var'(a) = \eval(\Var, e) \) else \( \Var'(a) = \Var(a) \)
            \item if \( x \in r \) then \( \Clk'(x) = 0 \) else \( \Clk'(x) = \Clk(x) \).
        \end{enumerate*}
    \end{itemize}
    Transitions of an unspecified kind are denoted as \( \langle \Loc, \Var, \Clk \rangle \longrightarrow \langle \Loc', \Var', \Clk' \rangle \) and repeated application, 
    i.e., the transitive closure, 
    of the \( \longrightarrow \) relation is denoted as \( \langle \Loc, \Var, \Clk \rangle \longrightarrow^* \langle \Loc'', \Var'', \Clk'' \rangle \).
    This is possible if \( \langle \Loc, \Var, \Clk \rangle = \langle \Loc'', \Var'', \Clk'' \rangle \) or \( \langle \Loc, \Var, \Clk \rangle \longrightarrow^* \langle \Loc', \Var', \Clk' \rangle \longrightarrow \langle \Loc'', \Var'', \Clk'' \rangle \), for some $\langle \Loc', \Var', \Clk' \rangle$.
\end{definition}

\begin{definition}[Run]\label{def:run_path}
    A \emph{run} consists of valid alternating \( \longrightarrow_t \) and \( \longrightarrow_\delta \) transitions.\footnote{\(\delta \equiv 0\) is permitted.}
    A finite run is modeled by \citet{wimmerTimedAutomataAFP2016} as an inductive list of configurations \( \cfgs = [\langle \Loc_0, \Var_0, \Clk_0 \rangle, ..., \langle \Loc_m, \Var_m, \Clk_m \rangle ] \).
    \( \cfgs \) satisfies the conditions to be a run (\(run(\cfgs)\)) if:
    \begin{itemize}
        \item For each \( 0 \le k < m \) there exists a \( \Clk_k'\), s.t. \( \langle \Loc_k, \Var_k, \Clk_k \rangle \longrightarrow_\delta \langle \Loc_k, \Var_k, \Clk_k' \rangle \longrightarrow_t \langle \Loc_{k+1}, \Var_{k+1}, \Clk_{k+1} \rangle \)
    \end{itemize}
    
    When not otherwise specified, we use the term \emph{run} to refer to finite runs.
    
    \todo{Bad notation}
    \commentdavid{No idea what this referred to.}
\end{definition}
Model checkers for timed automata can check that properties expressed as Timed Computation Tree Logic (TCTL) formulas hold for a network of timed automata.
We consider the formula \( EF\phi \), where \( \phi \) is a predicate asserted on the remaining run.
It holds if a run exists that reaches a configuration from which the remaining run satisfies the property \( \phi \).

\paragraph{Isabelle and Higher-Order Logic}
Isabelle/HOL~\cite{IsabelleHOLRef} is an interactive theorem prover (ITP) for the development and checking of natural deduction proofs in Higher-Order Logic (HOL).
Isabelle combines automated proof checking with an environment for interactive proof development.
Statements to be proven can be declared and deconstructed into constituent parts.
The ITP attempts to search for proofs from given facts with the aid of automated theorem provers.

A locale~\cite{LocalesBallarin} binds parameters and adds assumptions.
Results can be proven within a locale at an abstract level.
If the assumptions are proven to hold with respect to constants or variables, 
these can be passed as arguments to the locale 
and the results in the locale apply to these arguments and can be reused later.
We use a locale from the formalisation in \citet{wimmerCertTimedAutomata} for concrete semantics for networks of timed automata.
For the semantics of temporal planning, we define a locale.



\donedavid{Use the same Isabelle listing environment as me and Thomas}
\donedavid{new line after sentences and commas.}

\todo{Use present tense.}
\commentdavid{I will try to when I edit.}
\section{Encoding}\label{sec:reduction}
This section describes a reduction of temporal planning problems to timed automata inspired by the work of \citet{temporalPlanningRefinementMC}. 
Their work is based on that of \citet{bogomolovPlanningModelChecking2014}. \donedavid{Unclear}

\todomohammad{re-read}
In addition to providing formal correctness guarantees, our proof relaxes some conditions to support more general notions of plan validity.
\citet{bogomolovPlanningModelChecking2014, temporalPlanningRefinementMC} 
prevent more than one action from starting or ending within an \( \epsilon \) interval using a global lock automaton, 
and thereby enforce mutex constraints (\Cref{def:mutex_snap}).
We instead use clock constraints like \citet{giganteDecidabilityComplexity2022} instead of a global lock automaton and thereby allow concurrent snap action execution.
As a result, some transitions must query the value of a number of clocks, which scales linearly with the number of actions.
Our approach thus proves a more general notion of unsolvability.

Like \citeauthor{temporalPlanningTimedAutomata2002}'s~(\citeyear{temporalPlanningTimedAutomata2002}), our work is a static translation.
We support fewer features of PDDL.
The features, which we support, 
closely correspond to a subset of those supported by a verified plan validator~\cite{aaai2022} and we conjecture this equivalence
(and will formally prove it in the future to complete the executable certificate checker).

\donedavid{Use present tense}

\donedavid{Usage of encoding header makes no sense here. Encoding is of a computational problem}

\donedavid{font of variables should be different}
\begin{definition}[Integer variables]\label{def:int_vars}
    We use the following integer variables \( \mathcal{V} \):
    \begin{enumerate}
        \item \label{ite:prop_var} \( \underline{\vp{p}} \) (\emph{binary variable}) for every proposition \( p \in \propositions \), 
            which records \( \underline{p} \)'s truth value encoded as \( 0 \) or \( 1 \)
        \item \label{ite:prop_lock} \( \underline{\lp{p}} \) (\emph{lock counter}) for every proposition \( p \in \propositions \), 
            which records the number of active actions.
            that require \( \underline{p} \) to be satisfied as invariant.
        \item \( \underline{\actsactive} \) records the total number of active actions.
        \item \( \underline{\pl} \) (\emph{planning state}) records one of:
        \( 0 \) (not started),
        \( 1 \) (planning),
        \( 2 \) (done)
    \end{enumerate}
\end{definition}

The binary variables (\Cref{def:int_vars}~\ref{ite:prop_var}) can encode states in the state-sequence (see \Cref{def:valid_state_sequence}).
The locks (\Cref{def:int_vars}~\ref{ite:prop_lock}) prevent action executions that violate invariants of plan actions.

\begin{definition}[Clocks]
    We assign two clocks per action \( \underline{a} \):
    \begin{itemize}
        \item \( ca_\vdash \) to record the time since \( \underline{a} \) has started (\emph{start clock})
        \item \( ca_\dashv \) to record the time since \( \underline{a} \) has ended (\emph{end clock})
    \end{itemize}
    When the kind (start/end) of a snap action \( h \) is unspecified, we use \( ch \) to denote one of such clocks.
\end{definition}

The clock variables are checked before transitions to ensure a minimum separation between mutually exclusive snap action transitions.

We omit the locations when describing transitions, i.e. \( t \simeq \langle b, g, u, r \rangle \).
Moreover, as all conditions are conjunctions, 
we characterise them using sets of their literals.

\begin{definition}[Main Automaton]\label{def:main_automaton}
An illustration of this automaton can be seen in \Cref{fig:action_automata}. 
The conditions and effects follow.\\
\( \initedge \):
        \( \langle \emptyset, \emptyset, \{ (\pl \coloneq 1) \} \cup \{ (\vp{p} \coloneq 1) |\ p \in \initstate \}, \emptyset \rangle \)\\
\( \goaledge \):
        \( \langle \{ (\actsactive = 0)? \} \cup \{ (\vp{p} = 1)? |\ p \in \goalstate \},\ 
        \emptyset,\\
        \{ (\pl \coloneq 2) \},\
        \emptyset \rangle \)\\
\( c \): \( \langle 
        \emptyset,\
        \emptyset,\
        \emptyset,\
        \emptyset \rangle \)
\end{definition}

\( \initedge \) initialises the propositional variables \( \vp{p} \) to 
encode the initial state \( \initstate \) of the planning problem and
\( \goaledge \) checks that no action is active and that a goal condition \( \goalstate \) is satisfied by the propositional variables. 

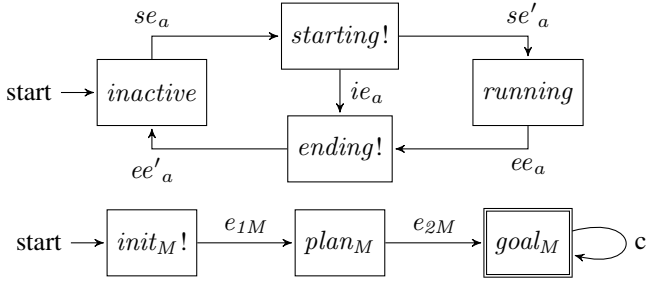
\begin{figure}
    \centering
    \begin{tikzpicture}
        \begin{scope}[
                every state/.style={
                    draw,
                    rectangle
                },
                >=stealth',
                shorten >=1pt,
                node distance=2,
                on grid
            ]
            \node[state, initial] (inactive) at (0,0) {\olc};
            \node[state] (starting) at (2.5,0.75) {\slc!};
            \node[state] (running) at (5, 0) {\rlc};
            \node[state] (ending) at (2.5,-0.75) {\elc!};
        \end{scope}
        \begin{scope}[>=stealth',shorten >=1pt,auto]
            \draw[->] (inactive) |- (starting) node[midway, above, anchor=south] {\( \sse_a \)};
            \draw[->] (starting) -| (running) node[midway, above, anchor=south] {\( \see_a \)};
            \draw[->] (running) |- (ending) node[midway, below, anchor=north] {\( \ese_a \)};
            \draw[->] (ending) -| (inactive) node[midway, below, anchor=north] {\( \eee_a \)};
            \draw[->] (starting) -- (ending) node[midway, anchor=west] {\( \inste_a \)};
        \end{scope}
        
        \begin{scope}[
            shift={(0,-2)},
            every state/.style={
                draw,
                rectangle
            },
            >=stealth',
            shorten >=1pt]
            \node[state, initial] (init) at (0,0) {\initloc!};
            \node[state] (plan) at (2.5, 0) {\planningloc};
            \node[state, accepting] (goal)  at (5, 0){\goalloc};
        \end{scope}
        \begin{scope}[>=stealth',shorten >=1pt,auto]
            \draw[->] (init) -- (plan) node[midway, above, anchor=south] {\( \initedge \)};
            \draw[->] (plan) -- (goal) node[midway, above, anchor=south] {\( \goaledge \)};
            \draw[->] (goal) edge [loop right] node {c} ();
        \end{scope}
    \end{tikzpicture}
    \caption{An automaton for an action \( \autonet_a \) from \citet{temporalPlanningRefinementMC} and the main automaton \( \autonet_M \).
    The network of timed automata to encode a planning problem contains a copy of \( \autonet_a \) for each action \( \underline{a} \) in the problem.}
    \label{fig:action_automata}
\end{figure}

\todo{This can be moved to the appendix, in case we need space. Otherwise, keep it here.}
\donedavid{More informal description before or after\dots}
\donedavid{$u$, $b$, and $g$ are unbound}

\begin{definition}[Action Automata]\label{def:action_automata}
    \Cref{fig:action_automata} shows the structure of an automaton \( \auto_a \), which encodes an action \( a \in \actions \).
    First, we characterise a few abbreviations and functions
    \begin{enumerate*}[itemjoin={,\ }, itemjoin*={, and\ }]
    \item \( \mathit{\mathit{mutex\_guards}}(b) \) checks clocks to ensure that no scheduled snap actions interfere:
    \( \bigwedge h \in \mathit{intf}(b).(0 < ch \land \epsilon \le ch) \),
    where \( \mathit{intf}(b) \) is the set of all snap actions \( h \) in the problem, 
    such that \( \mutex(b, h) \) (see \Cref{def:zero_eps_sep}, \Cref{def:mutex_snap} and  \citet{giganteDecidabilityComplexity2022})
    
    \item \( \mathit{\mathit{sat\_dur\_bounds}}(a) \) is satisfied when \( \dcsatf(a,\Clk(ca_\vdash)) \) is 
    (see \Cref{def:action}): 
    i.e. 
    \( \{ (l_a \triangleleft ca_\vdash)?, (ca_\vdash \triangleleft u_a)? \} \) where \( L(a) \equiv \langle l_a, \triangleleft \rangle \) and \( U(a) \equiv \langle \triangleleft, u_a \rangle \)
    
    \item \( \mathit{prop\_effs}(b) \) encode the effects of a snap action \( b \) on propositional variables; 
    it is defined as the union of
    \( \{ (\vp{p} \coloneq 0)\ |\ p \in dels(b) \land p \notin adds(b) \} \) and 
    \( \{ (\vp{p} \coloneq 1)\ |\ p \in adds(b) \} \)
    
    \item \( \mathit{pre\_sat}(b) \) are the preconditions encoded as boolean condition: 
    \( \bigwedge p \in pre(b).(\vp{p} = 1?) \)
    
    \item \( \mathit{eff\_sat\_invs}(b) \) checks that no effect invalidates an active invariant: 
    \( \bigwedge p \in \dels(b) - \adds(b). (\lp{p} = 0?) \)
    \end{enumerate*}.
    \donedavid{More informal description before or after\dots}
    \donedavid{Remove to informal description}
    The transitions' relevant conditions and effects are: 
    \begin{itemize}
        \item \( \sse_a \):
            \( \langle \mathit{pre\_sat}(a_\vdash) \land \mathit{eff\_sat\_invs}(a_\vdash),\ \allowbreak \mathit{mutex\_guards}(a_\vdash),\ \allowbreak \mathit{prop\_effs}(a_\vdash) \cup \{ ( \actsactive \coloneq \actsactive + 1 ) \},\ \allowbreak \{ ca_\vdash \} \rangle \)
        \item \( \see_a \):
            \( \langle \{ (\vp{p} = 1?) |\ p \in \ova(a) \} ,\ \allowbreak \emptyset,\ \allowbreak \{ (\lp{p} \nolinebreak \coloneq \nolinebreak \lp{p} \nolinebreak + \nolinebreak 1) |\ p \in \ova(a) \},\ \allowbreak \emptyset \rangle \)
        \item \( \ese_a \):
        \( \langle \emptyset,\ \allowbreak \mathit{mutex\_guards}(a_\vdash) \cup \mathit{sat\_dur\_bounds}(a),\ \allowbreak  \{ (\lp{p} \coloneq \lp{p} - 1) |\ p \in \ova(a) \},\ \allowbreak \{ ca_\dashv \} \rangle \)
        \item \( \eee_a \):
        \( \langle \mathit{pre\_sat}(a_\dashv) \land \mathit{eff\_sat\_invs}(a_\dashv),\ \allowbreak \emptyset,\ \allowbreak  \mathit{prop\_effs}(a_\dashv) \cup \{ (\actsactive \coloneq \actsactive - 1) \},\ \allowbreak \emptyset \rangle \)
        \item \( \inste_a \):
        \( \langle \emptyset,\ \allowbreak \mathit{mutex\_guards}(a_\dashv) \cup \mathit{sat\_dur\_bounds}(a),\ \allowbreak \emptyset,\ \allowbreak \{ ca_\dashv \} \rangle \)
    \end{itemize}
    Moreover, every transition is conditioned on \( (\pl = 1) \), 
    which ensures that no automaton \( \autonet{_a} \)
    starts a simulated execution of an action \( a \in \actions \) 
    before \( (\pl \coloneq 1) \) has been applied  
    and all variables have been initialised by \( \initedge \), 
    and prevents further changes
    once \( \goalstate \) prevents further action execution by setting \( \pl \) to \( 2 \).
\end{definition}
\donedavid{Check these again. -- They correspond to the conditions in the formalisation.}
\donedavid{Use subscript (\( \sse_a \)) notation for automata locations and transitions.}
\donedavid{Proof read. Combine with informal desc}

The defining features of the encoding in \Cref{def:action_automata} for an action \( \underline{a} \) 
are the encoding of the conditions of \( a_\vdash \) and \( a_\dashv \) 
into a location and two transitions each, 
and--most importantly--the addition of the \( \inste_a \) transition
to generalise the reduction to permit the scheduling of instantaneous snap actions.
The \emph{urgency} of \( \slc_a! \) and \( \elc_a! \) 
ensures that all transitions simulate the execution of a snap action occur in the same instant.

Encoding a snap action as two transitions
is necessary to capture the semantics of \( \ova(b) \) conditions in a valid state sequence (see \Cref{def:valid_state_sequence}).
Every update of the form \( (\vp{p} \coloneq 0) \) in an \( \eee_c \) or \( \sse_c \) edge is accompanied by a \( (\lp{p} = 0)\) check.
This check sometimes requires \( \lp{p} \) for some \( p \in \propositions \) to be decremented by some \( \ese_b \) for which \( p \in \ova(b) \).
In other instances \( \see_b \) can only increment \( \lp{p} \) after the \( \lp{p} = 0? \) check has occurred in some \( \sse_c \) or \( \eee_c \).

By relaxing the restriction in \Cref{def:valid_plan} 
from \( 0 < d \) to \( 0 \le d \) and adding \( \inste \), 
we use a more general characterization of durative actions to encode instantaneous snap actions.
The \( \inste_a \) transition can only be valid for an action \( a \in \actions \) with a lower duration bound of \( L(a) \equiv \langle 0, \le \rangle \), 
because \( \slc_a \) is urgent and any transition to \( \slc_a \) resets the corresponding \( ca_\vdash \) to \( 0 \).

\begin{definition}[Network for a planning problem]
    We assume that all \( \numacts \) actions in a planning problem (\Cref{def:planning_problem}) are uniquely identifiable as some \( a_i \), where \( i \) is an index and \( 1 \le i \le \numacts \).
    This also determines the order of the automata in the network (\Cref{def:network_of_timed_automata}).
    The \( 0^{th} \) automaton is the main automaton.
    The network is therefore \( [\auto_M, \auto_{a_1}, \dots, \auto_{a_\numacts} ] \).
\end{definition}

\donedavid{Always refer to figures}
\begin{definition}[Initial configuration]
The initial configuration \( \bLvc{_I} \) is characterised as follows.
The initial locations are
\( L_I \coloneq [\initloc, \olc_{a_1}, \ldots, \olc_{a_\numacts}, ] \).
These locations are marked with arrows in \Cref{fig:action_automata}.
The main automaton starts at \( \initloc \) and the automaton for every \( a_i \),
where \( 1 \le i \le \numacts \),
starts at \( \olc_{a_i} \).
The variable assignment and clock valuation are initialised to satisfy \( v_I(z) = 0 \) for any variable \( z \in \vars \) and \( c_I(x) = 0 \) for any clock \( x \in \clocks \).
\end{definition}

\begin{definition}[Acceptance condition]\label{enc:goal}
    See \Cref{fig:action_automata}.
    The acceptance condition for this network is the reachability of a configuration in which \( \auto_M\) is in the \( \goalloc \) location, i.e., \( \auto \vDash EF(loc(\auto_M) = \goalloc) \).
    This is satisfied if a valid run \( \cfgs \) exists, that contains a valuation \( \bLvc{_G} \) and \( \Loc_G \equiv [\goalloc, \dots] \).
\end{definition}

\donedavid{Rewrite this part}
\commentdavid{Deleted it and added the running example}

\section{Correctness}\label{sec:proof}
\donedavid{Ensure that all named functions are in \textbackslash mathit}
Our main contribution is a formally verified proof that the existence of a plan is sufficient to demonstrate the existence of a valid run in the encoding of temporal planning as timed automata in \Cref{sec:reduction}.
\citet{temporalPlanningRefinementMC} and \citet{bogomolovPlanningModelChecking2014} made informal arguments.
A proof that is verified and mechanically checked in Isabelle is as trustworthy as Isabelle's kernel.
This provides strong guarantees for our encoding.

This is an initial step towards certification of non-existence of plans using the verified certificate checker by \citet{wimmerCertTimedAutomata}, 
which can validate certificates of unreachability produced by unverified timed automata model checkers.


We construct a run
which simulates the remainder of an induced parallel plan (see \Cref{def:timed_snap_actions}).
Then we combine this with a run beginning at the initial configuration \( \bLvc{_I} \)
and append a final transition into an accepting state \( \bLvc{_G} \).

We first define a characterisation of the conditions under which each part of such a run can be created.
\begin{definition}[Encoded active state]\label{def:enc_active_state}
    \( \runat(a,t_i,\triangleleft) \) encodes whether an action \( \underline a \) is running 
    just after a time point \( t_i \):
    \( \runat(a,t_i,\triangleleft) \) is true iff there exist \( \underline s \) and \( \underline d \) s.t. \( \langle a, s, d\rangle \in \pi \) and \( s \triangleleft t_i \) and \( \lnot (s + d \triangleleft t_i) \).
    \( \runaft(a,t_i) \) is \( \runat(a,t_i,\le) \) and \( \runbef(a,t_i) \) is \( \runat(a,t_i,<) \).
\end{definition}
\begin{definition}[Encoded active time]\label{def:enc_active_time}
    \( \timesinceat(h,t_i,\triangleleft) \) is the time between \( t_i \) and 
    the last time a snap action \(  \underline h \) has been applied, including \( t_i \):
    \( \timesinceat(h,t_i,\triangleleft) \) is \( t_i - s\), 
    where \( \underline s \) is the greatest time point
    to which \( \langle s, h \rangle \in \tss{\pi} \) and \( s \triangleleft t_i \) apply. 
    If no such time exists, then \( \timesinceat(h,t_i,\triangleleft) \) evaluates to \( t_i + d \),
    where \( d \) is arbitrary and \( 0 < d \).
    \( \timesinceaft(h,t_i) \) is \( \timesinceat(h,t_i,\le) \) and \( \timesincebef(h,t_i) \) is \( \timesinceat(h,t_i,<) \)
\end{definition}
\begin{definition}[Encoded states]\label{def:enc_states}
    \donedavid{Why is this still there?; deleted it}
    \commentdavid{I thought it would be good to mention the main definition in a block at the start.}
    
    \( \encaft(\bLvc{_I}, i, \bLvc{_i}) \) encodes the state of a planning problem
    using \cref{def:enc_active_time,def:enc_active_state}.
    When a happening time point sequence (\Cref{def:htps}) is not empty, 
    \( \encaft(\bLvc{_I}, i, \bLvc{_i}) \) is satisfied if:
    \begin{enumerate*}[itemjoin={,\ }, itemjoin*={.\ }, label*=(\arabic*)]
        \item For any action \( a_j \in \actions \):
        \begin{enumerate*}[itemjoin={,\ }, itemjoin*={; and\ }, label=(\arabic{enumi}\alph*)]
            \item \( \Loc_i \) records the location \( l_j \) of \( \mathcal{A}_{a_j} \) as \( \olc_{a_j} \) if \( \lnot \runaft(a_j, t_i) \)
            \item \( \Loc_i \) records the location \( l_j \) as \( \rlc_{a_j} \) if \( \runaft(a_j, t_i) \)
            \item \( \Clk_i (x) \) evaluates to \( \timesinceaft(x,t_i) \) for all \( x = ca_{j\vdash} \) or \( x = ca_{j\dashv} \)
        \end{enumerate*}
        \item For any proposition \( p \in \propositions \):
        \begin{enumerate*}[itemjoin={,\ }, itemjoin*={, and\ }, label=(\arabic{enumi}\alph*)]
            \item \( \Var_i (\vp{p}) \) evaluates to \( 1 \) if \( p \in \propstate_{i+1} \) and \( 0 \) if \( p_k \notin \propstate_{i+1} \)
            \item \( \Var_i (\lp{p}) \) evaluates to \( |\{a\ |\ p \in \ova(a) \land \runaft(a, t_i) \}| \)
        \end{enumerate*}
    \end{enumerate*}.
    
    \todomohammad{Not clear how the following handles the case where the htp sequence is emptpy. Either remove or clarify. Seems like better as a footnote}
    \donedavid{$\actions$ overloaded between delay and actions}
    \tododavid{Low prio: Remove hard-coded \( A \)'s}
    When a happening time point sequence is empty, no time point \( t_i \) exists and, hence, 
    we cannot refer to any \( t_i \) in the definition of \( \encaft(\bLvc{_I}, i, \bLvc{_i}) \).
    Instead, we define \( \encaft(\bLvc{_I}, i, \bLvc{_i}) \) to hold for any \( i \) if the following hold:
    \begin{enumerate*}[itemjoin={,\ }, itemjoin*={.\ }, label*=(\arabic*)]
        \item For any action \( a_j \in \actions \):
        \begin{enumerate*}[itemjoin={,\ }, itemjoin*={, and\ }, label=(\arabic{enumi}\alph*)]
            \item \( \Loc_i\) records \( l_j \) as \( \olc_{a_j} \)
            \item \( 0 < \Clk_i (ca_{j\vdash}) \) and \( 0 < \Clk_i (ca_{j\dashv}) \)
        \end{enumerate*}
        \item For any proposition \( p \in \propositions \):
        \begin{enumerate*}[itemjoin={,\ }, itemjoin*={, and\ }, label=(\arabic{enumi}\alph*)]
            \item \( \Var_i (\vp{p}) \) evaluates to \( 1 \) if \( p \in \initstate \) 
                and \( 0 \) if \( p \notin \initstate \)
            \item \( \Var_i (\lp{p}) \) evaluates to \( 0 \).
        \end{enumerate*}
    \end{enumerate*}
    
    From \( \encaft(\bLvc{_I}, i, \bLvc{}) \)
    we derive \( \encbef(\bLvc{_I}, i, \bLvc{}) \)
    by substituting \( i \) for \( i+1 \) ,
    \( \runbef(a,t_i) \) for \( \runaft(a,t_i) \), and 
    \( \timesincebef(h,t_i) \) for \( \timesinceaft(h,t_i) \).
    
    \todomohammad{I think you might want to introduce an actual notation that generalises over these comparisons. That will help also with definition \cref{def:var_asmt}}
    \commentdavid{Maybe done}
\end{definition}
\donedavid{Either delete the "definition" keyword and refer to the appendix or write a longer definition here.}
\todomohammad{Decide whether to keep or shorten this definition or move it into the appendix}

We now construct a run \( [\bLvc{_{i;0}}, \dots, \bLvc{_{i;l_i}}] \) for every happening time point \( t_i \), which satisfies \( \encbef(\bLvc{_I},i,\bLvc{_{i;0}}) \) and \( \encaft(\bLvc{_I},i,\bLvc{_{i;l_i}}) \), where \( l_i \) indicates the length of the run.

Furthermore, if \( 0 < i \), then applying a delay of 
\( \delta_i = t_i - t_{i-1} \) to \( \bLvc{_{i-1;l_{i-1}}} \) creates \( \bLvc{_{i-1;l_{i-1}}'} \),
which is equivalent to \( \bLvc{_{i;0}} \) according to the conditions in \Cref{def:enc_states}.
E.g., \( \bLvc{_{i-1;l_{i-1}}} \longrightarrow_{\delta_i} \bLvc{_{i;0}} \).

Before we get to the proof, we return to \Cref{ex:robots_blocks_doors},
to informally illustrate our argument for the order of applied transitions.

\begin{figure}[t]
    \centering
    \begin{tikzpicture}
        \def\xA{6}, \def\yA{1.35},\def\xB{-1},\def\yB{-2.125}
        \path[use as bounding box, draw, dotted] (\xA,\yA) rectangle (\xB,\yB);
        \begin{scope}[
                every state/.style={
                    draw,
                    rectangle
                },
                >=stealth',
                shorten >=1pt,
                node distance=2,
                on grid
            ]
            \node[state] (inactive) at (0,0) {\olc};
            \node[state] (starting) at (2.5,0.75) {\slc!};
            \node[state] (running) at (5, 0) {\rlc};
            \node[state] (ending) at (2.5,-0.75) {\elc!};
            
            \node[anchor=north west] (lbl) at (-0.9,-0.75) 
                {\( \eee_{\nopa}( t_2, 3 ) \)};

            \node[below=1mm of lbl.south west, anchor=north west]
                {\( \langle \vp{\openp(\doorv)} \coloneq 1 \rangle \)};
                
            \node[anchor=north east] at (5.9,-0.75) {\( \ese_{\nopa} ( t_2, 1 ) \)};
        \end{scope}
        \begin{scope}[>=stealth',shorten >=1pt,auto]
            \draw[->] (inactive) |- (starting) 
                node[midway, above, anchor=south] {\( \sse_{\nopa} \)};
            \draw[->] (starting) -| (running) 
                node[midway, above, anchor=south] {\( \see_{\nopa} \)};
            \draw[->, dashed, draw=\nopacol] (running) |- (ending);
            \draw[->, dashed, draw=\nopacol] (ending) -| (inactive);
            \draw[->] (starting) -- (ending) 
                node[midway, anchor=west] {\( \inste_{\nopa} \)};
        \end{scope}
    \end{tikzpicture}
    
    \begin{tikzpicture}
        \def\xA{6}, \def\yA{2.125},\def\xB{-1},\def\yB{-2.125}
        \path[use as bounding box, draw, dotted] (\xA,\yA) rectangle (\xB,\yB);
        \begin{scope}[
                every state/.style={
                    draw,
                    rectangle
                },
                >=stealth',
                shorten >=1pt,
                node distance=2,
                on grid
            ]
            \node[state] (inactive) at (0,0) {\olc};
            \node[state] (starting) at (2.5,0.75) {\slc!};
            \node[state] (running) at (5, 0) {\rlc};
            \node[state] (ending) at (2.5,-0.75) {\elc!};
        \end{scope}
            
        \node[anchor=south west] (lbl) at (-0.9,0.75) 
            {\( \sse_{\nmvrai} ( t_2, 2 ) \)};
        \node[above=1mm of lbl.north west, anchor=south west] 
            {\( \langle (\vp{\openp(\doorv)} = 1)?, \lp{\openp(\doorv)} \pluseq 1 \rangle \)};
        
        \node[anchor=south east] at (5.9,0.75) {\( \see_{\nmvrai} ( t_2, 4 )\)};
        \node[anchor=north west] at (-0.9,-0.75) {\( \eee_{\nmvrai}  ( t_6, 3 ) \)};
        
        \node[anchor=north east] (lbl) at (5.9,-0.75) 
            {\( \ese_{\nmvrai} ( t_6, 1 ) \)};

        \node[below=1mm of lbl.south east, anchor=north east]
                {\( \langle \lp{\openp(\doorv)} \minuseq 1 \rangle \)};
        
        \begin{scope}[>=stealth',shorten >=1pt,auto]
            \draw[->, dashed, draw=\nmvraicol] (inactive) |- (starting);
            \draw[->, dashed, draw=\nmvraicol] (starting) -| (running);
            \draw[->, draw=\nmvraicol] (running) |- (ending);
            \draw[->, draw=\nmvraicol] (ending) -| (inactive);
            \draw[->] (starting) -- (ending) 
                node[midway, anchor=west] {\( \inste_{\nmvrai} \)};
        \end{scope}
    \end{tikzpicture}
    
    \begin{tikzpicture}
        \def\xA{6}, \def\yA{2.125},\def\xB{-1},\def\yB{-1.35}
        \path[use as bounding box, draw, dotted] (\xA,\yA) rectangle (\xB,\yB);
        \begin{scope}[
                every state/.style={
                    draw,
                    rectangle
                },
                >=stealth',
                shorten >=1pt,
                node distance=2,
                on grid
            ]
            \node[state] (inactive) at (0,0) {\olc};
            \node[state] (starting) at (2.5,0.75) {\slc!};
            \node[state] (running) at (5, 0) {\rlc};
            \node[state] (ending) at (2.5,-0.75) {\elc!};
        \end{scope}
        
        \node[anchor=south west] (abc) at (-0.9,0.75) 
            {\( \sse_{\ncla}( t_6, 2 ) \)};
        \node[above=1mm of abc.north west, anchor=south west]
            {\( \langle (\lp{\openp(\doorv)} = 0)?, \vp{\openp(\doorv)} \coloneq 0 \rangle \)};
        
        \node[anchor=south east] at (5.9,0.75) {\( \see_{\ncla} ( t_6, 4 ) \)};
        
        \begin{scope}[>=stealth',shorten >=1pt,auto]
            \draw[->, draw=\nclacol] (inactive) |- (starting);
            \draw[->, draw=\nclacol] (starting) -| (running);
            \draw[->] (running) |- (ending) 
                node[midway, below, anchor=north] {\( \ese_{\ncla} \)};
            \draw[->] (ending) -| (inactive) 
                node[midway, below, anchor=north] {\( \eee_{\ncla} \)};
            \draw[->] (starting) -- (ending) 
                node[midway, anchor=west] {\( \inste_{\ncla} \)};
        \end{scope}
    \end{tikzpicture}
 
    \caption{Examples of automata for actions from \Cref{fig:plan_timeline}.
    Dashed edges are applied to simulate \( t_2 \) (\rule[.5ex]{0.2em}{.4pt}\hspace{0.2em}\rule[.5ex]{0.2em}{.4pt}\hspace{0.2em}\rule[.5ex]{0.2em}{.4pt}\hspace{0.2em}\rule[.5ex]{0.2em}{.4pt}) and solid ones to simulate \( t_6 \) (\rule[.5ex]{1.4em}{.4pt}).
    The order is indicated w.r.t. each time point.}
    \label{fig:construct_run}
\end{figure}
\addtocounter{example}{-1}
\begin{example}[Rooms, Robots, Doors, Blocks (continued)]
    In \Cref{fig:construct_run} we see an example of a few transitions' conditions
    and effects in automata for actions from \Cref{fig:plan_timeline}.
    We can see that some transitions to simulate snap action executions
    at a time point must occur in a certain order.
    \( \eee_{\nopa} \) sets \( \vp{\openp(\doorv)} \) to \( 1 \) at \( t_2 \)  
    and \( \vp{\openp(\doorv)} \) must equal \( 1 \) before \( \see_{\nmvrai} \) can be taken.
    Since \( \see_{\nmvrai} \) has incremented \( \lp{\openp(\doorv)} \) by \( 1 \)
    it must be decremented by \( \ese_{\nmvrai} \) before  \( \sse_{\ncla} \) can be taken at \( t_6 \).
\end{example}

\donedavid{Maybe remove the above paragraph.}
\donemohammad{I think this paragraph is good. Leave it as it is}

\begin{lemma}\label{lem:plan_trans_possible}
    Given a configuration \( \encaft(\bLvc{_I},0,\bLvc{_0}) \),
    it is possible to prove the existence of a configuration \( \bLvc{_k} \) 
    and a run \( [\bLvc{_0}, \dots, \bLvc{_k}] \),
    such that \( run([\bLvc{_0}, \dots, \bLvc{_k}]) \)
    and \( \encaft(\bLvc{_I},k,\bLvc{_k}) \) are satisfied,
    where \( k \) is than or equal to the length of \( htps(\pi) \).
\end{lemma}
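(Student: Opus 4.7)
The plan is to prove this by induction on $k$, where $k$ ranges over the indices of the happening time point sequence $htps(\pi)$. The base case $k=0$ is immediate: the singleton list $[\bLvc{_0}]$ trivially satisfies $run$, and the encoding invariant holds by assumption.

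For the inductive step, I would assume a run $[\bLvc{_0}, \dots, \bLvc{_i}]$ with $\encaft(\bLvc{_I},i,\bLvc{_i})$ and construct an extension reaching some $\bLvc{_{i+1}}$ with $\encaft(\bLvc{_I},i+1,\bLvc{_{i+1}})$. First I would apply a delay transition of length $\delta = t_{i+1} - t_i$ (or an arbitrary positive delay if the htps is empty) to produce an intermediate configuration satisfying $\encbef(\bLvc{_I},i+1,\cdot)$; this is legal because no location in $\bLvc{_i}$ is urgent (all automata are in $\olc$ or $\rlc$, per clause (1a)/(1b) of \Cref{def:enc_states}). Then I would interleave the action-automaton transitions that together simulate all snap actions at $t_{i+1}$: for every action ending at $t_{i+1}$, take its $\ese_a$ followed by $\eee_a$; for every action starting at $t_{i+1}$, take its $\sse_a$ followed by $\see_a$; and for every instantaneous action scheduled at $t_{i+1}$, take the $\inste_a$ shortcut. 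Because $\slc_a$ and $\elc_a$ are urgent (\Cref{def:urgency}), no time can elapse between these sub-transitions, so they all happen at $t_{i+1}$ as required.

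The key technical step is choosing the \emph{order} in which to fire these sub-transitions so that each firing's condition/guard is enabled. I would carry out the firings in the following phases: first all $\ese_a$ of ending actions (these only decrement locks), then all $\eee_a$ of ending actions and $\sse_a$ of starting actions, scheduled in a topological order consistent with the dependencies between $\mathit{prop\_effs}$, $\mathit{pre\_sat}$, and $\mathit{eff\_sat\_invs}$, and finally all $\see_a$ of starting actions (which only increment locks). Existence of such a topological order is guaranteed by the mutex condition in \Cref{def:valid_plan}: if two snap actions co-occur at $t_{i+1}$ they cannot be mutex, so their read/write sets do not conflict, and the lock-counter decrement-before-check pattern described in \Cref{def:action_automata} is schedulable exactly when invariants are consistent with $\propstate_{i+1}$. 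Guard satisfaction for $\mathit{mutex\_guards}$ reduces to $\epsilon$-separation (\Cref{def:zero_eps_sep}) together with the fact that the relevant clock $ca_\vdash$ or $ca_\dashv$ records $\timesinceaft$ in the pre-delay configuration and is incremented by $\delta$, while $\mathit{sat\_dur\_bounds}$ follows from $\dcsatf$.

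After executing this block of transitions I would re-establish $\encaft(\bLvc{_I},i+1,\bLvc{_{i+1}})$ clause by clause: the location clauses (1a)/(1b) are direct from which phase each automaton ended in; the clock clause (1c) follows because we reset precisely those clocks whose snap actions fired at $t_{i+1}$; the propositional variable clause (2a) follows from $\propstate_{i+2} = (\propstate_{i+1} \setminus Dels(t_{i+1})) \cup Adds(t_{i+1})$ in \Cref{def:valid_state_sequence} combined with the definition of $\mathit{prop\_effs}$; and the lock-counter clause (2b) follows from matching $\see$/$\ese$ pairs. I expect the main obstacle to be the scheduling argument — formally producing the topological order inside Isabelle and discharging the invariant-preservation side conditions at each intermediate sub-configuration, since Isabelle will not easily accept a hand-wavy ``choose any consistent order''; I would likely factor this out into an auxiliary lemma that constructs the ordered list of sub-transitions explicitly from the non-mutex structure of the co-scheduled snap actions, and then reason uniformly about prefixes of that list.
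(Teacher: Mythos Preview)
Your proposal follows the same skeleton as the paper's proof: induction on $k$, trivial base case, delay to reach $\encbef$, then fire $\ese$ edges first (to release invariant locks), fire $\see$ edges last (to acquire them), and handle the precondition-checking and effect-applying transitions in between. Two points are worth flagging.

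First, the topological order you propose for the middle phase is an unnecessary complication. The paper's argument is that because co-scheduled snap actions are pairwise non-mutex (\Cref{def:valid_plan}), no effect of one can touch a precondition, a mutex-guard clock, or a lock-check of another; hence \emph{any} fixed linear order works for the $\eee$ and $\sse$ edges. The paper accordingly just fixes one: all instant-action blocks, then all $\eee$ of ending actions, then all $\sse$ of starting actions. Your dependency-based scheduling would also work, but it is strictly harder to formalise in Isabelle and buys nothing over the fixed order.

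Second, your handling of instantaneous actions has a gap. You mention the $\inste_a$ shortcut once but then drop instantaneous actions entirely from your three-phase description. An action with duration $0$ at $t_{i+1}$ sits in $\olc$ beforehand, so it cannot take $\ese_a$ (wrong source location), and it must \emph{not} take $\see_a$: its $\ova$ condition need not hold at $t_{i+1}$, since invariants are only required strictly between start and end. The paper therefore fires, for each instant action, the consecutive triple $\sse_a;\ \inste_a;\ \eee_a$ and places this block in the middle phase alongside the ordinary $\eee$ and $\sse$ transitions. Without making this explicit, your phase-two construction does not cover zero-duration actions, and folding them into ``starting'' or ``ending'' would be unsound.
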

\donedavid{As I said in the meeting, it is better to do the induction on indices: no one other than theorem proving people understands this list cons notation}
\todomohammad{Possible improvements to the proof: include a figure showing the construction of the sequence. Recall: this induction is simply an algorithm, so it can be illustrated using a figure}
\donedavid{I am confused. The options I can think of are: (a) a flowchart, (b) drawing two or three automata and then labelling the edges that are taken with their order (for one timepoint), (c) pseudocode, (d) \Cref{fig:plan_timeline} (likely not right). 

I decided on \Cref{fig:construct_run}}
\begin{proof}[Proof (by induction on \( k \) as natural number)]$ $\newline
    \emph{\textbf{Base case}: \( k = 0 \).}
    
    Since a single configuration is a run, we obtain the fact: \( run([\bLvc{_0}]) \).
    From the assumption \( \encaft(\bLvc{_I},0,\bLvc{_0}) \) 
    and \Cref{def:enc_states} we trivially deduce \( \encaft(\bLvc{_I},0,\bLvc{_0}) \), 
    which completes this case.
    
    \noindent\emph{\textbf{Inductive case}: \( k = i + 1 \) for an arbitrary \( i \).}
    
    We can assume the existence of some \( i \) and \( \bLvc{_i} \), 
    which satisfy \( \encaft(\bLvc{_I},i,\bLvc{_i}) \)
    and \( run([\bLvc{_I}, \dots, \bLvc{_i}]) \) from the inductive hypothesis.
    
    We can prove that \( \bLvc{_i'} \) exists, 
    where \( \encbef(\bLvc{_I},i+1,\bLvc{_i'}) \) and 
    \( \bLvc{_i} \longrightarrow_{\delta_{i+1}} \bLvc{_i'} \) 
    where \( \delta_{n} = t_{n} - t_{n-1} \).
    This is because \( t_{i+1} \) is the immediate successor time point of \( t_i \)
    according to \Cref{def:htps},
    and, therefore, only the durations recorded by \( \Clk{_i} \) 
    must be incremented to satisfy \( \encbef \).
    The properties encoded by \( \Var \) and \( \Loc \)
    refer to propositions and states of actions,
    which can only change at happening time points, 
    of which there are none between \( t_i \) and \( t_{i+1} \).

    Hence, we know of the existence of some \( \bLvc{_i'} \) that
    satisfies \( \encbef(\bLvc{_I},i+1,\bLvc{_i'}) \).
    From \Cref{def:valid_plan}, we can deduce that this permits the network to simulate
    the execution of all snap actions at \( t_{i+1} \) in at least one order.

    An action can either be scheduled to \emph{start} at \( t_{i+1} \), \emph{end}, or both.
    Due to the no self-overlap condition (\cref{def:no_self_overlap}), 
    if an action is scheduled to start and end,
    it must be scheduled with a duration of \( 0 \), i.e. be \emph{instantaneous}.

    All automata \( \autonet{_a} \) for all \emph{ending} actions \( a \) must transition from
    \( \rlc_a \) to \( \elc_a \). 
    This can decrement some \( \lp{p} \) for \( p \in \ova(a) \) to \( 0 \).
    This then permits \( \eee_b \) and \( \sse_b \) transitions to be taken
    without violating some  \( (\lp{p} = 0)? \) conditions, 
    if \( b_\vdash \) and \( b_\dashv \) are scheduled at \( t_{i+1} \) in \( \tss{\pi} \).

    Automata for \emph{instantaneous} actions \( a \) 
    cannot not take their \( \see_a \) and \( \see_a \) transitions at all,
    skipping them by taking \( \inste_a \) transition.
    \( \see_a \)'s guard, which checks \( \Var(\vp{p}) = 1 \) for \( p \in \ova(a) \),
    is not guaranteed to be satisfied, 
    because \( \ova \) conditions are not active in the instants an action starts or ends.
    
    All \emph{instantaneous} actions \( a \) must take \( \sse_a \), 
    \( \inste_a \), and \( \eee_a \) transitions.
    
    Then all \emph{starting} snap actions' automata must take their \( \sse_a \)
    and all \emph{ending} snap actions' automata must take their \( \eee_a \) transitions.
    This applies their effects.

    Finally, any automaton \( \autonet_a \) for a \emph{starting} snap action \( a_\vdash \)
    can take its \( \see_a \) transition to lock the invariants \( \ova(a) \)
    by incrementing \( \lp{p} \) for all \( p \in \ova(a) \).
    
    Transitions in this order is possible 
    and leads to a configuration satisfying \( \encaft(\bLvc{_I},i+1,\bLvc{_{i+1}}) \) because:
    \begin{enumerate*}[itemjoin={,\ }, itemjoin*={, and\ }]
        \item no mutex constraints (see \Cref{def:mutex_snap}) in transitions' guards are violated, 
            due to  non-interference of snap actions in a valid plan (see \Cref{def:zero_eps_sep})
        \item the precondition on \( \vp{p} \) where \( p \in \pres(h) \) 
            of every snap action \( h \) scheduled at \( t_i \) 
            is satisfied by every intermediate \( \Var' \) in this part of the run, 
            because any precondition \( \pres(h) \) of a scheduled snap action \( h \)
            is already satisfied by \( M_i \) (see \Cref{def:valid_state_sequence})
            and mutex constraints (see \Cref{def:mutex_snap}) ensure that neither 
            \( \adds(g) \) nor \( \dels(g) \) include any \( p \) in \( \pres(h) \) 
            if \( g \) and \( h \) are scheduled at \( t_i \)
        \item no \( ca_\vdash \) or \( ca_\dashv \) clock that is reset by any transition 
            interferes with any other transition's guard, 
            because no mutually exclusive 
            snap actions can be scheduled at \( t_i \) (see \Cref{def:valid_plan,def:zero_eps_sep})
        \item duration constraints are satisfied (\( \dcsatf(a,ca_\vdash) \)) for all \emph{ending} snap actions.
    \end{enumerate*}
    This is illustrated in \Cref{fig:construct_run}.
    
    This completes this part of the run and lets us obtain \( \bLvc{_{i+1}} \),
    where \( \encaft(\bLvc{_I},i+1,\bLvc{_{i+1}}) \) 
    and \( run([\bLvc{_i'}, \dots, \bLvc{_{i+1}}]) \).
    Furthermore, with \( \bLvc{_i} \longrightarrow_{\delta_{i+1}} \bLvc{_i'} \)
    we deduce \( run([\bLvc{_i}, \dots, \bLvc{_{i+1}}]) \).
    From the induction hypothesis we obtain the facts \( run([\bLvc{_0}, \dots, \bLvc{_{i}}]) \).
    By combining these two runs we obtain the fact \( run([\bLvc{_{i}}, \dots, \bLvc{_{i+1}}]) \),
    which completes the proof with \( \encaft(\bLvc{_I},i+1,\bLvc{_{i+1}}) \).
\end{proof}
\donedavid{I have merged the lemmas and used an induction on lists.}

            
            

\donedavid{I don't get the difference between $\langle \Loc, \Var, \Clk \rangle$ and $\langle \Loc_0, \Var_0, \Clk_0 \rangle$. This needs to be taken care of in this prose occuring before the lemma statement}
From a configuration \( \bLvc{_I} \), which encodes the initial state \( \initstate \) of the planning problem at \( t_0 \), it is possible to reach a configuration \( \bLvc{_k} \), 
which encodes the state of the world after the final happening time point \( t_k \) of the plan.
To prove this, we can use \Cref{lem:plan_trans_possible}.
To use \Cref{lem:plan_trans_possible}, we need a \( \bLvc{_0} \), 
which encodes the state of the planning problem after \(t_0\). 
This \( \bLvc{_0} \) can be reached from \( \bLvc{_I} \), 
which we prove now.

\begin{lemma}\label{lem:init_trans_possible}
    There exists a finite run \( [\bLvc{_I}, \dots, \bLvc{_0}] \) from the initial configuration
    \( \bLvc{_I} \) to some configuration \( \bLvc{_0} \), which satisfies \( \encaft(\bLvc{_I},0,\bLvc{_0}) \).
\end{lemma}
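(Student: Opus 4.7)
The plan is to build the desired run in three phases. First, from $\bLvc{_I}$ take the internal $\initedge$ transition of $\auto_M$. This is enabled unconditionally (empty condition, empty guard) and moves the main automaton from $\initloc$ to $\planningloc$ while setting $\pl \coloneq 1$ and $\vp{p} \coloneq 1$ for every $p \in \initstate$. Because $\initloc$ is urgent, this must be taken before any positive delay; the alternating delay-then-internal structure required by Definition \ref{def:run_path} is met by prepending a zero-length delay. The resulting intermediate configuration has locations $[\planningloc, \olc_{a_1}, \dots, \olc_{a_\numacts}]$, variables $\vp{p}$ encoding $\initstate$, all $\lp{p} = 0$, $\actsactive = 0$, and all clocks still at $0$.

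Second, apply a delay transition of magnitude $t_0 + d$ for an arbitrary fixed $d \ge \epsilon$ when $htps(\pi)$ is non-empty, or of any magnitude $d > 0$ when $htps(\pi)$ is empty. The delay is permitted since no location in the reached configuration is urgent. After it, every clock evaluates to a strictly positive value that matches the $\timesinceaft$ contribution prescribed by Definition \ref{def:enc_states} for any action not yet fired, and simultaneously satisfies every subsequent mutex clock guard of the form $(0 < ch \land \epsilon \le ch)$. If $htps(\pi)$ is empty, the configuration so reached already satisfies the empty-$htps$ clause of $\encaft(\bLvc{_I},0,\bLvc{_0})$ and the lemma is established.

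Third, when $htps(\pi)$ is non-empty, process the snap actions scheduled at $t_0$ by replaying the ordering argument of Lemma \ref{lem:plan_trans_possible}'s inductive step. The situation is strictly simpler than the general inductive case because no action is yet running: only $\sse$, $\inste$, $\eee$, and $\see$ transitions for actions starting or occurring instantaneously at $t_0$ fire, and no $\ese$ transitions are needed. Precondition satisfaction at $t_0$ follows from $\initstate = M_0$ together with Definition \ref{def:valid_state_sequence}; mutex clock guards hold by Definition \ref{def:zero_eps_sep} and the chosen $d$; and duration bounds are trivial for instantaneous actions at $t_0$. Concatenating the three subruns, inserting zero-length delays between consecutive internal transitions wherever the run definition requires, yields a finite run whose endpoint $\bLvc{_0}$ satisfies $\encaft(\bLvc{_I},0,\bLvc{_0})$. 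The main obstacle is to ensure that the single delay chosen in phase two jointly meets both the strictly-positive-clock requirement of $\encaft$ and the $\epsilon$-separation guards of every snap action transition fired in phase three; fixing $d = \epsilon$ discharges both uniformly.
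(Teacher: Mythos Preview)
Your argument is correct and follows essentially the same three-stage structure as the paper: take $\initedge$, then delay, then case-split on whether $htps(\pi)$ is empty, invoking the snap-action ordering of Lemma~\ref{lem:plan_trans_possible} in the non-empty case. The paper phrases the intermediate point after the delay as a configuration satisfying $\encbef(\bLvc{_I},0,\cdot)$ and then appeals to the $\encbef \to \encaft$ machinery, whereas you argue more directly without naming that invariant; this is a presentational difference only. Your explicit remark that no $\ese$ transitions are needed at $t_0$ (because nothing can be running before the first happening time point) is a nice simplification the paper leaves implicit. One small slip: your closing claim that ``fixing $d = \epsilon$ discharges both uniformly'' fails in the $0$-separation regime where $\epsilon = 0$, since the mutex guard still demands $0 < ch$; you need $d > 0$ (and $d \ge \epsilon$), which your earlier ``arbitrary fixed $d \ge \epsilon$'' already allows, so just drop the final parenthetical or pick $d > \epsilon$.
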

\begin{proof}[Proof (by cases on the size of \( htps(\pi) \))]$ $\newline
    The transition \( \bLvc{_I} \longrightarrow_{\initedge} \bLvct{_I'}{_I'}{_I} \)
    is possible for some \( \bLvct{_I'}{_I'}{_I} \). 
    \( \Loc_I' \) and \( \Var_I' \) satisfy the conditions in
    \( \encbef(\bLvc{_I},0,\bLvct{_I'}{_I'}{_I}) \) in \Cref{def:enc_states},
    because no action is active before \( t_0 \) and
    all \( \Var(\vp{p}) \) for \( p \in \propositions \) encode the propositional state \( \initstate \). 

    Since no snap actions has been scheduled before \( t_0 \),
    for all \( \Clk_I (ch) \) to satisfy \( \timesincebef(h,t_0) \), 
    the values need to be incremented by an arbitrary constant
    \( \delta_0 \) where \( 0 < \delta_0 \). 
    This is achievable using a \( \bLvct{_I'}{_I'}{_I} \longrightarrow_{\delta_0} \bLvc{_I'} \) transition.

    Therefore, there exists a configuration \( \bLvc{'}\)
    such that \( run([\bLvc{_I}, \dots, \bLvc{_I'}]) \) and
    \( \encbef(\bLvc{_I},0,\bLvc{_I'})\).
    
    We proceed with the proof by splitting the cases.
    
    \noindent\emph{\textbf{Case 1}. \( htps(\pi) \) is empty.}
    In this case, \( \encbef(\bLvc{_I},0,\bLvc{_I'}) \)
    is equivalent to \( \encaft(\bLvc{_I},0,\bLvc{_I'}) \),
    due to the definition of both in case \( htps(\pi) \) is empty,
    which concludes the proof. 
    See \cref{def:enc_states}.

    \noindent\emph{\textbf{Case 2}. \( htps(\pi) \) is not empty.}
    In this case, \( \encbef(\bLvc{_I},0,\bLvc{_I'}) \)
    permits the construction of a run \( [\bLvc{_I'}, \dots, \bLvc{_0}] \) 
    such that \( \encaft(\bLvc{_I},0,\bLvc{_0}) \) and
    \( run([\bLvc{_I'}, \dots, \bLvc{_0}]) \) are satisfied
    using reasoning like in \Cref{lem:plan_trans_possible}.
    This run can be combined with \([\bLvc{_I}, \dots, \bLvc{_I'}]\)
    to construct \([\bLvc{_I}, \dots, \bLvc{_0}]\).
\end{proof}
A configuration \( \bLvc{_G} \) satisfying the goal condition can be reached from \( \bLvc{_k} \) in one transition, 
if \( \bLvc{_k} \) encodes the state of the planning problem after the final happening time point \( t_k \in htps(\pi) \).

\begin{lemma}\label{lem:goal_trans_possible}
    We assume that a configuration \( \bLvc{_k} \) satisfies \( \encaft(\bLvc{_I},k,\bLvc{_k}) \),
    where \( t_k \) is the final happening time point in \( htps(\pi) \).
    We also assume that \( \Loc_G \) records \( \planningloc \) for \( \autonet{_M} \).
    Then there exists a valid run \( [\bLvc{_k}, \dots, \bLvc{_G} ] \) ending in a configuration 
    in which \( \Loc_G \) records \( \goalloc \) for \( \autonet{_M} \).
\end{lemma}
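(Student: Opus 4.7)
The plan is to prove this lemma by exhibiting a single $\goaledge$ transition from $\bLvc{_k}$ to the desired goal configuration $\bLvc{_G}$. Recalling \Cref{def:main_automaton}, the $\goaledge$ transition from $\planningloc$ to $\goalloc$ has the Boolean condition $\{(\actsactive = 0)?\} \cup \{(\vp{p} = 1)?\ |\ p \in \goalstate\}$, an empty guard, and applies only the update $\pl \coloneq 2$. So there are exactly two proof obligations: showing $\Var_k(\actsactive) = 0$ and showing $\Var_k(\vp{p}) = 1$ for every $p \in \goalstate$.

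For the first obligation, I would use the fact that $t_k$ is the \emph{final} happening time point of $htps(\pi)$. By \Cref{def:htps} and \Cref{def:timed_snap_actions}, the end time $s + d$ of any plan action $\langle a, s, d \rangle \in \pi$ corresponds to a time point in $htps(\pi)$, hence $s + d \le t_k$. Consequently, $\lnot \runaft(a_j, t_k)$ holds for every $a_j \in \actions$, so by \Cref{def:enc_states} every automaton $\autonet_{a_j}$ is at the location $\olc_{a_j}$ in $\bLvc{_k}$. Since $\actsactive$ is maintained as the count of active actions (incremented by every $\sse_a$ and decremented by every $\eee_a$), an easy invariant argument—tying $\Var_k(\actsactive)$ to the cardinality of $\{a_j : \runaft(a_j, t_k)\}$—gives $\Var_k(\actsactive) = 0$. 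I would likely discharge this invariant as a small auxiliary lemma accompanying \Cref{def:enc_states}.

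For the second obligation, I would appeal to validity of the plan (\Cref{def:valid_plan}): the valid state sequence $M_0, \dots, M_m$ satisfies $\goalstate \subseteq M_m$, and $M_m$ here is $M_{k+1}$ (the state right after the final happening time point). By \Cref{def:enc_states}, $\Var_k(\vp{p}) = 1$ whenever $p \in M_{k+1}$, so in particular $\Var_k(\vp{p}) = 1$ for every $p \in \goalstate$. Combined with the previous step, the Boolean condition of $\goaledge$ holds under $\Var_k$; the empty clock guard is trivially satisfied; and since the main automaton is at $\planningloc$ by hypothesis, the internal transition $\bLvc{_k} \longrightarrow_{\goaledge} \bLvc{_G}$ fires, yielding a configuration $\bLvc{_G}$ with $\Loc_G$ recording $\goalloc$ and $\Var_G(\pl) = 2$.

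The only mildly subtle point is the bookkeeping of $\actsactive$: \Cref{def:enc_states} does not mention $\actsactive$ directly, so I would strengthen the encoding invariant (either here or uniformly earlier) to state that $\Var_i(\actsactive) = |\{a \in \actions : \runaft(a, t_i)\}|$, and verify this invariant is preserved by every transition constructed in \Cref{lem:plan_trans_possible} and \Cref{lem:init_trans_possible}. Once that invariant is in place, the present lemma follows by a one-line case analysis. I therefore expect the main obstacle to be not the proof itself but ensuring the preceding encoding invariant is stated strongly enough; the run construction is a single internal transition with no delay component, so no mutex or duration reasoning is needed.
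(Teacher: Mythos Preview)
Your proposal is correct and matches the paper's own proof: both exhibit the single $\goaledge$ transition and discharge its two Boolean conditions by (i) arguing that no action is running after the final happening time point so $\Var_k(\actsactive)=0$, and (ii) using $\goalstate\subseteq\propstate_m$ together with the $\encaft$ invariant on $\vp{p}$. Your remark that \Cref{def:enc_states} must be strengthened to track $\actsactive$ is exactly right---the formalisation does precisely this (see $\encaabef$ in the appendix), and the paper's proof implicitly relies on it via the start/end pairing argument.
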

\begin{proof}
    \( \bLvc{_k} \mathop{\longrightarrow_{\goaledge}} \bLvc{_G} \) is possible if \( encoded\_after(\bLvc{_I},k,\bLvc{_k}) \) holds.

    Every starting snap action \( a_\vdash \) scheduled before or at \( t_k \) can be paired to an ending snap action \( a_\dashv \) before or at \( t_k \).
    
    Therefore, the number of \( \sse \) and \( \eee \) edges traversed is equal.
    Hence, \( \Var_k(\actsactive) = 0 \) and \( \Var_k(\lp{p}) = 0 \) for all \( p \).

    Moreover, \( \Var_k(\vp{p}) \) equals \( 1 \) for all \( p \) in the
    final state  \( \propstate_k \) of the valid state sequence (see \Cref{def:valid_state_sequence}).
    Since \( \goalstate \subseteq \propstate_k \), we can deduce
    \( \Var_k(\vp{p}) = 1 \) for all \( p \in \goalstate \). 
    Hence, the conditions of \( \goaledge \) are satisfied.
\end{proof}

Finally, we can combine these lemmas to prove our main theorem.

\begin{theorem}\label{thm:main_theorem}
    If a plan is valid and it has no self overlap ($\valf(\pi)$ and $\nsof(\pi)$) then then the assertion \( \auto \vDash EF(loc(\auto_M) = \goalloc) \) holds.
\end{theorem}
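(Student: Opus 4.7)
The plan is to prove \Cref{thm:main_theorem} by composing the three preceding lemmas to construct a single run from the initial configuration $\bLvc{_I}$ to a configuration $\bLvc{_G}$ in which $\autonet_M$ occupies $\goalloc$. First, I would invoke \Cref{lem:init_trans_possible} on the hypotheses $\valf(\pi)$ and $\nsof(\pi)$ to obtain a finite run $[\bLvc{_I}, \dots, \bLvc{_0}]$ together with a configuration $\bLvc{_0}$ satisfying $\encaft(\bLvc{_I}, 0, \bLvc{_0})$. Observe that along this sub-run the main automaton $\autonet_M$ traverses $\initedge$ exactly once and then resides in $\planningloc$, while each action automaton $\autonet_{a_i}$ remains at $\olc_{a_i}$, which matches \Cref{def:enc_states}.

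Next, I would apply \Cref{lem:plan_trans_possible} with this $\bLvc{_0}$ to obtain a configuration $\bLvc{_k}$ and a run $[\bLvc{_0}, \dots, \bLvc{_k}]$ satisfying $\run([\bLvc{_0}, \dots, \bLvc{_k}])$ and $\encaft(\bLvc{_I}, k, \bLvc{_k})$, where $k$ is the length of $\happTimePoints(\pi)$. Since none of the action-automaton transitions $\sse_a, \see_a, \ese_a, \eee_a, \inste_a$ modify the location of $\autonet_M$, the main automaton still records $\planningloc$ in $\Loc_k$. This fact, together with $\encaft(\bLvc{_I}, k, \bLvc{_k})$, discharges the assumptions of \Cref{lem:goal_trans_possible}, which then yields a run $[\bLvc{_k}, \dots, \bLvc{_G}]$ whose final configuration records $\goalloc$ for $\autonet_M$.

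Finally, I would concatenate these three runs to form $[\bLvc{_I}, \dots, \bLvc{_0}, \dots, \bLvc{_k}, \dots, \bLvc{_G}]$. Runs compose under $\longrightarrow^*$ (see \Cref{def:run_path} and the transitive-closure remark in \Cref{def:nta_semantics}), so the resulting sequence is itself a valid run starting at $\bLvc{_I}$. Because its final configuration $\bLvc{_G}$ satisfies $\loc(\autonet_M) = \goalloc$, the TCTL property $EF(\loc(\autonet_M) = \goalloc)$ holds of the network $\auto$, which is precisely the conclusion of \Cref{enc:goal}.

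I expect the composition itself to be essentially book-keeping, with the only genuinely delicate point being the propagation of side-conditions between the lemmas: concretely, that the hypotheses of \Cref{lem:plan_trans_possible} and \Cref{lem:goal_trans_possible} about validity and no self-overlap of $\pi$ are still available (they are, since they were assumed globally), and that the invariant ``$\autonet_M$ is at $\planningloc$'' is preserved by \Cref{lem:plan_trans_possible}. The latter is the main obstacle, because \Cref{lem:plan_trans_possible} as stated only maintains $\encaft$, which constrains the action automata but not $\autonet_M$; I would either strengthen the lemma statement to include this invariant or establish it as a separate one-line observation, noting that every transition constructed in that lemma's proof belongs to some $\auto_{a_i}$ and therefore leaves the $\autonet_M$ component of $\Loc$ untouched.
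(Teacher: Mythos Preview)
Your proposal is correct and follows essentially the same approach as the paper: fix $k$ to be the length of $htps(\pi)$, instantiate \Cref{lem:init_trans_possible}, \Cref{lem:plan_trans_possible}, and \Cref{lem:goal_trans_possible} in that order, and concatenate the resulting runs into a single run from $\bLvc{_I}$ reaching a configuration whose main-automaton location is $\goalloc$. Your additional remark about propagating the ``$\autonet_M$ is at $\planningloc$'' invariant through \Cref{lem:plan_trans_possible} is a genuine side-condition that the paper's terse proof leaves implicit, and your suggested fix (observing that every transition constructed there belongs to some $\auto_{a_i}$ and hence leaves the $\autonet_M$ component untouched) is exactly right.
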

\begin{proof}
    It is sufficient to prove the existence of a run \( \cfgs \) and configuration \( \bLvc{_G} \) of the automata network, s.t.:
    \begin{enumerate*}[itemjoin={,\ }, itemjoin*={, and\ }]
        \item \( \cfgs \) is a valid run: \( \run(\cfgs) \)
        \item \( \cfgs \) contains \( \bLvc{_G} \): \( \cfgs \equiv [\bLvc{_I}, \dots, \bLvc{_G}, \dots] \)
        \item \( \Loc_G \) records the main automaton's location as \( \goalloc \): i.e. \( \Loc_G \equiv [\goalloc, \dots] \).
    \end{enumerate*}
    
    The proof is completed by fixing \( k \) as the length of \( htps(\pi) \) in \Cref{lem:plan_trans_possible} and combining the run obtained from this fact with those from \Cref{lem:init_trans_possible,lem:goal_trans_possible}.
\end{proof}
\section{Discussion}
\todo{Discussion should be maximum one column by cutting related work. Mohammad will do that}

Our work contains an encoding of temporal planning problems as networks of timed automata.
The networks that are output are proven to be able to simulate every valid plan in the planning problem.
To prove this result, we use Isabelle/HOL, which assures a high degree of trust in the correctness of this proof.

We bring an idea that was only used before by 
\citet{giganteDecidabilityComplexity2022}
in theoretical contexts for complexity results and 
\citet{aaai2022} for formal semantics, and use it here--%
by implementing it using an existing model checker and adapting another more restrictive encoding--%
to create an encoding allowing us to verifiably certify the absence of a more general notion of a plan. 

\paragraph{Related Work}

Classical planning, which does not include continuous time, has been studied with respect to unsolvability.
Planners competed in the unsolvability IPC \cite{muiseUnplannabilityIPCTrack2015} in the detection of unsolvable planning problems.
Some are existing planners were extended to detect unsolvability \cite{hoffmannDistanceWhoCares2014,seippFastDownwardAidos2020}.
The property-directed reachability (PDR) algorithm,
which originates from model checkers can also be used to detect unsolvable problems \cite{sudaPropertyDirectedReachability2014}.

Unlike for classical planning,
there is a gap in research on unsolvability and certification of unsolvability for temporal planning.
A temporal planning problem can be detected as unsolvable by any sound and complete planner,
like a classical planning problem~\cite{muiseUnplannabilityIPCTrack2015}.
\citet{panjkovicDecidingUnsolvabilityTemporal2022} implement an extension to the Tamer planner to detect unsolvable problems using model-checking techniques as well as an encoding of temporal planning problems to a format used by the nuXmv model checker \cite{cimattiExtendingNuXmvTimed2019}.
This was not the first application of model checking techniques to temporal planning,
but to our knowledge, the first implementation with a focus on unsolvability.
\citet{DBLP:conf/ecai/MarzalSO08} defines a technique to detect unsolvable PDDL3~\cite{DBLP:journals/ai/GereviniHLSD09/corrected} temporal planning problems.

The Planning Domain Definition Language (PDDL)~\cite{PDDL} provides concrete syntax to express planning problems,
including classical,
temporal~\cite{fox2003pddl2} and hybrid~\cite{foxModellingMixedDiscreteContinuous2006},
and variations thereof.

Temporal planning in general has been demonstrated to be undecidable by \citet{giganteDecidabilityComplexity2022}.
The same work also proves that temporal planning under some restrictions is decidable,
using a theoretical encoding of temporal planning problems as timed automata.
The decidability of the emptiness and reachability problems for classes of timed automata are known~\cite{bouyerUpdatableTimedAutomata2004,timedautomata}.

\subsection{Planning and Timed Automata}
\citet{temporalPlanningTimedAutomata2002} implement a static reduction from a subset of PDDL that includes temporal actions to timed automata for model checking with UPPAAL \cite{behrmannUPPAALImplementationSecrets2002,bengtssonUPPAALaToolSuite1996}.
\emph{Planning as model checking} is a known approach to solving temporal planning problems \citet{liPlanningModelChecking2012} and has real-world applications \citet{stohrAutomatedCompositionTimed2012}.
One explicit encoding of temporal planning problems as timed automata to apply model checking techniques is defined by \citet{temporalPlanningRefinementMC}.

\subsection{Certification}
Certification~\cite{mcconnellCertifyingAlgorithms2011} is a method to ensure that an algorithm has executed correctly.
A certifying algorithm outputs a result and a certificate.
The certificate provides another algorithm with information to check the correctness of the result w.r.t. the input.
A certifying algorithm can be more trustworthy than other comparably optimised algorithms, because both the certificate checker and the algorithm must contain a bug for an incorrect result to be undetected.

\citet{eriksonSATPlanCert} and \citet{eriksonSATPlanCert} unsolvability certificates for classical planning, making it easier to retroactively check that a claim of unsolvability is correct.
However, no such technique exists for temporal planning.
Verfied certification techniques and certificate checkers exist for timed automata \cite{wimmerCertTimedAutomata}.

\subsection{Planning as Model Checking}
To implement and verify a reduction from temporal planning to timed automata, 
we verify each encoding step individually and compose these.
This approach was used for compiler verification by \citet{DBLP:conf/popl/KumarMNO14, leroy2009formal}. 

A project that is more closely related to this one and uses the same methodology is \cite{verifiedSATPlan}.
\citet{verifiedSATPlan} implement a verified reduction from SAS\textsuperscript{+} planning to a boolean formula,
which evaluates to true iff the planning problem has a solution.
In fact, \cite{verifiedSATPlan} builds on the same formalisation of temporal planning by \citet{aaai2022} as this project.

This paper covers the second step of the entire process, 
namely the encoding of an abstract formulation of temporal planning into a network of timed automata.

Other work exists, which encodes constraints in temporal planning as timed automata~\cite{khatibMappingTemporalPlanning2001},
uses PDDL as specification language for models to be checked~\cite{edelkampLimitsPossibilitiesBDDs2008},
and uses timed automata to model individual planning problems~\cite{largouetTemporalPlanningExtended2016}.

\subsection{Size}
\donedavid{!!!A very important thing here is to compare the size with other verified planning software!!!}

\noindent
\begin{tabularx}{\linewidth}{| X | l |}
    \hline
    \multicolumn{2}{|c|}{Size of formalisation} \\
    \hline
    \multicolumn{1}{|c|}{Component} 
    & \multicolumn{1}{c|}{Lines of Code} \\
    \hline
    Formalisation of abstract temporal planning semantics 
    & $\sim7200$ \\
    Definition of reduction using formalisation of networks of timed automata from Munta~\cite{wimmerMuntaVerifiedModel2019} 
    & $\sim800$ \\
    Proofs of \Cref{thm:main_theorem} including lemmas, conditions and rules 
    & $\sim8500$ \\
    List-related lemmas
    & $\sim1500$ \\
    \hline
    Current total 
    & $\sim18000$ \\
    \hline
\end{tabularx}

\noindent
\begin{tabularx}{\linewidth}{| X | l |}
    \hline
    \multicolumn{2}{|c|}{Size of related formalisations} \\
    \hline
    \multicolumn{1}{|c|}{Formalisation} 
    & \multicolumn{1}{c|}{Lines of Code} \\
    \hline
    Verified SAT-based planner~\cite{verifiedSATPlan}
    & $\sim17500$ \\
    Verified classical grounding algorithm~\cite{vollathVerifiedGroundingPDDL2025}
    & $\sim8000$ \\
    Formal semantics and a verified plan validator for temporal PDDL~\cite{aaai2022}
    & $\sim6500$ \\
    Formal semantics and a verified plan validator for classical PDDL~\cite{ictai2018}
    & $\sim3000$ \\
    \hline
    Formal semantics of Markov Decision Processes (MDPs), a verified solver for MDPs and certificate checker for linear programming solutions~\cite{schaffelerFormallyVerifiedSolution2023}
    & $\sim35000$ \\
    Verified MDP algorithms~\cite{MDP-Algorithms-AFP}
    & $\sim10000$ \\
    Formalised MDP theory~\cite{MDP-Rewards-AFP}
    & $\sim5000$ \\
    \hline
\end{tabularx}

\paragraph{Future Work} 
We presented here a verified reduction from temporal planning to timed automata model checking.
Our reduction currently takes grounded temporal planning problems as input.
The semantics of those ground problems closely correspond to a subset of those supported by the verified plan validator by~\citet{aaai2022}.

We will prove this equivalence formally.
The first step is making our certification mechanism executable.
This will involve two things.
First, at an engineering level, we will need to prove an executable certificate checker compatible with the current certification mechanism.

Second, for full guarantees, we will need to formally verify a grounder, which will not be straightforward if we are to use a scalable grounding algorithm,
e.g.\ the one by \citet{Helmert06},
as it will involve verifying (or certifying the output of) a datalog solver.

We conjecture that the non-self-overlap condition can be weakened by excluding overlaps of the open intervals \( (t_i, t_i+d_i) \) and
\( (t_j, t_j+d_j) \) rather than the closed intervals \( [t_i, t_i+d_i] \) and
\( [t_j, t_j+d_j] \).

Furthermore, we conjecture that another edge,
like \( \autonet_a.\inste \) but transitioning from the ending state to the starting state, would allow the automaton to exactly simulate this transition.
As a result of the mutex condition (see \Cref{def:valid_state_sequence}),
applying copies of \( a_\vdash \) or \( a_\dashv \) multiple times,
when they do not interfere with one another,
is equivalent to applying them once.
\bibliography{long_paper,paper}
\appendix
\section{Detailed proofs}
\todo{Edit. DRY}
\subsection{Definitions}\label{app:app_defs}

The following predicates characterise conditions that a configuration must satisfy to encode a time point \( t_i \) and state \( M_i \) in a temporal plan and it state sequence.
\footnote{\( |S| \) denotes the cardinality of the set \( S \). 
\( \mathit{GREATEST}\ x.\ P(x) \) denotes the largest element of a type for which the predicate \( P \) holds, if such an element exists.}
\todo{Remove \(\land\)}
\begin{gather*}
    \enclbef(i, L) \equiv \\
    (\forall n.\ active\_before(t_i, a_n) \longrightarrow L_{n+1} = \mathcal{A}_{a_n}.\rlc)\\
    \land\ (\forall n.\ \lnot active\_before(t_i, a_n) \longrightarrow L_{n+1} = \mathcal{A}_{a_n}.\olc)
\end{gather*}
\begin{gather*}
    \encpbef(i, \Var) \equiv \\
    (\forall n.\ p_n \in M_i \longrightarrow \Var(\vp{p}) = 1)\\
    \land\ (\forall n. p_n \notin M_i \longrightarrow \Var(\vp{p}) = 0)
\end{gather*}
\begin{gather*}
    \encplbef(i, \Var) \equiv \\
    (\forall n.\ \Var(\lp{p_n}) = |\{ a\ |\ active\_before(t_i, a_m) \\
     \land\ p_n \in over\_all(a_m)\}|)
\end{gather*}
\begin{gather*}
    \encaabef(i, \Var) \equiv \\
    \Var(\actsactive) = |\{ a\ |\ active\_before(t_i, a) \}|
\end{gather*}
\begin{gather*}
    \enccbef(i, \Clk) \equiv \\
    (\forall n.\ \Clk(ca_{n\vdash}) = (\mathit{GREATEST}\ t.\\ 
     t < t_i \land \langle t, a_{n\vdash} \rangle \in \tss{\pi})) \\
     \land (\forall n.\ \Clk(ca_{n\dashv}) = (\mathit{GREATEST}\ t.\\ 
     t < t_i \land \langle t, a_{n\dashv} \rangle \in \tss{\pi}))
\end{gather*}

\begin{definition}[Encoded states]\label{def:enc_states_long}
    We define a predicate \( \encbef(\bLvc{_I}, i, \bLvc{}) \), which is satisfied if the locations, variable assignment and clock valuation characterise a state at a timepoint \( t_i \) before any snap actions' effects have been applied.

    First, we characterise whether a plan action \( (a, s, d) \in \pi \) is running before or after a time point \( t \):\\
    \( \runbef(a,t_i) \) is true if there exists \( s \) and \( d \) s.t. \( (a, s, d) \in \pi \) and \( s < t_i \le s + d \)\\
    \( \runaft(a,t_i) \) is true if there exists \( s \) and \( d \) s.t. \( (a, s, d) \in \pi \) and \( s \le t_i < s + d \)
    
    Then, we characterise the duration since a snap action has executed w.r.t. \( t_i \):\\
    \( \timesincebef(h,t_i) \) returns \( t_i - s\), where \( s \) is the greatest time
    s.t. \( \langle s, h \rangle \in \tss{\pi} \) and \( s < t_i \)\\ 
    \( \timesinceaft(h,t_i) \) replaces \( s < t_i \) with \( s \le t_i \)
    \todo{Which format is preferred for pairs of definitions? - David}
    
    \( \encbef(\bLvc{_I}, i, \bLvc{}) \) is satisfied if the following hold.\\
    For any action \( a_j \in \actions \):
    \begin{enumerate*}[itemjoin={,\ }, itemjoin*={, and\ }]
        \item The location \( \Loc_{j+1} \) of \( \mathcal{A}_{a_j} \) is \( \olc_{a_j} \) if \( \lnot \runaft(a_j, t_i) \) and \( \Loc_{j+1} \) is \( \rlc_{a_j} \) if \( \runaft(a_j, t_i) \)
        \item \( \Clk (ca_{j\vdash}) = \timesincebef(ca_{j\vdash},t_i) \)
        \item \( \Clk (ca_{j\dashv}) = \timesincebef(ca_{j\dashv},t_i) \)
        \item If a snap action \( h \) has not executed, then its associated clock \( ch \) evaluates to \( d \), where \( 0 < d \).
    \end{enumerate*}
    For any proposition \( p_k \in \propositions \):
    \begin{enumerate*}[itemjoin={,\ }, itemjoin*={, and\ }]
        \item \( \Var (\vp{p_k}) \) evaluates to \( 1 \) if \( p_k \in \propstate_i \) and \( 0 \) if \( p_k \notin \propstate_i \)
        \item \( \Var (\lp{p_k}) \) evaluates to \( |\{a\ |\ p \in \ova(a) \land \runbef(a, t_i) \}| \).
    \end{enumerate*}
    
    We also define an equivalent predicate for the state after the effects at \( t_i \) have been applied,
    \( \encaft(\bLvc{_I}, i, \bLvc{}) \).\\
    For any action \( a_j \in \actions \):
    \begin{enumerate*}[itemjoin={,\ }, itemjoin*={, and\ }]
        \item The location \( \Loc_{j+1} \) of \( \mathcal{A}_{a_j} \) is \( \olc_{a_j} \) if \( \lnot \runaft(a_j, t_i) \) and \( \Loc_{j+1} \) is \( \rlc_{a_j} \) if \( \runaft(a_j, t_i) \)
        \item \( \Clk (ca_{j\vdash}) = \timesinceaft(ca_{j\vdash},t_i) \)
        \item \( \Clk (ca_{j\dashv}) = \timesinceaft(ca_{j\dashv},t_i) \)
        \item if a snap action \( h \) has not executed, then its associated clock \( ch \) evaluates to an arbitrary constant \( d \), where \( 0 < d \)
    \end{enumerate*}
    For any proposition \( p_k \in \propositions \):
    \begin{enumerate*}[itemjoin={,\ }, itemjoin*={, and\ }]
        \item \( \Var (\vp{p_k}) \) evaluates to \( 1 \) if \( p_k \in \propstate_{i+1} \) and \( 0 \) if \( p_k \notin \propstate_{i+1} \)
        \item \( \Var (\lp{p_k}) \) evaluates to \( |\{a\ |\ p \in \ova(a) \land \runaft(a, t_i) \}| \).
    \end{enumerate*}

    The case of an empty induced happening sequence must be considered (see \Cref{def:timed_snap_actions}).
    Since no \( t_i \) is defined in this case,
    we define both \( \encbef(\bLvc{_I}, i, \bLvc{}) \) and \( \encaft(\bLvc{_I}, i, \bLvc{}) \) to hold under the following conditions.\\
    For any action \( a_j \in \actions \):
    \begin{enumerate*}[itemjoin={,\ }, itemjoin*={, and\ }]
        \item \( \Loc_{j+1} \) records \( \mathcal{A}_{a_j} \)'s location as \( \mathcal{A}_{a_j}.\olc \)
        \item \( \Clk (ca_{j\vdash}) \) and \( \Clk (ca_{j\dashv}) \) evaluate to an arbitrary duration \( d \) greater than \( 0 \).
    \end{enumerate*}
    For any proposition \( p_k \in \propositions \):
    \begin{enumerate*}[itemjoin={,\ }, itemjoin*={, and\ }]
        \item \( \Var (\vp{p_k}) \) evaluates to \( 1 \) if \( p_k \in \initstate \) and \( 0 \) if \( p_k \notin \initstate \)
        \item \( \Var (\lp{p_k}) \) evaluates to \( 0 \).
    \end{enumerate*}
\end{definition}

\subsection{Proof of \Cref{lem:plan_trans_possible}}
We must construct a sequence of configurations \( [q_{i;0}, \dots, q_{i;n}] \), that are traversed by a run simulating \( t_i \), where \( q_{i;0} = \bLvc{_i} \), \( q_{i;n} = \bLvc{'_i} \) for every \( t_i \). Moreover, if \( 0 < i \), then \( q_{i;0} \) is \( q_{i-1;n} \) following a \( \delta \) transition, where \( \delta = t_i - t_{i-1} \): \( \bLvc{'_{i-1}} \longrightarrow_\delta \bLvc{_i} \).

Assume that an action is either inactive at \( t_i \) and its location is \( \olc \) according to \( L \) or active and its location \( \rlc \), as otherwise time cannot pass.

We identify \( q_{i;es} \) as a configuration, such that \( [q_{i;es}, \dots, q_{i;es+en} ] \) is a run obtained by applying all \( \ese \) transitions, \( q_{i;es} = q_{i;0} \), and \( en \) is the number of ending actions. We place these transitions before all other transitions.

Since \( \see \) edges increment \( \lp{p} \) for some \( p \), 
they should occur after other edges for starting actions.
We identify the configuration \( q_{i;se} \), such that \( [q_{i;se}, \dots, q_{i;se+sn} ] \) is a run obtained by applying all \( \see \) transitions of starting actions, \( q_{i;se+sn} = q_{i;n} \).

We have the  \( [q_{i;0},\dots,q_{i;en}] \) and we must connect \( q_{i;en} \) and \( q_{i;n} \) with the remaining transitions.

We have two runs \( [q_{i;0},\dots,q_{i;en}] \) and \( [q_{i;se},\dots,q_{i;n}] \), which now must be connected.
The missing part of the run consists of \( \sse \) transitions for starting actions, \( \eee \) transitions for ending actions, and \( \sse \), \( \inste \) and \( \eee \) transitions for instant actions.

In the order we define, the transitions for every individual instant action are taken consecutively.
Assume \( q_{i;inst} \) is a configuration, such that \( [q_{i;inst}, \dots, q_{i;inst+in} ] \) is a run such that, \( q_{i;inst} = q_{i;es} \) and \( in \) is the number of instant actions.
Every occurrence of an instant action \( a_j \) at \( t_i \) effects three transitions, where \( 0 \le j < sn \):
\begin{align*}
(q_{i;inst+j}) &\longrightarrow_{\mathcal{A}_{a_j}.\sse} (q_{i;inst+j}') \\
    &\longrightarrow_{\mathcal{A}_{a_j}.\inste} (q_{i;inst+j}'') \longrightarrow_{\mathcal{A}_{a_j}.\eee} (q_{i;inst+j+1})
\end{align*}

We place these transitions after the \( \ese \) transitions of \( [q_{i;0},\dots,q_{i;en}] \) by assuming \( q_{i;en} = q_{i;inst} \).

Finally, we place the \( \sse \) and \( \eee \) transitions between \( q_{i;inst+in} \) and \( q_{i;se} \).
From the \( \eee \) transitions, we obtain the run \( [q_{i;ee}, \dots, q_{i;ee+en}] \) where \( q_{i;ee} = q_{i;inst+in} \).
We then obtain the run \( [q_{i;ss}, \dots, q_{i;ss+sn}] \) from \( \sse \) transitions, noting that \( q_{i;ss} = q_{i;ee+en} \)
and \( q_{i;ss+sn} = q_{i;se} \).

We thus obtain the list of configurations: \( [q_{i;0},\dots,q_{i;inst},\dots,q_{i;ee},\dots,q_{i;ss},\dots,q_{i;se},\dots,q_{i;n}]\)

The details of the individual parts of runs follow.

We first construct the part \( [q_{i;0},\ldots,q_{i;inst}] \), where \( q_{i;inst} = q_{i;en} \), the state reached by simulating all \( \mathcal{A}_a.\ese \) for all ending actions \( a \).

The preconditions for each of the transitions must be satisfied.
\( c \) records the time \( t_h \) for every interfering snap-action \( h \) since it has previously executed and in a valid plan \( \epsilon \le t_h \) and \( 0 < t_h \). It also precisely records the time \( t_{a_\vdash} \) since \( a_\vdash \) has executed.

The transitions reset all scheduled \( ca_\dashv \) clocks. This does not interact with the arguments, because \( ca_\vdash \) is not \( ca_\dashv \), and if some \( a_\vdash \) interferes with another \( b_\vdash \), then they cannot both have been scheduled at \( t_i \) according to a valid plan.

The transitions also decrement every \( \lp{p} \) by the number of actions that contain \( p \) as invariant and are scheduled to end at \( t_i \).

The proof is constructed by assigning all ending actions \( a_j \) at \( t_i \) a unique index \( j \), where \( 0 \le j < en \) and induction over \( j \) as natural number. The base case needs to distinguish between \( 0 = en \), which is trivially true and \( 0 < en \).

Continuing from the previous part, \( q_{i;inst} \) can be reached from \( q_{i;0} \) and records the locations of ending actions as \( \elc \), and starting and instant actions as \( \olc \). \footnote{Instant actions being in \( \olc \) is a result of the weak definition of no self overlap we have used. If a stronger version had been used, then an action can start as soon as it ends, and we would need a different classification of actions.}

The clocks which have been reset do not belong to snap actions that interfere with other scheduled snap actions, due to the semantics of a valid plan.

Every \( p \) that is deleted by some snap action \( h \) scheduled at \( t_i \) must have \( \lp{p} = 0 \), according to plan semantics.

\( \lp{p} \) was either \( 0 \) before \( t_i \) or has been decremented to \( 0 \) by the transitions in part 1.

\( [q_{i;inst},\dots,q_{i;ee},\dots,q_{i;ss},\dots,q_{i;se},\dots,q_{i;n}] \)

\( \mathcal{A}_a.\sse \), \( \mathcal{A}_a.\inste \) and \( \mathcal{A}_a.\eee \) must be applied for every instant action \( a \).

\( \sse \) checks that the precondition of \( a_\vdash \) hold w.r.t. \( v \).
This is true, because no interfering snap actions are scheduled at \( t_i \) and non-interference between two snap actions \(a\) and \(b\) means that the effects of \(b\) do not intersect with the conditions of \(a\). Therefore, if \(a\) and \(b\) are scheduled at \(t_i\) of a valid plan, then \(M_i\) satisfies \(pre(a)\) even after being updated with the effects of \( b \).

Other conditions of \( \sse \) hold because of the previous assertions about some \( \lp{p} \) and clocks.

\( \sse \) increments \( \actsactive \) and resets \( ca_\vdash \)

The conditions on clocks of \( \inste \) must be satisfied for the same reason as before, namely no clock is reset, that belongs to a snap action that interferes with another scheduled snap action.

\( \inste \) resets \( ca_\dashv \).

\( \eee \) applies the effects of \( a_\dashv \). The conditions on the invariants and clocks are satisfied from the same reasons as before.
Moreover, it decrements \( \actsactive \), returning it to its value before \( \sse \).

This proof is also completed by assigning all instant actions \( a_j \) at \( t_i \) a unique index \( j \) and performing an induction over \( j \) where \( 0 \le j < instn \).

We must now construct \( [q_{i;ee},\dots,q_{i;ss},\dots,q_{i;se},\dots,q_{i;n}] \).
For the same reason as in (part 2) we can apply the effects of \( \eee \).

The effects decrement \( \actsactive \) by the number of actions that are deactivated at \( t_i \).

This proof is completed by an induction over \( j \) for \( a_j \) at \( t_i \), where \( 0 \le j < en \).

We must now construct \( [q_{i;ss},\dots,q_{i;se},\dots,q_{i;n}] \).
For the same reason as in (part 2) we can apply the effects of \( \sse \).

These transitions increment \( \actsactive \) by the number of starting actions and reset \(ca_\vdash\) clocks.

This proof is completed by an induction over \( j \) for \( a_j \) at \( t_i \), where \( 0 \le j < sn \).

We must now construct \( [q_{i;se},\dots,q_{i;n}] \).

All effects at \( t_i \) have been applied and hence the propositional state is that just after \( t_i \).
This state must satisfy invariants of all \( a \), which have started at \( t_i \) according to plan semantics.
Hence, the conditions of \( \sse \) are satisfied.

One of these exists for each \( \sse \).

\( \see \) also increments \( \lp{p} \) for every \( p \) by the number of additional actions that require \( p \) to be invariant after \( t_i \).

This proof is completed by an induction over \( j \) for \( a_j \) at \( t_i \), where \( 0 \le j < sn \).

\section{Details of Formalisation}
\subsubsection{Mathematical Induction}

To obtain a goal of the form \( esc(i, j, q_{i;j}) \rightarrow esc(i, j+1, q_{i;j+1}) \), we reason about lists. Instead of structural induction over the constructors of lists, we instead induce over indexes of elements, i.e. natural numbers.

For the formalisation we separate the preconditions of these transitions from their effects. We define lists of effects as lists of functions.

\Cref{lst:list_funs} defines two functions. \lstinline|seq_apply| applies a list of functions beginning at an initial element. \lstinline|ext_seq| takes a function that creates a list from an initial element and a list. It applies the function to the final element of the list and appends the result. \lstinline|((ext_seq $\circ$ seq_apply) fs)| can be composed with itself using arbitrary \lstinline|fs| arguments to build longer lists.

\begin{lstlisting}[language=isabelle, caption={Lists and Functions}, label={lst:list_funs}]
definition ext_seq f xs $\equiv$
    xs @ f (last xs)
definition seq_apply fs x $\equiv$
    map ($\lambda$i. (fold (id) (take i fs) x))
        [1..<length fs + 1]
\end{lstlisting}

We also use standard definitions such as \lstinline|fold|, \lstinline|take| and \lstinline|nth|.
\lstinline|take| returns the first \( n \) elements of a list.
\lstinline|nth| denoted as \lstinline|!| obtains the \( n^{th} \) element by index.

Consider a list of functions \( fs = [f_0, \dots, f_{n-1}] \).
Consider the list \( xs = [x_1, \dots, x_n ] \) of states obtained through: \lstinline|seq_apply $fs$ $x_0$|.
It is apparent \lstinline[language=isabelle]|$x_i$ = last (seq_apply (take $i$ $fs$) $x_0$)| for \( i < n \)
and \lstinline[language=isabelle]|$x_{i+1}$ = last (seq_apply (take ($i + 1$) $fs$) $x_0$)|.

The \( (n+1)^{th} \) element is always obtained by applying the \( n^{th} \) function to the \( n^{th} \) element.
We define the induction-like inference rule in \Cref{lst:pre_post_induct}, which allows us to prove properties of individual states and the beginning and ends of lists obtained from \lstinline|seq_apply|.
\( R \) and \( S \) are properties independent of the index.
When \lstinline|x| is replaced with \lstinline|last ys'| for some \lstinline|ys' = (ext_seq $\circ$ seq_apply fs) ys|, then we need to know some \lstinline|R (last ys')|, which can be proven with the same lemma.
\lstinline|S| must be instantiated with \lstinline|R| and \lstinline|R| must be instantiated with some \lstinline|R'|.
\lstinline|R (last (y # seq_apply fs ys))| for some \lstinline|y| is proven under the assumption of \lstinline|R' y|.
Thus, we compose proofs.

\begin{lstlisting}[language=isabelle, caption={Final state lemma}, label={lst:pre_post_induct}]
lemma seq_apply_pre_post:
  assumes PQ: ...
  and QP: ...
  and Rx: R x
  and len0: $\forall$x. 0 = length fs
    $\longrightarrow$ ( R x $\longrightarrow$ S x)
  and RP0: $\forall$x. 0 < length fs
    $\longrightarrow$ ( R x $\longrightarrow$ P 0 x)
  and QSl: $\forall$x. 0 < length fs
    $\longrightarrow$ ( Q ( length fs - 1) x $\longrightarrow$ S x)
    shows
    S ( last ( x # seq_apply fs x))
\end{lstlisting}

For the induction-like proof of \Cref{lem:plan_trans_possible}, we need to assert that the entire list is a run.
This is proven using \Cref{lst:list_pred_lemma}, by instantiating \lstinline|LP| with the \( \mathit{run} \) predicate.
When a state satisfies \lstinline|P i| and its successor satisfies \lstinline|Q i|, then they are run consisting of one step.
A run ending with a configuration \lstinline|last xs| can be prepended to a run beginning at \lstinline|last xs|.

\begin{lstlisting}[language=isabelle, caption={List predicate lemma}, label={lst:list_pred_lemma}]
lemma ext_seq_comp_seq_apply_list_prop:
  assumes PQ: ...
  and QP: ...
  and R (last xs)
  and RP0: ...
  and LPi: $\forall$i s. i < length fs
    $\longrightarrow$ P i s $\longrightarrow$ Q i ( ( fs ! i) s)
    $\longrightarrow$ LP [s, ( fs ! i) s]
  and base: $\forall$x. LP [x]
  and step: $\forall$xs ys. LP xs
    $\longrightarrow$ LP ( last xs # ys)
    $\longrightarrow$ LP ( xs @ ys)
  and LPxs: LP xs
  shows LP ( ( ext_seq $\circ$ seq_apply) fs xs)
\end{lstlisting}

With rules analogous to \Cref{lst:list_pred_lemma,lst:pre_post_induct}, we prove \Cref{thm:main_theorem,lem:goal_trans_possible,lem:init_trans_possible,lem:main_trans_possible,lem:plan_trans_possible}.
\subsection{Examples of Rules Used With Automation}

We begin with an example of an \emph{introduction} rule for \( \escond(i, j, Lvc) \) (abbreviated as \( esc(i, j, Lvc) \)):

\begin{prooftree}
    \def\fCenter{\ \vdash\ }
    \AxiomC{$A'$}
    \AxiomC{$ B \fCenter B' $}
    \AxiomC{$ C \fCenter C' $}
    \AxiomC{$D'$}
    \QuaternaryInfC{$esc(i, j, Lvc)$}
\end{prooftree}

The letters are placeholders for:
\begin{itemize}
    \item \(A' = \esinvs(i, Lvc)\)
    \item \(B = ( ea_k \in ending(t_i), k < j ) \) and \( B' = \Psi_i(Lvc, ea_k) \)
    \item \(C = ( ea_k \in ending(t_i), j \le k ) \) and \( C' = \Phi_i(Lvc, ea_k) \)
    \item \(D' = V_i(Lvc, j) \)
\end{itemize}

It transforms the goal as follows when applied with the \emph{introduction} method, which is a low-level \emph{tactic proof method}\cite{wenzelIsabelleIsarReference2001}.

\begin{prooftree}
    \def\defaultHypSeparation{\hskip 2.5mm}
    \def\fCenter{\ \vdash\ }
    \AxiomC{$ \Gamma \fCenter A$}
    \AxiomC{$ \Gamma, B \fCenter B' $}
    \AxiomC{$ \Gamma, C \fCenter C' $}
    \AxiomC{$ \Gamma \fCenter D$}
    \RightLabel{($escI$)}
    \QuaternaryInfC{$\Gamma \fCenter esc(i, j, Lvc)$}
\end{prooftree}

The following is a rule to introduce a fact ($esc\_\Psi I$):

\begin{prooftree}
    \def\fCenter{\ \vdash\ }
    \AxiomC{$esc(i, j, Lvc)$}
    \AxiomC{$ea_k \in ending(t_i)$}
    \AxiomC{$k < j$}
    \TrinaryInfC{$ \Psi_i(Lvc, ea_k)$}
\end{prooftree}

It can be used with the \emph{elimination} method as well, in which case it matches an assumption in addition to the goal:

\begin{prooftree}
    \def\fCenter{\ \vdash\ }
    \AxiomC{$\ldots$}
    \UnaryInfC{$\Gamma \fCenter ea_k \in ending(t_i)$}
    \AxiomC{$\ldots$}
    \UnaryInfC{$\Gamma \fCenter k < j$}
    \RightLabel{($esc\_\Psi I_E$)}
    \BinaryInfC{$\Gamma, esc(i, j, Lvc) \fCenter \Psi_i(Lvc, ea_k)$}
\end{prooftree}

The form of \Cref{lst:end_start_cond} is chosen deliberately to allow referencing by index.

Assume, we want to prove that the conditions on the subsequent state hold and apply the rule \( escI \).

\begin{prooftree}
    \def\fCenter{\ \vdash\ }
    \AxiomC{$\ldots$}
    \RightLabel{($escI$)}
    \UnaryInfC{$j < en, esc(i, j, q_{i;j}) \fCenter esc(i, j+1, q_{i;j+1})$}
\end{prooftree}

One of the subgoals, i.e. one of the nodes of the proof tree, is:

\begin{prooftree}
    \def\fCenter{\ \vdash\ }
    \AxiomC{$\ldots$}
    \UnaryInfC{$\stackanchor{j < en,}{esc(i, j, q_{i;j})}, \stackanchor{ea_k\in ending(t_i),}{k < j + 1} \fCenter \Psi_i(q_{i;j+1}, ea_k)$}
\end{prooftree}

This can further be split into two subgoals, depending on \( k < j \).
We abbreviate \( j < en, esc(i, j, q_{i;j}), ea_k\in ending(t_i)\) as \( \Gamma \).

\begin{prooftree}
    \def\fCenter{\ \vdash\ }
    \AxiomC{$\ldots$}
    \UnaryInfC{$\Gamma, k < j \fCenter \Psi_i(q_{i;j+1}, ea_k)$}
    \noLine
    \UnaryInfC{$\vdots$}
    \AxiomC{$\ldots$}
    \UnaryInfC{$\Gamma, k = j \fCenter \Psi_i(q_{i;j+1}, ea_k)$}
    \noLine
    \UnaryInfC{$\vdots$}
    \BinaryInfC{$\Gamma, k < j + 1\fCenter \Psi_i(q_{i;j+1}, ea_k)$}
\end{prooftree}

We first treat the left node/subgoal.
We begin by showing that \( \Psi_i(q_{i;j}, ea_k) \longrightarrow \Psi_i(q_{i;j+1}, ea_k) \) if \( k < j \).
With a few more rule applications \( \Gamma, k < j  \ \vdash \ \Psi_i(q_{i;j}, ea_k) \) becomes a new subgoal. Since this is exactly \( esc\_\Psi I \), the proof can be completed with the rule and present assumptions.

\begin{prooftree}
    \def\fCenter{\ \vdash\ }
    \AxiomC{$\ldots$}
    \UnaryInfC{$\stackanchor{\Gamma,}{k < j} \fCenter \stackanchor{\Psi_i(q_{i;j}, ea_k) \longrightarrow}{\Psi_i(q_{i;j+1}, ea_k)}$}
    \noLine
    \UnaryInfC{$\vdots$}
    \AxiomC{$\ldots$}\RightLabel{$esc\_\Psi I$}
    \UnaryInfC{$\stackanchor{\Gamma,}{k < j} \fCenter \Psi_i(q_{i;j}, ea_k)$}
    \noLine
    \UnaryInfC{$\vdots$}
    \BinaryInfC{$\Gamma, k < j  \fCenter \Psi_i(q_{i;j+1}, ea_k)$}
\end{prooftree}

The other case must be proven from the updates.
With a few intermediate steps, we obtain the conclusion of the subgoal.

\begin{prooftree}
    \def\fCenter{\ \vdash\ }

    \AxiomC{$\ldots$}
    \UnaryInfC{$\Gamma, k = j \fCenter \Psi_i(q_{i;j+1}, f_{j-1}(a_{j-1}))$}
    \noLine
    \UnaryInfC{$\vdots$}
    \UnaryInfC{$\Gamma, k = j \fCenter \Psi_i(q_{i;j+1}, ea_j)$}
    \noLine
    \UnaryInfC{$\vdots$}
    \UnaryInfC{$\Gamma, k = j \fCenter \Psi_i(q_{i;j+1}, ea_k)$}
\end{prooftree}

While working on this project, we found that rules and goals structured in this manner tend to work with automated proof methods in Isabelle. Mainly the initial case split of \( k < j \) and the proofs of \( \Gamma, k < j \vdash \Psi_i(q_{i;j}, ea_k) \longrightarrow \Psi_i(q_{i;j+1}, ea_k) \) and \( \Gamma, k = j \vdash \Psi_i(q_{i;j+1}, f_{j-1}(a_{j-1})) \) need additional human input. Notice that the latter uses an indexed function applied to an indexed state.

Proving \(j < en, esc(i, j, q_{i;j}) \ \vdash\ esc(i, j+1, q_{i;j+1})\) is technically not sufficient for use with the induction hypothesis in the formalisation. In the formalisation, we apply an induction over the index of all actions. Properties are asserted over the index of all actions. States are not numbered. The formal proof is slightly more complicated: \(j < en,\ esc(i, j, q_{i;j})\ \vdash\ esc'(i, j, q_{i;j}')\) and \(j + 1 < en,\ esc'(i, j, q_{i;j}')\ \vdash\ esc(i, j + 1, q_{i;j+1}) \).

\end{document}